\documentclass[acmsmall, screen, nonacm]{acmart}

\AtBeginDocument{%
  }

\setcopyright{acmcopyright}
\copyrightyear{2018}
\acmYear{2018}
\acmDOI{XXXXXXX.XXXXXXX}

\acmConference[Conference acronym 'XX]{Make sure to enter the correct
  conference title from your rights confirmation emai}{June 03--05,
  2018}{Woodstock, NY}
\acmPrice{15.00}
\acmISBN{978-1-4503-XXXX-X/18/06}



\citestyle{acmnumeric}

\usepackage{multirow}
\usepackage{makecell}

\usepackage{enumitem}
\usepackage{stmaryrd}

\usepackage{mathpartir, proof}
\usepackage{listings}
\usepackage{subcaption}
\definecolor{cadmiumgreen}{rgb}{0.0, 0.42, 0.24}



\newcommand{\carrow}[1]{~\text{$-$}\!{#1}\!\!\!\shortrightarrow} 
\newcommand{\dom}{\textit{dom}}
\newcommand{\Supp}{\mathit{Supp}}
\newcommand{\FL}{$\textit{FL}$}
\newcommand{\SLFL}{$\textit{SL-FL}$}
\newcommand{\SLFLb}{$\textit{SL-FL}_b$}
\newcommand{\SLFLbd}{$\textit{SL-FL}_b^\vee$}
\newcommand{\proj}{\downharpoonright}

\renewcommand{\star}{\ast}

\newcommand{\Sp}{\mathit{Sp}}
\newcommand{\Cl}[1]{\lbrack#1\rbrack}

\newcommand{\Ll}{\mathcal{L}}
\newcommand{\Cc}{\mathcal{C}}
\newcommand{\Ss}{\mathcal{S}}
\newcommand{\Ff}{\mathcal{F}}
\newcommand{\Dd}{\mathcal{D}}
\newcommand{\Rr}{\mathcal{R}}
\newcommand{\Ii}{\mathcal{I}}

\newcommand{\Loc}{\mathit{Loc}}
\newcommand{\lfp}{\mathit{lfp}}
\newcommand{\nxt}{\mathit{next}}
\newcommand{\key}{\mathit{key}}
\newcommand{\ite}{\mathit{ite}}
\newcommand{\ret}{\mathit{ret}}
\newcommand{\Old}{\mathit{Old}}
\newcommand{\tmp}{\mathit{tmp}}

\newcommand{\nplogic}{L_\mathit{oneway}}
\newcommand{\bfnplogic}{\mathbf{\mathit{L}}_\mathit{\mathbf{oneway}}}

\newcommand{\mypara}[1]{\smallskip\noindent\emph{\textbf{#1}\ }}

\newcommand{\sem}[1]{\llbracket \!#1\! \rrbracket}

\newcommand{\Keys}{\mathit{Keys}}
\newcommand{\lst}{\mathit{List}}
\newcommand{\splst}{\mathit{SpList}}
\newcommand{\nil}{\mathit{nil}}

\newcommand{\weakconj}{\textit{$\wedge\!\!\!\wedge$}}

\newcommand{\opsem}[2]{<#1,#2>}
\newcommand{\redto}{\Downarrow}
\newcommand{\evalto}{\Downarrow_{e}}
\newcommand{\defaultf}{\mathit{default}_{f}}
\newcommand{\assume}{\mathit{assume}}
\newcommand{\alloc}{\mathit{alloc}}
\newcommand{\free}{\mathit{free}}
\newcommand{\iteprog}[3]{\text{if } #1 \text{ then } #2 \text{ else } #3}
\newcommand{\domain}{\mathit{dom}}

\newcommand{\vc}[2]{\text{VC}(#1, #2)}
\newcommand{\Fr}{\mathit{Fr}}
\newcommand{\fr}{\mathit{fr}}
\newcommand{\RD}{\mathit{RD}}

\newcommand{\egprog}{RevPrepend}

\definecolor{dartmouthgreen}{rgb}{0.05, 0.5, 0.06}
\definecolor{indiagreen}{rgb}{0.07, 0.53, 0.03}
\newcommand{\graytext}[1]{\textcolor{indiagreen}{#1}}

\newcommand{\marg}{\mathit{arg}}



\lstset{frame=tb,
  language=Python,
  aboveskip=3mm,
  belowskip=3mm,
  showstringspaces=false,
  columns=flexible,
  basicstyle={\small\ttfamily},
  numbers=none,
  numberstyle=\tiny\color{black},
  keywordstyle=\color{black},
  commentstyle=\color{gray},
  stringstyle=\color{black},
  breaklines=true,
  breakatwhitespace=true,
  tabsize=3
}

\begin{document}

\title{FO-Complete Program Verification for Heap 
 Logics}

\author{Adithya Murali}
\orcid{0000-0002-6311-1467}
\email{adithya5@illinois.edu}
\affiliation{%
  \institution{University of Wisconsin–Madison}
  \city{Madison}
  \state{Wisconsin}
  \country{USA}
}

\author{Hrishikesh Balakrishnan}
\email{hb19@illinois.edu}
\orcid{}
\affiliation{
  \institution{University of Illinois Urbana-Champaign, Department of Computer Science}
  \city{Urbana}
  \state{Illinois}
  \country{USA}
}

\author{Aaron Councilman}
\email{aaronjc4@illinois.edu}
\orcid{}
\affiliation{
  \institution{University of Illinois Urbana-Champaign, Department of Computer Science}
  \city{Urbana}
  \state{Illinois}
  \country{USA}
}

\author{P. Madhusudan}
\email{madhu@illinois.edu}
\orcid{0000-0002-9782-721X}
\affiliation{
  \institution{University of Illinois Urbana-Champaign, Department of Computer Science}
  \city{Urbana}
  \state{Illinois}
  \country{USA}
}

\renewcommand{\shortauthors}{Trovato et al.}

\begin{abstract}
We develop the first two heap logics that have implicit heaplets and that admit  FO-complete program verification. The notion of FO-completeness is a theoretical guarantee that all theorems that are valid when recursive definitions are interpreted as fixpoint definitions (instead of least fixpoint) are guaranteed to be eventually proven by the system. The logics we develop are a frame logic (\FL) and a separation logic (\SLFL) that has an alternate semantics inspired by frame logic. We show verification condition generation for FL that is amenable to FO-complete reasoning using  quantifier instantiation and SMT solvers. We show \SLFL\ can be translated to FL in order to obtain FO-complete reasoning. We implement tools that realize our technique and show the expressiveness of our logics and the efficacy of the verification technique on a suite of benchmarks that manipulate data structures.
\end{abstract}

\maketitle

\section{Introduction}
\label{sec:intro}

Automated verification of programs that destructively manipulate heaps remains a challenging open problem. 
Separation logic~\cite{reynolds02,ohearn04,ohearn01} has emerged as a popular specification logic for expressing properties of structures in heaps. While separation logic is used extensively in interactive theorem proving settings, robust automation of separation logic reasoning has not yet met the same level of success~\cite{vipercav24}. 

One problem for automation of separation logic is that several aspects of it are inherently \emph{second-order}, making it extremely hard to automatically reason with them using SMT solvers (especially SMT solvers on decidable logics). In particular, the \emph{magic wand} ($-\ast$) in separation logic quantifies over \emph{arbitrary heaps} that satisfy a property, which is inherently second-order~\cite{magic-wand-complexity}. 
Separation logic reasoning tools evaluated in competitions (like {\sc SLComp}~\cite{slcomp}) seldom support the magic wand operator and attest to the difficulty of reasoning with it.

Even when the magic wand is avoided (like using incomplete strongest post calculations, e.g.,~\cite{BerdineCalcagnoOHearn2005}), expressing properties such as \emph{separating conjunction}, \emph{conjunction}, and \emph{disjunction} requires quantifying over sets of locations, and when proving implication or entailment, results in formulas with both existential and universal quantification over sets. These are hard to reason with using first-order solvers and, as we show in this paper, results in incompleteness.

The current state-of-the-art for automatic program verification techniques for any expressive heap logic is hence always \emph{incomplete}--- i.e., validity of verification conditions are not just undecidable, but not even recursively enumerable. 
(There are some heap logics that are \emph{decidable}, but these are often heavily restricted in expressiveness~\cite{PiskacWiesZufferey2013, PiskacWiesZufferey2014, PiskacWiesZufferey2014Tool, twbcade13}.). In other words, no complete proof system exists for them and no sound algorithmic procedure can exist that, on all valid verification conditions, halts and declares it valid (even if it's allowed to be non-terminating on invalid ones). 
Systems such as Viper~\cite{vipertool} use several heuristics for magic wand and generate verification conditions that have quantifiers over sets and maps, along with triggers, that are sent to an SMT solver that is inherently incomplete for such logics~\cite{viper-vcgen-technique}. Furthermore, incompleteness manifests in practice as well (see the recent paper~\cite{vipercav24} that studies incompleteness of various algorithms for separation logic using the Viper framework). 

Incompleteness stems from various sources, and we formally identify several sources later in this paper. First, intrinsic second-order quantifications lead to incompleteness, especially quantification over sets. Second, heap logics that are defined using \emph{finite models} for heaps quickly become incomplete, even for first-order logics, due to fundamental incompleteness theorems in finite model theory~\cite{libkin04}. Third, heap logics typically require \emph{recursive definitions} that have a least fixpoint semantics, which causes incompleteness even when models are infinite, as they allow encoding addition and multiplication of integers, and incompleteness follows from G\"odel's incompleteness theorem~\cite{enderton}. Fourth, heap logics for programs need to allow a combined theory of background sorts with models such as linear arithmetic in order to define typical data structures (like a binary search tree), and combination of quantified theories also leads to incompleteness. 

These aspects lead to expressive heap logics being inevitably incomplete. Practical algorithmic techniques for validity for heap logics hence cannot be held to any theoretical standard of effectiveness, such as completeness, and empirical effectiveness on benchmarks are the sole way to judge the efficacy of a technique. Most importantly, we have no theoretical foundation that guarantees that an algorithmic technique will \emph{not} embarrassingly fail to prove a simple verification task, despite being given unbounded resources.

\mypara{The notion of FO-completeness}
In this paper, we pose the audacious question as to whether we can, despite the above challenges, define a notion of completeness that we can hold algorithmic techniques for heap logics up to. The motivation for such a completeness guarantee is that it will yield a pathway to more predictable program verification tools than the current state-of-the-art. We propose \emph{FO-completeness} as a completeness notion for heap logics, inspired by recent results in the literature that show that there are similar domains where FO-complete procedures are both practical and robust.

The single most unavoidable reason for incompleteness of heap logics is the inclusion of recursive definitions. Given a recursive definition of a relation (or function) $f(\overline{x}) =_{\mathit{lfp}} \rho(\overline{x},f)$, $f$ is meant to be interpreted as the \emph{least fixpoint} satisfying the given equation. Such definitions are beyond the power of first-order logics, and moreover render logics incomplete as described above. A simple and practically effective way to bring down this expressive power is to instead interpret recursive definitions as \emph{fixpoint} definitions $f(\overline{x}) = \rho(\overline{x},f)$, where we admit any predicate/function $f$ that satisfies the equation, not just the least fixpoint. 

We say that validity of verification conditions with heap logic specifications is \emph{FO-complete} if they are valid with respect to such a \emph{fixpoint} semantics of recursive definitions in the logic. The name FO-completeness stems from the fact that such equations can be expressed in first-order logic. 
Note that if a theorem is valid with respect to \emph{any} fixpoint that satisfies the recursive definitions, then the theorem is valid with respect to the least fixpoint semantics as well. 

Recent results in the literature have argued that some existing practical, efficient, and robust techniques based on unfolding recursive definitions (``unfold-and-match'' heuristics, "fold/unfold" operations) for program verification are in fact FO-complete. In particular, there is work arguing that the so-called \emph{natural proofs} for certain fragments of first-order logics with recursive definitions have FO-completeness~\cite{loding18}. And there is work arguing that functional program verification techniques over algebraic datatypes (ADTs) used in Liquid Haskell for Haskell programs~\cite{Vazou18} and Leon/Stainless for Scala programs~\cite{BlancKuncakKneussSuter,SuterKoksalKuncak} is also FO-complete~\cite{fluid23}. In the latter setting, in order to express properties in FO logic, the \emph{standard} universe of ADTs is replaced with its FO axiomatization (which admits nonstandard models), and it has been shown that the techniques used in practice are FO-complete under this modeling. 
Note that there is a gap between least fixpoint semantics and fixpoint semantics, and these frameworks require inductive lemmas to bridge this gap. Current tools require users to give these lemmas, but there is active research of synthesizing these lemmas automatically~\cite{fossil, ranjitlemmasynthesis24}.

The above discussion motivates the primary question we pose in this paper:
\begin{quote}
    \emph{Are there expressive heap logics with recursive definitions that have FO-complete validity procedures?}
\end{quote}

More precisely, we would like validity procedures that are sound--- if they report that a theorem is valid, it is valid with respect to the least fixpoint semantics for definitions. And also FO-complete--- if a theorem is valid with fixpoint semantics of recursive definitions, then the procedure is guaranteed to terminate and prove it valid. Note that if a theorem is valid with respect to least fixpoint semantics, but not valid with respect to fixpoint semantics, the tool is not required to terminate.

We are interested in  heap logics that also have the following desirable properties:
\begin{itemize}
 \item The logic should have \emph{implicit heaplets}. Separation logic pioneered the fundamental design principle that every formula $\alpha$ is implicitly associated with a heaplet that captures precisely the subset of locations relevant to $\alpha$, and contracts for function calls use this implicit heaplet as the locations on which permission is taken to modify the heap. We require such implicit heaplets to enable natural frame reasoning by the client, akin to separation logic.
 \item The logic should allow standard combinations with background sorts/theories such as linear arithmetic. 
 \item The logic should be amenable to reasoning using reduction to SMT solvers in order to facilitate scalable and efficient reasoning (note, however, that calling SMT solvers on incomplete logics is a nonstarter). We furthermore require empirical evidence that reasoning is fairly efficient in practice, despite using an FO-complete procedure.
\end{itemize}

\emph{The primary contribution of this paper is to establish the first two heap logics that satisfy the above desiderata and admit FO-complete procedures. We also provide empirical evidence that the FO-complete procedures we develop result in efficient reasoning over a standard benchmark of data structure manipulating programs with rich specifications of their functional behavior. The first heap logic is Frame Logic~\cite{esop2020framelogic, framelogictoplas2023}, and the second heap logic is a new separation logic that has an alternate semantics inspired by frame logic.}

\mypara{An FO-complete Frame Logic}
Our first focus of investigation is  \emph{Frame Logic} (FL), proposed recently by Murali et al 
that embraces the \emph{principles} of separation logic but works in a \emph{first-order logic setting with recursive definitions (FORD)}~\cite{framelogictoplas2023,esop2020framelogic}.
Frame logic (FL) is aesthetically simple--- it adds to classical first-order logic with recursive definitions a \emph{support} operator $\Sp$, where $\textit{Sp}(\alpha)$, for any formula $\alpha$, evaluates to the implicit subset of locations of the heap that the truthhood/falsehood of $\alpha$ relies upon. The support is similar to local heaplets in separation logic. However, unlike separation logic, supports are \emph{uniquely} defined, which allows modeling them without true quantification over sets.

Instead of relying on a localized heaplet semantics as separation logic does, the semantics of FL keeps the classical semantics of FORD over the \emph{global heap}, but allows recovering the local heaplet using the support operator. Frame logic can express the separating conjunct 
--- $\alpha * \beta$ is essentially expressed as $\alpha \wedge \beta \wedge \textit{Sp}(\alpha) \cap \textit{Sp}(\beta)=\emptyset$. 

Work on frame logic by Murali et al~\cite{esop2020framelogic, framelogictoplas2023} has argued that weakest preconditions for programs with frame logic annotations are expressible in frame logic itself, and that verification conditions in frame logic can be translated to FORD. 
However, this does not translate to automation (especially using SMT solvers) as FORD with alternating quantifiers is highly complex.
The weakest precondition transformations described in~\cite{esop2020framelogic,framelogictoplas2023} are complex, involving several introductions of \emph{quantifiers}  
leading to weakest preconditions having alternations of universal and existential quantification.
Furthermore, weakest preconditions introduce first-order formulations of magic wand (``MW''-operators) that have gnarly definitions that are too complex to automate.

In this paper, we approach automated verification of frame logic afresh with the goal of embedding verification conditions into an existing automatable class of first-order logic with recursive definitions. The technique of \emph{natural proofs} is a well-established technique for handling such logics--- it uses recursive definition unfoldings, uninterpreted function abstractions, and SMT-based quantifier-free reasoning to build sound automated reasoning~\cite{qiu13,pek14,loding18}.
Furthermore, the work on its foundations~\cite{loding18}
identifies a particular fragment $\nplogic$ for which the natural proofs technique is $FO$-complete.

The first technical contribution of this paper is to build an FO-complete automated program verification paradigm for frame logic specifications that generates verification conditions in the logic $\nplogic$. We overcome several challenges using novel ideas, including (a) defining an adequately powerful fragment of frame logic that allows a restricted form of quantification using a new {\tt cloud} operator, and (b) generating verification conditions with care so that they are in the $\nplogic$ fragment and have existential quantification only over the foreground sort. The verification conditions are generated using strongest-post like symbolic evaluation rather than weakest preconditions, avoiding magic wand like operators.

\mypara{An FO-complete Separation Logic}
The second contribution of this paper is a separation logic with an \emph{alternate semantics} than classical semantics, and an FO-complete procedure for program verification for it. The new logic, \SLFL, is a separation logic that has \emph{determined heaplets} inspired by frame logic. As far as we know, this is the first separation logic that is powerful enough (has recursive definitions and an undecidable validity problem) and that admits $FO$-complete program verification.

The logic \SLFL\ is interpreted over stores and heaplets, $(s,h)$, instead of a global heap, similar to traditional separation logic. However, it is designed so that for any formula $\alpha$ and for any global heap $g$, there is \emph{at most one heaplet} $h$ of $g$ on which $\alpha$ holds. It is in this sense that \SLFL\ has determined heaplets. Note that traditional separation logic does not have this property--- $\alpha \vee \beta$ can hold in multiple heaplets of a global heap, say one satisfying $\alpha$ and one satisfying $\beta$, and formulas like $\mathit{true}$ hold on multiple heaplets. The semantics of \SLFL\ uses auxiliary support operators inspired by frame logic in order to define determined heaplets.

We show that \SLFL\ formulas can be translated to frame logic, where $\alpha$ in \SLFL\ translates to a formula $\alpha'$ in frame logic, where although $\alpha'$ is interpreted on global heaps, its support $\Sp(\alpha')$ is precisely the unique heaplet under which $\alpha$ could hold.
Coupling this translation with the FO-complete verification for programs against frame logic specifications gives us an FO-complete program verification for \SLFL. \SLFL\ has several other features that also facilitate FO-completeness--- specifications are purely quantifier-free, but we allow a set of recursive definitions (that require quantification), and global heaps are interpreted as potentially infinite models, avoiding another source of incompleteness.

\mypara{Evaluation}
The third contribution of this paper is an evaluation that addresses two primary research questions: (a) are the logics we define expressive enough for full functional specification of standard data structure manipulating routines, and (b) are the FO-complete program verification mechanisms efficient in practice in verifying such programs.
We develop tools, a tool FLV (Frame Logic Verifier) for programs annotated with frame logic specifications, and a checker SLFLV that translates \SLFL\ formulae to facilitate FO-complete verification against \SLFL\ specifications. 
FLV generates verification conditions in first-order logic with recursive definitions, in particular in the logic $\nplogic$, and utilizes an existing tool to do natural proof based FO-complete reasoning~\cite{qiu13,pek14,loding18,fossil}.  
We evaluate our tool on a suite of programs that manipulate common data structures, and find that both logics are expressive enough for stating rich functional specifications for these methods, and facilitate efficient verification.

\medskip
In summary, our work develops the first expressive heap logics, a frame logic and
a separation logic, that have implicit heaplets and admit FO-complete validity procedures. We provide the accompanying program verification methodologies and an evaluation that attests to the expressiveness of the logics and efficiency of the automation. This work hence establishes a new theoretical standard of completeness for heap logics. Our hope is that demanding this new standard of FO-completeness for future heap logics paves the way for more robust and predictable heap verification.

\section{Sources of Incompleteness in Heap Logics}
\label{sec:sourcesofinc}

In this section, we show several results demonstrating key sources of incompleteness in heap logics. 
In particular, we will show (a) restricting to finite models gives incompleteness, and hence we must define models that include infinite models, (b) definitions with least fixpoint semantics cause incompleteness, which motivates our study of completeness of definitions with arbitrary fixpoint semantics (called FO-completeness), and (c) even when models are infinite and we consider fixpoint definitions, classical separation logic makes design decisions related to having multiple heaplets that can satisfy a formula, which intrinsically causes incompleteness.

\paragraph{Completeness and Incompleteness for Logics} We say a logic $\Ll$ is \emph{complete}
if the set of all valid formulas in $\Ll$ is recursively enumerable (written r.e.). In other words, there exists a sound procedure that can prove every valid formula (but may not halt on invalid formulas). 
Conversely, we say that $\Ll$ is \emph{ incomplete} if validity is not r.e.. 
Note that incompleteness implies undecidability, but the converse is not true (e.g., pure FOL is undecidable but complete~\cite{turing1936, enderton}). 
\smallskip

\mypara{Finite Models} A model is said to be finite if the size of its universe is finite. Logics whose semantics is defined only over finite models are typically incomplete. This follows from general results in Finite Model Theory~\cite{libkin04}.

\begin{theorem}[Trakhtenbrot's Theorem]
\label{thm:fol-finite-incomplete}
Fix a first-order signature $\Sigma$, and let $\varphi$ be a formula over that signature. The problem of checking whether $\varphi$ is valid over finite models is not recursively enumerable.
\end{theorem}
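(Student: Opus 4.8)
The plan is to prove the statement by a reduction from the halting problem, following the classical argument. Concretely, I would show that one can computably associate to each Turing machine $M$ a first-order sentence $\varphi_M$ over a fixed finite relational signature such that $M$ halts on the empty input if and only if $\varphi_M$ has a finite model. Granting this, the theorem follows quickly: the set of finitely satisfiable sentences is recursively enumerable (enumerate all finite structures of increasing size and model-check), so if it were also co-r.e. it would be decidable, contradicting undecidability of halting; hence finite satisfiability is not recursive, and in particular not co-r.e. Since $\varphi$ is valid over finite models exactly when $\neg\varphi$ is not finitely satisfiable, the set of finitely valid sentences is the image under negation of the complement of a non-recursive r.e. set, hence not r.e. — which is precisely the claim.

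The heart of the argument is the construction of $\varphi_M$. I would use a signature with one binary relation $\le$ (intended to be a discrete linear order with least and greatest elements) together with predicates encoding a computation table indexed by pairs $(t,c)$ of $\le$-elements, with $t$ read as a time step and $c$ as a tape cell: predicates $\mathrm{Head}_q(t,c)$ asserting that at time $t$ the head scans cell $c$ in state $q$, and $\mathrm{Tape}_a(t,c)$ asserting that cell $c$ holds symbol $a$ at time $t$, for each state $q$ and symbol $a$ of $M$. The sentence $\varphi_M$ is the conjunction of: (i) axioms forcing $\le$ to be a linear order with a minimum element $0$ and a maximum element; (ii) an axiom describing the initial configuration at time $0$ (blank tape, head on the first cell, initial state); (iii) local-consistency and transition axioms, universally quantified over a time element $t$, its $\le$-successor $t'$, and cells, enforcing that consecutive rows of the table are related by $M$'s transition function $\delta$ (cells away from the head unchanged; the scanned symbol, head position, and state updated according to $\delta$); and (iv) an axiom stating that at the maximum time element $M$ is in a halting state. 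Each conjunct is first-order over $\le$, using the successor relation and the "is-minimum"/"is-maximum" predicates that are definable from $\le$; the only subtlety is that over a finite structure a discrete linear order with endpoints is necessarily isomorphic to an initial segment $\{0,1,\dots,n-1\}$ of $\mathbb{N}$, which is exactly what allows the row indexed by a time element to be read as a genuine configuration.

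For correctness of the reduction: if $M$ halts on the empty input after $T$ steps, it visits at most the first $T$ cells, so the structure on universe $\{0,\dots,T\}$ with $\le$ the usual order and all predicates set to the actual computation table is finite and satisfies every axiom, giving a finite model of $\varphi_M$. Conversely, any finite model of $\varphi_M$ has its $\le$-reduct isomorphic to some $\{0,\dots,n-1\}$; axioms (ii) and (iii) then force the predicate interpretations to encode exactly the prefix of $M$'s (deterministic) computation on the empty input up through time $n-1$, and axiom (iv) forces that this computation has reached a halting state by step $n-1$, so $M$ halts.

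I expect the main obstacle to be bookkeeping rather than conceptual: writing the transition axioms in (iii) so they faithfully capture $\delta$ at and immediately around the head while leaving every other cell untouched, and ensuring that the definable "successor", "minimum", and "maximum" predicates make the quantifications range over genuinely consecutive time steps and over the true endpoints. Once the encoding is fixed, the complexity-theoretic conclusion is immediate from the undecidability of the halting problem.
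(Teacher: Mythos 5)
Your proposal is correct and is essentially the same argument the paper has in mind: the paper only sketches a reduction encoding Turing-machine configuration sequences as finite structures and asserting non-halting, which is exactly the contrapositive of your reduction from halting to finite satisfiability followed by the validity/satisfiability duality. Your write-up is in fact more detailed than the paper's two-sentence sketch, and the one point to watch in a full write-up is the bookkeeping you already flag (transition axioms at the boundary of the order, so that a finite model cannot "run off the edge" without violating an axiom).
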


The proof can be shown using a reduction from the \emph{non-halting problem} for Turing machines to validity of FOL over finite models. In fact, we can have linked lists encode configuration sequences of Turing machines, and assert that they never encode a halting state. Given that linked lists are the most basic data structures, finite models are a nonstarter for designing complete logics. Separation logics are typically defined over finite heaps~\cite{reynolds02}. We avoid this source of incompleteness by interpreting our logics over finite and infinite models.

\mypara{Recursive Definitions} A second source of incompleteness for heap logics stems from modeling heap data structures and measures as recursive definitions with least fixpoint semantics. This holds even when the logic has both finite and infinite models; the following is well-known:

\begin{proposition}
\label{thm:fo-lfp-incomplete}
First-Order Logic augmented with least fixpoint (lfp) recursive definitions (see Section~\ref{sec:ford} for definition) is incomplete.
\end{proposition}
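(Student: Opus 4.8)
The plan is to show that First-Order Logic with least fixpoint recursive definitions (FO+lfp) is incomplete by encoding a fragment of arithmetic strong enough to make validity non-r.e., and then appealing to G\"odel's incompleteness theorem. Concretely, I would exhibit a recursive definition whose least fixpoint is precisely the graph of addition (and, with a second definition, multiplication) on a designated ``natural number'' part of any model, forcing the standard interpretation of arithmetic on that part; from there, validity of FO+lfp sentences is at least as hard as truth in the standard model of arithmetic, which is not r.e. (indeed not even arithmetical, but $\Pi^0_1$-hardness already suffices for non-r.e.).

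First I would set up the arithmetic skeleton: introduce a unary predicate $N$ defined by the lfp of $N(x) \Leftrightarrow x = 0 \vee \exists y.\, (N(y) \wedge x = s(y))$ over a signature with a constant $0$ and a unary successor function $s$ (or, if one prefers a pure relational setting, a binary successor relation together with axioms, definable via lfp, that it is functional and injective on $N$). The least fixpoint guarantees $N$ contains exactly the elements reachable from $0$ by finitely many applications of $s$ — i.e. an isomorphic copy of $\mathbb{N}$ — no matter how large or ``nonstandard'' the ambient model is. Next I would define addition by the lfp of $\mathit{Add}(x,y,z) \Leftrightarrow (y = 0 \wedge z = x) \vee \exists y'\, \exists z'.\, (y = s(y') \wedge z = s(z') \wedge \mathit{Add}(x,y',z'))$, and multiplication analogously in terms of $\mathit{Add}$. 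On the $N$-part these least fixpoints coincide with the true addition and multiplication tables. Then, for any sentence $\varphi$ of arithmetic, let $\varphi^N$ be its relativization to $N$ with $+$ and $\times$ replaced by $\mathit{Add}$ and $\mathit{Mult}$; the key claim is that the FO+lfp sentence $\psi_\varphi := \bigl(\text{definitions}\bigr) \rightarrow \varphi^N$ is valid iff $\varphi$ holds in the standard model $(\mathbb{N}, +, \times, 0, s)$. One direction is immediate since $(\mathbb{N},\dots)$ (suitably expanded) is a model of the antecedent; the other direction uses that the lfp forces the $N$-reduct of every model to be standard arithmetic, so $\varphi^N$ holding everywhere forces $\varphi$ to hold in $\mathbb{N}$. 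Finally, since $\mathrm{Th}(\mathbb{N}, +, \times)$ is not recursively enumerable (G\"odel/Tarski), the set of valid FO+lfp sentences is not r.e. either, giving incompleteness.

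The main obstacle is the careful bookkeeping in the reduction: one must ensure that the relativization is faithful even over models where the universe is strictly larger than the $N$-part and where the functions $s$, and the interpretations chosen for $\mathit{Add}$, $\mathit{Mult}$ off the $N$-part, behave arbitrarily — the least-fixpoint semantics must be invoked to pin down behavior on $N$ precisely, and one has to check that nothing about the ``junk'' elements outside $N$ can spoil the equivalence (this is why relativizing every quantifier to $N$ is essential). A secondary point requiring care is making the argument robust to whatever precise definition of ``FO+lfp'' is fixed in Section~\ref{sec:ford}: if the formalism only allows positive/monotone recursive definitions, I would confirm that the definitions of $N$, $\mathit{Add}$, $\mathit{Mult}$ above are syntactically monotone in the recursively defined predicate (they are, since the predicate occurs only positively), so the least fixpoint exists and the encoding goes through. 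Everything else is routine.
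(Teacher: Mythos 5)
Your overall strategy is the same one the paper has in mind for this proposition: use lfp definitions to carve out a standard copy of $(\mathbb{N},+,\times)$ inside arbitrary (possibly infinite, possibly ``junk''-containing) models, reduce truth in the standard model of arithmetic to validity of FO+lfp sentences via relativization, and conclude that validity is not r.e.\ from G\"odel/Tarski. The paper itself only gestures at this construction (citing a version that uses a single unary lfp definition), so your fleshed-out reduction is welcome. However, two points need repair before the argument is actually correct.

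First, the claim that the least fixpoint of $N(x) \Leftrightarrow x = 0 \vee \exists y.\,(N(y) \wedge x = s(y))$ is ``an isomorphic copy of $\mathbb{N}$'' is false as written: the lfp is exactly the set of elements reachable from $0$ under $s$, and in a model where, say, $s(s(0)) = 0$, that set is a two-element cycle. This matters because your reduction needs the direction ``$\varphi$ true in $\mathbb{N}$ implies $\psi_\varphi$ valid,'' and that direction fails on such models ($\varphi$ can be true in $\mathbb{N}$ yet $\varphi^N$ false in the cycle model, which satisfies the antecedent). The fix is standard --- conjoin to the antecedent the first-order axioms that $s$ is injective on $N$ and that $0$ is not $s$ of any element of $N$ --- and you clearly know it is needed (you mention injectivity for the relational variant), but it must appear explicitly in the functional version for the biconditional in your key claim to hold. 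Second, the paper's Section~\ref{sec:ford} requires the bodies of recursive definitions to be \emph{quantifier-free}, whereas your bodies for $N$ and $\mathit{Add}$ contain existential quantifiers; since incompleteness of the more restricted formalism is what is being claimed, you should rewrite the definitions quantifier-free, e.g.\ using a predecessor function $p$ as in $N(x) := (x = 0) \vee (x = s(p(x)) \wedge N(p(x)))$, and analogously for $\mathit{Add}$ and $\mathit{Mult}$. With these two repairs the reduction goes through and matches the proof the paper intends.
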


In fact, one can define a ``number line'' along with addition and multiplication using a \emph{single unary} lfp definition (see~\cite{fossil}), yielding incompleteness by G\"{o}del's incompleteness theorem for arithmetic~\cite{ebbinghausflumthomas-logicbook}. This shows that accomodating lfp definitions are a nonstarter for achieving completeness. In this paper, we hence consider \emph{FO-completeness} which is the notion of completeness where definitions are interpreted using arbitrary fixpoint semantics.

\mypara{Monadic Second-Order Quantification} Yet another source of incompleteness for heap logics stems from design choices that--- either implicitly or explicitly--- allow non-nested monadic second-order quantification (quantification over sets). For example, the separating conjunction operator $*$ in Separation Logic (SL) has the following semantics: ``$s,h \models \alpha * \beta$ if there exist disjoint subsets $S_1,S_2$ of the domain of $h$ such that...''. Consequently, checking an entailment $\alpha * \beta \models \gamma * \delta$ essentially requires reasoning with a formula of the form $(\exists S_1,S_2.\, \mathit{Valid}(S_1,S_2) \land \Phi) \rightarrow (\exists S_3,S_4.\,\mathit{Valid}(S_3,S_4) \land \Phi')$. Note that this formula has both existential and universal quantification over sets (due to the implication).
We provide evidence that even such non-nested second-order quantification prohibits completeness:

\begin{theorem}[Proof in Appendix~\ref{app:sourcesofinc}]
\label{thm:so-high-undec}
Monadic Second-Order Logic is incomplete. This holds even where there is only one level of quantification over sets (no nesting).
\end{theorem}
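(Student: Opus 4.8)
The plan is to reduce the finite-validity problem for first-order logic---shown to be non-r.e.\ in Theorem~\ref{thm:fol-finite-incomplete} (Trakhtenbrot)---to the validity problem for MSO over all (finite and infinite) structures, in such a way that the witnessing MSO sentences use only a single, non-nested block of set quantification. The crux is that, although no MSO sentence over a pure signature can separate finite from infinite universes, finiteness becomes MSO-expressible once we enrich the signature with a fresh binary symbol $<$: let $\mathrm{Fin}_<$ be the sentence asserting ``$<$ is a strict linear order, and every nonempty set $X$ has a $<$-least and a $<$-greatest element''. This sentence is first-order except for the single universal quantifier $\forall X$, and it holds in a structure precisely when that structure's universe is finite. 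The nontrivial direction---that an infinite universe cannot satisfy it---follows from the classical fact (a consequence of the infinite Ramsey theorem) that every infinite linearly ordered set contains a subset of order type $\omega$ or of order type $\omega^*$, which then has, respectively, no $<$-greatest or no $<$-least element; the converse is immediate since every finite set carries a linear order and every subset of a finite set has a minimum and a maximum.

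Given $\mathrm{Fin}_<$, the reduction is routine. Fix a signature $\Sigma$ for which FO finite validity is non-r.e.\ (a single binary relation suffices), set $\Sigma' = \Sigma \cup \{<\}$ with $<$ fresh, and map each first-order $\Sigma$-sentence $\varphi$ to the MSO $\Sigma'$-sentence $\psi_\varphi := \mathrm{Fin}_< \to \varphi$; this map is clearly computable. If $\varphi$ is finitely valid, then any $\Sigma'$-structure satisfying $\mathrm{Fin}_<$ has a finite universe, so its $\Sigma$-reduct is a finite model of $\varphi$, whence $\psi_\varphi$ is valid. Conversely, if $\varphi$ has a finite countermodel $N$, expand $N$ by interpreting $<$ as any linear order on its finite universe: the expansion satisfies $\mathrm{Fin}_<$ but falsifies $\varphi$, so $\psi_\varphi$ is not valid. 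Hence $\varphi$ is finitely valid iff $\psi_\varphi$ is valid; since r.e.\ sets are closed under many-one reductions and FO finite validity is not r.e., MSO validity is not r.e.\ either.

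It remains to note that the sentences $\psi_\varphi$ witnessing incompleteness are syntactically very restricted: the only set quantifier, $\forall X$, occurs inside $\mathrm{Fin}_<$, which sits in the antecedent of an implication whose consequent $\varphi$ is first-order, so $\psi_\varphi$ is logically equivalent to $\exists X\,\theta(X)$ with $\theta$ first-order---one level of set quantification, with no nesting. This establishes the refined statement. The main, and indeed the only genuinely mathematical, obstacle is verifying that $\mathrm{Fin}_<$ really does define finiteness over arbitrary $\Sigma'$-structures; everything else is bookkeeping about reductions. A self-contained alternative that avoids the order-theoretic lemma is to take $\psi_\varphi := \mathrm{PA}_2 \to \varphi$, where $\mathrm{PA}_2$ is second-order Peano arithmetic with its single second-order induction axiom $\forall X(\dots)$ (a monadic quantifier): by Dedekind's categoricity theorem $\psi_\varphi$ is valid iff $\varphi$ holds in the standard model $\mathbb{N}$, and true first-order arithmetic is not r.e.\ (Tarski/G\"odel); again only one non-nested set quantifier is used.
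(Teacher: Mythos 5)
Your proof is correct, but it takes a genuinely different route from the paper's. The paper reduces from the recurrence (non-)halting problem for Turing machines: it pins down the standard model of arithmetic categorically with a single universally quantified set variable (the $\mathit{TrueNat}$ induction axiom), encodes TM configurations with first-order functions, and expresses ``the halting state is visited only finitely often''; this yields the stronger conclusion that MSO validity is neither r.e.\ nor co-r.e. You instead reduce from Trakhtenbrot's theorem (Theorem~\ref{thm:fol-finite-incomplete}, already stated in the paper) by observing that \emph{finiteness} is MSO-definable with one non-nested set quantifier over an expanded signature: ``$<$ is a linear order and every nonempty $X$ has a least and a greatest element.'' This is more elementary and self-contained given Trakhtenbrot, and it establishes exactly the stated theorem (non-r.e.-ness), though not the paper's stronger ``highly undecidable'' claim; your alternative via $\mathrm{PA}_2$ and Dedekind categoricity is essentially the paper's $\mathit{TrueNat}$ device and, combined with Tarski's theorem on true arithmetic, in fact yields an even stronger non-arithmeticity conclusion. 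One small simplification: the appeal to Ramsey is unnecessary. If every nonempty subset has a $<$-least element then $<$ is a well-order, and if that well-order were infinite its initial $\omega$-segment would be a nonempty subset with no greatest element; this direct argument needs no choice and no combinatorics. The rest of your reduction (both directions, and the observation that the lone $\forall X$ sits in an antecedent and hence prenexes to a single existential set quantifier) is sound.
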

\noindent
Note that, in the above result, we consider validity over both finite and infinite models. 
We do not know whether the above result is known, and provide a proof in Appendix~\ref{app:sourcesofinc}. 

\smallskip
In this work, we seek to avoid this source of incompleteness by designing logics where formulas have \emph{unique heaplets}. Intuitively, when formulas have unique heaplets, we can write their semantics using universal quantification or existential quantification. After all, the universal quantifier can only be satisfied by the uniquely defined heaplet of the formula! Therefore, instead of checking $(\exists S_1,S_2.\, \mathit{Valid}(S_1,S_2) \land \Phi) \rightarrow (\exists S_3,S_4.\,\mathit{Valid}(S_3,S_4) \land \Phi')$, we can check $\forall S_1, S_2, S_3, S_4.\, \big((\mathit{Valid}(S_1,S_2) \land \Phi) \rightarrow (\mathit{Valid}(S_3,S_4) \rightarrow \Phi')\big)$. 
This formula only contains universal quantification over sets, and can therefore be removed and considered as part of the first-order signature (as a monadic predicate).

\smallskip
We conclude this section with a new incompleteness result for SL. 
Although prior work has identified that operators like the magic wand are inherently second-order~\cite{magic-wand-complexity} (which would yield incompleteness), we show SL is incomplete due to non-unique heaplets even if we eliminate the magic wand as well as other sources of incompleteness such as finite models and least fixpoints:

\begin{theorem}[Non-Unique Heaplets make SL Incomplete]
\label{thm:sl-fp-incomplete}
Separation Logic (a) without the magic wand, (b) augmented with fixpoint definitions, and (c) defined over both finite and infinite heaplets, is incomplete.
\end{theorem}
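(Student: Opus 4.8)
The plan is to reduce validity in monadic second-order logic, which Theorem~\ref{thm:so-high-undec} shows is not r.e.\ even for sentences with a single unnested block of set quantifiers, to validity in the restricted separation logic of the statement. The point we exploit is that, although the magic wand is forbidden, the separating conjunction $*$ still lets us \emph{existentially guess} a sub-heaplet, and hence a subset of locations; dually, $\neg(\neg(\cdot) * \mathit{true})$ expresses a ``for all sub-heaplets'' modality, again without the magic wand. Fix an MSO sentence from the fragment of Theorem~\ref{thm:so-high-undec} over a finite relational signature $\tau$; prenexing the (unnested) set quantifiers we may write it as $\forall \bar Y\, \exists \bar X\, \varphi(\bar X,\bar Y)$ with $\varphi$ first-order (the ordering $\exists\bar X\,\forall\bar Y\,\varphi$ is handled analogously, as noted below). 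We construct, effectively, an \SLFLb-style\ SL formula $\psi$ with the property that $\psi$ is valid iff $\forall\bar Y\,\exists\bar X\,\varphi$ is valid over all finite and infinite $\tau$-structures; since the latter is not r.e., neither is SL validity, giving incompleteness.

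The encoding keeps the \emph{structure and the universally quantified sets entirely in background relations}, so that they survive heaplet shrinking, while the \emph{existentially quantified sets are represented by ``flag cells'' that live in the heaplet}. We use background symbols: the relations of $\tau$, a unary $Q$ (carving out the universe $V$), unary $S_1,\dots,S_{|\bar Y|}$ (the sets $\bar Y$), and functions $g_1,\dots,g_{|\bar X|}$ meant to inject $V$ into $|\bar X|$ pairwise-disjoint regions of locations. Write $\alloc(t) \equiv (\exists z.\, t\mapsto z) * \mathit{true}$ for ``$t$ lies in the domain of the current heaplet''. Let $\mathit{BK}$ be the first-order (over $\alloc$ and the background symbols) sentence asserting that the current heaplet's domain is exactly $\{\,g_i(u) : Q(u),\ 1\le i\le |\bar X|\,\}$ and that $(i,u)\mapsto g_i(u)$ is injective on $Q\times\{1,\dots,|\bar X|\}$. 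Take
$\psi \equiv \mathit{BK} \rightarrow (\varphi^\star * \mathit{true})$,
where $\varphi^\star$ is $\varphi$ with its first-order quantifiers relativized to $Q$, its $\tau$-atoms kept verbatim, each $u\in Y_j$ replaced by $S_j(u)$, and each $u\in X_i$ replaced by $\alloc(g_i(u))$. The ``$* \mathit{true}$'' existentially chooses a sub-heaplet $h_W$ of the flag cells; inside $h_W$ the atom $\alloc(g_i(u))$ reads off precisely whether $u$ was placed in $X_i$, and since the $g_i(u)$ are all distinct, $h_W$ ranges over exactly the tuples $(X_1,\dots,X_{|\bar X|})$ of subsets of $V$, independently. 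Thus in any model satisfying $\mathit{BK}$, $\psi$ asserts $\exists\bar X\,\varphi$ for the $\tau$-structure read off from $Q$ and the $\tau$-relations with the $\bar Y$ read off from $\bar S$; in models violating $\mathit{BK}$, $\psi$ holds vacuously. Because validity quantifies over all $(s,h)$ — hence over all interpretations of $Q$, the $\tau$-relations, and $\bar S$, i.e.\ over all $\tau$-structures-with-unary-predicates of the admitted cardinalities — $\psi$ is valid iff $\forall\bar Y\,\exists\bar X\,\varphi$ is. For the $\exists\bar X\,\forall\bar Y\,\varphi$ ordering one keeps a second family of flag cells for $\bar Y$ in the base heaplet and, after the $*$-guess of $\bar X$, uses $\neg(\neg(\cdot)*\mathit{true})$ to range over the $\bar Y$-flag sub-heaplets; the details of keeping the guessed $\bar X$ visible through this inner modality will be spelled out in the proof.

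The crux, and where care is needed, is the information flow across $*$: shrinking the heaplet to the guessed $h_W$ destroys anything recorded only in the discarded part, and $*$ (unlike $-\!\ast$) cannot restore it. The resolution above — placing the structure and the universal sets in background relations (immune to heaplet shrinking) and recording the existential sets merely as presence/absence of flag cells that are \emph{all} present in the base heaplet — sidesteps this, at the cost of using background relation symbols, which is permitted since the logic admits background theories. A secondary point is that the implicit $\forall(s,h)$ of ``validity'' must sweep out every target structure: this needs the location sort to realize all finite sizes and at least one infinite size (exactly the ``finite and infinite heaplets'' hypothesis) and needs $\mathit{BK}$ to be simultaneously satisfiable with every choice of $Q$, the $\tau$-relations, and $\bar S$, which holds because a sufficiently large location sort always accommodates the required flag cells and injections. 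Finally, neither recursive/fixpoint definitions nor infinite models are essential to the reduction itself — they only enlarge the logic, so incompleteness a fortiori holds — but the infinite models are what let the MSO side range over infinite structures.
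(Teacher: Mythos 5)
Your reduction from Theorem~\ref{thm:so-high-undec} is a genuinely different route from the paper's, and the core idea---using $\ast$ with $\mathit{true}$ to existentially guess a sub-heaplet that encodes a tuple of sets via presence/absence of flag cells, while keeping the structure in heap-independent background symbols---is sound in spirit and does exploit non-unique heaplets. However, it proves incompleteness of a strictly richer logic than the one the theorem is actually about, and the extra expressiveness you rely on is doing essential work. The paper formalizes the SL of Theorem~\ref{thm:sl-fp-incomplete} (Appendix~\ref{app:sourcesofinc}) as a \emph{positive} fragment: atoms are $\mathsf{emp}$, $\mathit{true}$, term (dis)equalities, and $x \carrow{f} y$; connectives are only $\land$, $\ast$, $\ite$ over stack conditions, $\exists$, and recursively defined predicates. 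There is no negation or implication over heap-dependent subformulas, no $\forall$, and no background relation atoms. Your construction needs all of these: $\mathit{BK}$ must say the heaplet domain is \emph{exactly} the set of flag cells (a universally quantified negative condition on $\mathit{alloc}$), $\varphi^\star$ places $\mathit{alloc}(g_i(u))$ under the arbitrary Boolean structure of a first-order $\varphi$, the outer implication negates $\mathit{BK}$, and the $\exists\bar X\forall\bar Y$ case uses $\neg(\neg(\cdot)\ast\mathit{true})$. Once you admit negation over spatial formulas you have reintroduced universal quantification over heaplets, so the result no longer isolates non-unique heaplets as the culprit---which is the stated point of the theorem. (The deferred ``details to be spelled out'' for the $\exists\bar X\forall\bar Y$ ordering is also a real gap as written, though for the specific sentences produced in the proof of Theorem~\ref{thm:so-high-undec} one can prenex $(\forall S.\chi)\Rightarrow\beta$ to $\exists S.(\chi\Rightarrow\beta)$ and avoid that case entirely.)

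The paper's proof instead reduces from non-halting of two-counter machines. A recursive predicate $\mathit{tcm\_run}(x)$, interpreted under \emph{fixpoint} (not least-fixpoint) semantics over possibly infinite heaps, forces any satisfying heaplet to contain the full $\omega$-chain encoding the machine's run; because heaplets are non-unique, $\mathit{tcm}(x)\ast\mathit{true}$ also holds on any larger heaplet. The single entailment $x \neq \nil \land (\mathit{tcm}(x) \ast \mathit{true}) \land (\mathit{halt}(y) \ast \mathit{true}) \models \mathit{tcm}(x) \ast \mathit{halt}(y) \ast \mathit{true}$ is then valid iff the machine never halts: the cell $y$ can be separated from the $\omega$-chain exactly when no reachable configuration is halting. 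This establishes non-r.e.-ness of the \emph{entailment problem between positive formulas}, which is both the stronger statement and the one relevant to verification conditions. If you want to salvage your approach, you would need to show how to express the exact-domain condition and the relativized $\varphi^\star$ without negating heap-dependent formulas---which is precisely where it breaks down---or else explicitly restate the theorem for SL with full classical negation, in which case you are proving a different (weaker) claim.
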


We formally describe SL extended to infinite heaplets and fixpoint definitions, as well as the proof of the above result in Appendix~\ref{app:sourcesofinc}. Our proof crucially uses the fact that SL formulas do not have unique heaplets.

In view of these results, we design logics in this paper that avoid these pitfalls in our search for FO-complete heap logics. The logics \FL~ and \SLFL we design are interpreted over models of arbitrary size (finite and infinite) and have determined heaplets, and we show both of them to admit FO-complete program verification.
\section{Frame Logic and Program Verification}
\label{sec:prelim}

In this section we introduce the first logic for which we design FO-complete reasoning: Frame Logic (FL). FL was introduced in earlier work by Murali et al.~\cite{esop2020framelogic,framelogictoplas2023} as an extension of First-Order Logic with Recursive Definitions (FORD) using a \emph{Support} operator $\Sp$ to access the heaplets of formulas. We present a high level technical summary of the key concepts including the syntax and semantics of FL, as well as the notion of validity for Hoare Triples for which we design automation techniques in this work. The ideas presented in this section are not the contribution of this paper and are developed in prior works~\cite{framelogictoplas2023,loding18}.

\subsection{First-Order Logic with Recursive Definitions}
\label{sec:ford}

We now present First-Order Logic with Recursive Definitions (FORD). FORD is similar to first-order logic with least fixpoints~\cite{libkin04,vardi1982,immerman1982,relational-logic+lfp,chandra-harel}, except recursive definitions (which have least fixpoint semantics) are given names.

Formally, we have a signature $\Sigma = (\Ss, \Cc, \Ff, \Rr, \Ii)$ where $\Ss$ is a finite set of sorts, and $\Cc,\Ff$ and $\Rr$ are sets of constant, function, and relation symbols respectively. $\Ii$ is a set of relation symbols disjoint from $\Rr$ whose interpretations are given using recursive definitions (as opposed to being interpreted by a model). Symbols have their usual types, e.g., function symbols in $\Ff$ have an associated arity $n \in \mathbb{N}$ and are of type $\tau_1 \times \tau_2\ldots \times \tau_n \rightarrow \tau$, where $\tau_i,\tau \in \Ss$.

We require that $\Ss$ contain a designated \emph{foreground} sort $\Loc$. We use the foreground sort to model heap locations. The remaining sorts, called \emph{background} sorts, are used to model data values such as integers, sets, etc. 
We use the function symbols in $\Ff$ to model pointers and data fields of heap locations. For example, the next pointer of a linked list can be modeled using the symbol $\nxt: \Loc \rightarrow \Loc$, and similarly, the key stored at a location can be modeled using $\key: \Loc \rightarrow \mathit{Int}$.

We do not provide the syntax of FORD here as it is essentially identical to the syntax of Frame Logic given in Figure~\ref{fig:syntax}, except that FORD does not contain the support operator $\Sp(\cdot)$.

\mypara{Recursive Definitions} A recursive definition of a predicate $I \in \Ii$ is of the form $I(\overline{x}) :=_{\mathit{lfp}} \rho(\overline{x})$, 
where $\rho$ is a quantifier-free formula that only mentions recursively defined symbols in $\Ii$ positively (i.e., under an even number of negations). This ensures that least fixpoints always exist~\cite{tarski-knaster}. We formally treat recursively defined functions by modeling them as predicates, however, we will use function symbols with recursive definitions in our exposition. We denote the set of definitions for the symbols in $\Ii$ by $\Dd$. We require that $\Dd$ contains exactly one definition for each $I \in \Ii$.

\mypara{Semantics} We consider first-order models where the foreground sort $\Loc$ is uninterpreted and the background sorts are constrained by a first-order theory. This theory is usually the combination of several theories over individual sorts such that the quantifier-free fragment of the combination is decidable~\cite{nelson-oppen1979}.
Given a set of definitions $\Dd$ for the symbols in $\Ii$, a model of FORD consists of a first-order model of the above kind that interprets the symbols in $\Cc$, $\Ff$, and $\Rr$ (respecting the various theories), as well as an interpretation for the symbols in $\Ii$ that is determined by the first-order model as the least fixpoint of the definitions $\Dd$. 
Formulas are then evaluated as usual.

\mypara{The $\bfnplogic$ Fragment} $\nplogic$ is a syntactic fragment of FORD introduced in prior work~\cite{loding18} which allows only ``one-way" functions from the foreground sort to the background sorts. Formally, every function symbol in $\Ff$ of arity $n$ whose range sort is the foreground sort $\Loc$ has domain $\Loc^n$. Recursively defined symbols $\Ii$ of arity $k$ are of type $\Loc^k$. Finally, formulas are only allowed to quantify existentially over $\Loc$ (for validity). 
Validity checking for formulas in $\nplogic$ is FO-complete and can be automated effectively using a systematic quantifier-instantiation procedure~\cite{loding18} based on Natural Proofs~\cite{qiu13,pek14}.

\subsection{Frame Logic with Guarded Quantification}
\label{sec:fl}

Frame Logic (FL)~\cite{esop2020framelogic,framelogictoplas2023} is a heap logic based on FORD with \emph{implicit} heaplets. Syntactically, FL extends FORD with a \emph{support} operator $\Sp(\cdot)$ that allows access to the implicit heaplet. Formally, given a formula $\varphi$ (term $t$), $\Sp(\varphi)$ denotes the subset of locations (foreground elements) on which the truth of $\varphi$ (resp. value of $t$) depends. 
One can then use the $\Sp$ operator to state disjointness and reason about framing, e.g., the formula $\lst(x) \land \lst(y) \land \Sp(\lst(x)) \cap \Sp(\lst(y)) = \emptyset$ says that $x$ and $y$ point to disjoint linked lists. 

\mypara{Syntax} 
In this work, we use a fragment of FL with guarded quantification shown in Figure~\ref{fig:syntax}. Formally, we distinguish a set of \emph{mutable functions} $\Ff_m$ among the symbols in $\Ff$ with domain $\Loc$ that model pointer and data fields over the foreground sort. We also require a background sort $\mathit{Set}(\Loc)$ representing sets of foreground locations and define $\Sp(\varphi)$ as a term of type $\mathit{Set}(\Loc)$. We utilize in the syntax $\ite$ (``if-then-else'') expressions over terms and formulas, where $\ite(\gamma: \alpha, \beta)$ denotes ``if $\gamma$ holds then $\alpha$ else $\beta$''. Note that the \emph{guard} $\gamma$ cannot mention inductively defined predicates or terms of type $\mathit{Set}(Loc)$, including support expressions. These are technical restrictions that are required to define a well-defined semantics for Frame Logic formulas.

The fragment of FL that we use in this work has two restrictions. First, we only allow \emph{guarded} quantification of the form $\exists y\!: y = f(x).\, \varphi(y)$ where $y$ is a variable over the foreground sort $\Loc$. The truth value of this formula is the same as $\exists y\!: y = f(x) \land \varphi(y)$, but its support is defined more carefully. We describe this below. Second, we restrict the signature of function symbols as well as recursively defined predicates/functions to be ``one-way'' in the sense of the $\nplogic$ fragment of FORD described above. We reduce reasoning about programs with specifications written in this fragment to the validity of formulas in $\nplogic$.

\begin{figure}\small
\[
\begin{array}{rrcl}
\text{FL Formulas} & \varphi & \coloneq & \bot \mid \top \mid t = t
\mid R(t_1, \dots, t_m) \mid \varphi\land\varphi \mid \varphi \lor \varphi \mid \neg \varphi\\
&&& \mid \ite(\gamma:\varphi, \varphi) \mid \exists y\!: y = f(x).\;\varphi\\
&&& \mbox{where } R \in \Rr \cup \Ii,\; y \in \mathit{Var}_\Loc,\; f \in \Ff_m\\
\text{Guards} & \gamma & \coloneq & t = t \mid R(t_1, \dots, t_m) \mid \gamma\land \gamma \mid \gamma \lor \gamma \mid\neg \gamma \mid ite(\gamma:\gamma, \gamma)\\
&&& \mbox{where $R \in \Rr$,\; $t_i,t$ are not of type $\mathit{Set}(\Loc)$}\\
\text{Terms} & t & \coloneq & c \mid x \mid f(t_1, \dots, t_m) \mid ite(\gamma: t, t)\\
&&& \mid (\Sp(\varphi) \text{ if t is of type $\mathit{Set}(\Loc)$} )\mid ( \Sp(t') \text{ if t is of type $\mathit{Set}(\Loc)$} )\\
&&& \mbox{where $c$ is a constant, $x$ is a variable, and $t_1,\dots,t_m$ are of}\\
&&& \mbox{the appropriate type.}\\
\text{Recursive definitions} & I(\overline{x}) & \coloneq & \rho_{I}(\overline{x}) \text{ where $I\in \Ii$, $\overline{x} \in \Loc^k$ for some $k \in \mathbb{N}$, $\rho_{I}$ is a  }\\
&&& \mbox{ FL-formula where all relation symbols $I'\in \mathcal{I}$ occur}\\
&&&  \mbox{ only positively or inside a support expression}
\end{array}
\]
\vspace{-0.5 cm}
\caption{Frame Logic with guarded quantification. $\mathit{Var}_\Loc$ denotes variables over the foreground sort and $\mathit{Set}(\Loc)$ denotes the sort consisting of sets of foreground elements. Terms and formulas are assumed to be well-typed for simplicity of presentation.
}
\label{fig:syntax}
\vspace{-0.5 cm}
\end{figure}

\mypara{Semantics} FL semantics extends semantics for FORD. We refer the reader to prior work on FL~\cite{framelogictoplas2023} for a detailed presentation. We focus here on the semantics of the $\Sp(\cdot)$ operator. $\Sp$ is defined as a set of recursive equations in Figure~\ref{fig:sp-semantics}. Given a model $M$, the support of a formula $\alpha$ in $M$ (resp. term), denoted $\sem{\Sp(\alpha)}_M$, is the least fixpoint of the equations in Figure~\ref{fig:sp-semantics}.

The support can be understood as the set of locations (i.e., ``heaplet'') on which mutable functions must be applied (i.e., dereferenced) in order to compute the value of a given term or formula. The support of constants is empty. The support of a term $f(t)$ in $M$ for a mutable function $f \in \Ff_m$ is, as expected, $\{\sem{t}_M\}$. 
The application of a non-mutable function does not contribute to the support.

The support of $\alpha \land \beta$ is intuitively the union of the supports of $\alpha$ and $\beta$, and this is indeed the case in FL. Note, however, that the support of $\alpha \lor \beta$ is also the union of the supports of $\alpha$ and $\beta$. This is because Frame Logic defines a \emph{unique} support for a formula $\emph{regardless of its truth value}$. We can also see this reflected in the definition of $\Sp(\neg\alpha)$, which is equal to $\Sp(\alpha)$. This is different from, say, Separation Logic~\cite{reynolds02,seplogicprimer,demri15}, where the heaplet of $\alpha \lor \beta$ is the heaplet of any of the disjuncts that evaluate to true (formulas can have multiple heaplets). Unique heaplets are a salient feature of FL, and the logic makes several design decisions to achieve this. We point the reader to prior work~\cite{framelogictoplas2023} for a discussion on the ramifications of these design decisions.

The support of $\ite(\gamma: \alpha, \beta)$ always includes the support of $\gamma$ (since we need to evaluate $\gamma$ to determine the truth of the formula), but then only adds the support of the case that is evaluated depending on $\gamma$. We use the $\ite$ rather than $\lor$ to write expressions with finer-grained supports in FL. The support of an inductively defined relation term $\Sp(I(\overline{x}))$ is simply the support of the body of the definition $\Sp(\rho(\overline{x}))$. The body $\rho$ may of course mention $I$ recursively, and the support is the least fixpoint of these equations.

Finally, the support of $\exists y\!: y = f(x).\; \varphi(y)$ contains the location interpreted by $x$, as well as the support of $\varphi(y)$ where $y$ is interpreted to be location corresponding to $f(x)$ in the given model. Note that although the formula evaluates the same as $\exists y.\, y = f(x) \land \varphi(y)$, its support only includes the support of $\varphi$ for values of $y$ `matching' the guard, namely $f(x)$.

\begin{figure}\small
\begin{minipage}{.5\linewidth}
\begin{align*}
\sem{\Sp(c)}_M  = \sem{\Sp(x)}_M &= \emptyset \mbox{ for constant $c$, variable $x$}\\
\sem{\Sp(\top)}_M &= \sem{\Sp(\bot)}_M = \emptyset\\
\sem{\Sp(\Sp(t))}_M     &= \sem{\Sp(t)}_M\\
\sem{\Sp(t_1 = t_2)}_M &= \sem{\Sp(t_1)}_M \cup \sem{\Sp(t_2)}_M \\
\sem{\Sp(\alpha \land \beta)}_M  &= \sem{\Sp(\alpha)}_M \cup \sem{\Sp(\beta)}_M\\
\sem{\Sp(\alpha \lor \beta)}_M  &= \sem{\Sp(\alpha)}_M \cup \sem{\Sp(\beta)}_M\\
\sem{\Sp(\Sp(\varphi))}_M  &= \sem{\Sp(\varphi)}_M\\
\sem{\Sp(\neg \varphi)}_M &= \sem{\Sp(\varphi)}_M\\
\sem{\Sp(I(t_1\dots, t_n))}_M  &=
\sem{\Sp(\rho_{I}(\overline{x}))}_{M[\overline{x} \leftarrow \overline{u}]} 
\cup \bigcup\limits_{i =1}^{n} \sem{\Sp(t_i)}_M
\\
& \text{for $I \in \Ii$ with definition }\\
& \text{$I(\overline{x}) :=_{\mathit{lfp}} \rho_I(\overline{x})$, and}\\
& \text{where $\overline{u} = (\sem{t_1}, \dots, \sem{t_n})$}\\
\end{align*}
\end{minipage}%
\begin{minipage}{.5\linewidth}
\begin{align*}
\sem{\Sp(R(t_1\dots, t_n))}_M &= \bigcup\limits_{i =1}^{n} \sem{\Sp(t_i)}_M \mbox{ for $R\in \Rr$}\\
\sem{\Sp(f(t_1\dots, t_n))}_M &=
\bigcup\limits_{i =1}^{n} \{\sem{t_i}_M\} \cup \bigcup\limits_{i =1}^{n} \sem{\Sp(t_i)}_M \text{ if } f\in \Ff_m\\
\sem{\Sp(f(t_1\dots, t_n))}_M &= \bigcup\limits_{i =1}^{n} \sem{\Sp(t_i)}_M \text{ if } f\not\in \Ff_m\\
\sem{\Sp(\ite(\gamma : t_1, t_2))}_M &= \sem{\Sp(\gamma)}_M \cup
\begin{cases}
    \sem{\Sp(t_1)}_M \text{ if $M \models \gamma$}\\
    \sem{\Sp(t_2)}_M \text{ if $M \not\models \gamma$}
\end{cases}\\
\sem{\Sp(\ite(\gamma : \alpha, \beta))}_M &= \sem{\Sp(\gamma)}_M \cup
\begin{cases}
    \sem{\Sp(\alpha)}_M \text{ if $M \models \gamma$}\\
    \sem{\Sp(\beta)}_M \text{ if $M \not\models \gamma$}
\end{cases}\\
\sem{\Sp(\exists y\!: y = f(x).\;\varphi)}_M &= \{\sem{x}_M\} \cup \sem{\Sp(\varphi)}_{M[y \leftarrow u]}\\
&\text{where } u = \sem{f(x)}_M\\
\end{align*}
\end{minipage}
\vspace{-0.5cm}
\caption{Semantics of Support operator. $\sem{e}_M$ refers to the interpretation of an expression $e$ in a model $M$. The support is defined as the least interpretation satisfying the given equations.
}
\vspace{-0.2 cm}
\label{fig:sp-semantics}
\end{figure}

\subsection{Program Verification}
\label{sec:triples}

We now describe the programming language and the notion of correctness for which we develop automation in this work. These concepts serve as background for the verification condition (VC) generation presented in Section~\ref{sec:vcgen}, and the reader may safely skip it until after Section~\ref{sec:slfl}.

Figure~\ref{fig:proglang} describes the syntax of the language. It supports the typical commands including mutation of fields, allocation, and deallocation of locations. The language also supports function calls, including recursive calls~\footnote{We do not include \textit{while} loops in the formal syntax to simplify the technical exposition. However, our theory readily extends to programs with iteration, with specifications including frame logic assertions as loop invariants.}. The operational semantics for programs in this language is the usual one for heap programs ( described in Appendix~\ref{app:op-sem}). Program configurations are of the form $(S, H, A)$ where $S$ is a store, $H$ is a heap, and $A$ is the set of allocated locations. There is also an \emph{error} configuration $\bot$. We denote a transition between configurations $C_1$ and $C_2$ on a program $P$ according to the operational semantics by $C_1 \xrightarrow{P} C_2$.

\begin{figure}
\raggedright
    $P \coloneq  x\coloneq y \mid 
        x\coloneq c \mid
        v \coloneq be \mid  
        x\coloneq y.f \mid
        v \coloneq y.d \mid
        y.f\coloneq x \mid
        y.f\coloneq c \mid
        y.d\coloneq v \mid
        y.d\coloneq be $\\
        \hspace*{0.54cm}
        $\mid
        \alloc(x) \mid 
        \free(x) \mid 
        \bar{q} \coloneq g(\bar{p}) \mid 
        \assume(\eta) \mid 
        P;P \mid$
        if $\eta$ then $P$ else $P \mid
        \mathit{return} $\\
    \vspace{-0.2cm}
    \caption{The syntax of the programming language. Here, $x,y$ are location variables of type $Loc$, $c$ is a location constant, $f$ is a pointer of type $Loc \rightarrow Loc$, $d$ is a data field of type $Loc \rightarrow \tau_{\text{bs}}$ for some background sort $\tau_\text{bs}$, $be$ is a background expression, $v$ is a variable of a background sort, $\eta$ is a boolean expression without any dereferences. Finally, $g$ is a function of type $\tau_{1}\times\dots\times\tau_{m} \rightarrow \tau^{\prime}_{1}\times\dots\times\tau^{\prime}_{n}$ for some $m,n$ where $\tau_{i}, \tau^{\prime}_{i} \in \Ss$. 
 This is a method whose body is itself a program whose variables are a superset of the input variables $p_{i}$ of type $\tau_{i}$ and the output variables $q_{j}$ of type $\tau^{\prime}_{j}$. }
    \label{fig:proglang}
\vspace{-0.5 cm}
\end{figure}

\mypara{Hoare Triples and Validity} We consider triples of the form $\{\alpha\}\, P\, \{\beta\}$ 
where $\alpha$ and $\beta$ are FL formulas in the fragment described above and $P$ is a program. We treat free variables in $\alpha$ and $\beta$ as constants, implicitly quantifying over them universally. We then define:


\begin{definition}[Hoare Triple Validity]
\label{defn:triple-validity}
$\{\alpha\}\, P\, \{\beta\}$ is \emph{valid} if for every 
configuration $(S,H,A)$ such that $(S,H) \models \alpha$ and $\sem{\Sp(\alpha)}_{(S,H)} = A$:
\begin{enumerate}
    \item $(S,H,A)$ does not transition to $\bot$ on $P$ according to the operational semantics, and
    \item if $(S,H,A) \xrightarrow{P}(S^{\prime}, H^{\prime}, A^{\prime})$, then $(S^{\prime}, H^{\prime}) \models \beta$ and $\sem{\Sp(\beta)}_{(S^{\prime},H^{\prime})} = A'$ 
\end{enumerate}

Note that we require the allocated set in the post-state to be \emph{equal} to the support of the postcondition.
\end{definition}

Informally, the above definition says the Hoare Triple is valid if starting from any configuration satisfying the precondition where the allocated set is precisely the support of the precondition, (a) the program does not behave erroneously (e.g., make unsafe dereferences), and (b) if it reaches a final configuration then the postcondition must hold, with the allocated set in the post state precisely equal to the support of the postcondition. To ensure the former, at each step in the program we must check that dereferenced elements are contained \emph{within} the allocated set. This motivates the following definition, which we will use during VC generation for \FL\ in Section~\ref{sec:vcgen}.

\smallskip
\noindent
\textit{Relaxed Postconditions.} We introduce a variant $RP$ on postconditions, read \emph{relaxed post}, to indicate that the allocated set in the post-state need not be `tight' for the postcondition. We denote these triples by $\{\alpha\}\,P\,\{RP\!: \beta\}$. Their correctness is defined similarly to that of 
$\{\alpha\}\, P\, \{\beta\}$ above, except in condition (2) 
we only require $\sem{\Sp(\beta)}_{(S^{\prime},H^{\prime})} \subseteq A'$. Similarly, we also use the variant $HP$ to denote \emph{Heapless Postconditions}, with 
 no checks on the allocated set. 

\section{A Separation Logic with Frame Logic Inspired Semantics}
\label{sec:slfl}

We now define a separation logic \SLFL\  with \emph{alternate semantics} guided by frame logic semantics, and an FO-complete procedure for it.

\emph{An entire illustrated example of a program with \SLFL\ specifications, its translation to \FL\ and the verification condition generation for it is given in  Appendix~\ref{app:eval}.}

The primary design principle is that formulas have \emph{unique} heaplets under which they can hold. 
In particular, these heaplets are inspired by the support operator in FL. There are several salient features of our separation logic: (1) Given a global heap $g$ and a store $s$, any formula $\alpha$ will hold in at most one heaplet of $g$, which we call $\Supp(\alpha, s, g)$, (2) Heaplets are assumed to be arbitrary universes of locations (finite or infinite), avoiding incompleteness that finite models bring, and (3) We disallow recursive definitions that take parameters from background sorts. Recursive definitions are solely defined over the location sort, though the definitions themselves can utilize background sorts by dereferencing pointers on locations. This allows us to eventually translate formulae to the logic $\nplogic$ which has FO-complete reasoning. 
For example, rather than have a recursive definition $mem(x,i)$ that returns membership of an integer in a list pointed to by $x$, in our logic we would define $keys(x)$ that collects the keys stored in the list pointed to by $x$, and assert $i \in keys(x)$. Note that such definitions, translated to first-order logic, avoid quantification over the background sorts, which is crucial for completeness.
Finally, we do not generate verification conditions in separation logic itself, but rather translate separation logic annotations to frame logic, and utilize the VC generation for frame logic which we develop in the previous section. Generating precise verification conditions in separation logic often calls for magic-wand operations that we avoid by working with frame logic.

Our logic is powerful, subsuming several known precise separation logics in the literature~\cite{framelogictoplas2023,BerdineCalcagnoOHearn2005,ohearn04} (precise separation logics are also designed to have unique heaplets). In particular, our logic allows disjunction for spatial formulae which are not supported by prior precise separation logics. We also allow for recursive definitions that evaluate to background sorts that are constrained by arbitrary first-order theories such as arithmetic. 

While our alternate semantics adhere broadly to the design principles of separation logics, the semantics of disjunction turns out to be starkly different, in order to ensure unique heaplets. However, our logics provides an \textit{if-then-else} ($\ite$) construct, which has a more traditional semantics, and that can model splitting into cases.
Disjunction is often used for case analysis (e.g., definitions of linked lists that use disjunction to capture whether $x$ is $nil$ or non-$nil$, is really a case-split). 
From our experience in annotating programs in our evaluation, it turns out that disjunction can mostly be avoided (entirely avoided in our benchmarks), and replaced with such case analysis. 

We define first a base separation logic \SLFLb, where the semantic differences are clear, and a 
base separation logic \SLFLbd, with disjunction added. We then extend it to a more expressive logic (\SLFL) with inductive definitions and background sorts. All these logics can be translated to FL and hence, using our program verification scheme for frame logic,  we provide FO-complete program verification for them.

\subsection{Base logics \SLFLb\ and \SLFLbd}

Let us fix a set of locations $Loc$.
Let $Loc?$ denote $Loc \cup \{\textit{nil}\}$ where $\textit{nil}$ is a special symbol and $\textit{nil} \not \in Loc$.
Let us fix a countable set of variables $\textit{Var}$, and let $x, y, x_1, x_2, x', y'$ etc. range over $\textit{Var}$.
These variables will be used to intuitively model program variables as well as quantified variables.
Let us fix a finite set of pointers $\textit{Ptr}$, and let $f, f', g$, etc. range over $\textit{Ptr}$.

The syntax for the logic \SLFLb\ is given in Figure~\ref{fig:base-slfl-syntax}. Here, $x, y$ range over $\textit{Var}$ and $f \in \textit{Ptr}$. 
The logic supports the separating conjunction $\star$, conjunction $\wedge$, a \emph{weak conjunction} $\weakconj$,
a special if-then-else construct $\textit{ite}$, and guarded existential quantification. The logic
\SLFLbd\ is identical but includes the disjunction operator.

\mypara{Semantics:}
A \emph{store} is a function $s: \textit{Var} \rightarrow Loc?$. A \emph{heaplet} $h$ is a set of functions $\{h_f \mid f \in \textit{Ptr}\}$, where each $h_f: H \rightarrow Loc?$, where $H \subseteq Loc$ is the common domain for all the functions $h_f$. Note that we do not assume heaplets are finite.\footnote{In our logics, heaps can be infinite, and so can heaplets, like the heaplet of an infinite list.} Note that the domain of all the functions $h_f$ is common. We denote by $h(f)$ the function $h_f$ that $h$ defines, and by 
$dom(h)$ the set of locations $H$.
Also, for any $S \subseteq dom(h)$, let $h \proj S$ denote the heaplet where the domains of the pointer fields in $h$ are restricted to $S$. 
A heaplet $h'$ is a subheap/subheaplet of $h$ if there is some $S \subseteq dom(h)$ such that $h'=h \proj S$.
A global heap is a heaplet $h$ with $dom(h)=Loc$.

\begin{figure}
\[
\begin{array}{rl}
\text{\SLFLb~ formulas~} \alpha,\beta, \gamma& \coloneq
\delta \mid x \carrow{f} y 
\mid \alpha \star \beta 
\mid \alpha \wedge \beta 
\mid \alpha \weakconj \beta 
\mid ~~\textit{ite}(\gamma, \alpha, \beta)
~~\mid~~ \exists y. (x \carrow{f} y:   \alpha) \\
\text{\SLFLbd~ formulas~}  \alpha,\beta, \gamma& \coloneq  
\delta \mid x \carrow{f} y 
\mid \alpha \star \beta 
\mid \alpha \wedge \beta 
\mid \alpha \weakconj \beta
\mid \alpha \vee \beta
\mid ~~\textit{ite}(\gamma, \alpha, \beta)
~~\mid~~ \exists y. (x \carrow{f} y:   \alpha) \\
\text{H.I.  formulas~}  \delta & \coloneq  \textit{true}  
\mid \textit{false}  
\mid x=y 
\mid x\not = y
\mid x=\textit{nil}
\mid x\not =\textit{nil}
\end{array}
\]
\vspace{-0.4 cm}
\caption{Base Separation Logics with FL inspired semantics \SLFLb\ and \SLFLbd; H.I.:``heap-independent''}
\label{fig:base-slfl-syntax}
\vspace{-0.4 cm}
\end{figure}

\begin{figure}\small
\vspace{-0.4 cm}
\[
\begin{array}{rl}
\Supp(\delta, s, h) & =  \emptyset, \textit{~for~any~heap-independent~atomic~formula~} \delta \\
\Supp(x \carrow{f} y, s, h) & =  
\{s(x)\} \textit{~if~} s(x) \textit{~is~in~} dom(h)
\textit{~and~} \bot \textit{~otherwise}\\
\Supp(\alpha \oplus \beta, s, h) & = \Supp(\alpha, s, h) \cup \Supp(\beta, s, h), \textit{~where~} \oplus \in 
\{ \wedge, \star, \weakconj\}\\
\Supp(\textit{ite}(\gamma, \alpha, \beta), s, h) 
& =  
\Supp(\gamma, s, h) \cup
\Supp(\alpha, s, h)
\textit{,~if~} (s, h \!\proj\! \Supp(\gamma, s, h) \models \gamma \\
& \hspace*{4cm}\textit{~~~~~~~~~~~~~~~~~and~}(s,h\!\proj\! \Supp(\alpha, s, h)) \models \alpha \\
& = \Supp(\gamma, s, h) \cup\Supp(\beta, s, h), \textit{otherwise} \\
\Supp(\exists y. (x \carrow{f} y: \alpha),s,h)  & =    
  \bot \textit{~~if~} s(x) \not \in dom(h)\\
& = 
  \{s(x)\} \cup \Supp(\alpha, s[y\mapsto h(f)(s(x))], h) \textit{~otherwise}.
\end{array}
\]

\[
\begin{array}{rclcl}
(s,h) & \models & \delta & \textit{iff} &  dom(h)=\emptyset 
\textit{~and~} s \models \delta, \textit{~for~any~heap-independent~formula~} \delta \\

(s, h) & \models & x \carrow{f} y & \textit{iff} & dom(h)=\{s(x)\} \textit{~and~} h(f)(s(x))=y\\

(s, h) & \models & \alpha \star \beta & \textit{iff} &  \textit{there~exists~} h_1, h_2 \textit{~subheaplets~of~} h, \dom(h_1) = \Supp(\alpha, s, h), \\
&&&& ~ \dom(h_2) = \Supp(\beta, s, h), 
\dom(h_1) \cup \dom(h_2)=\dom(h), \\
&&&& ~~
\dom(h_1) \cap \dom(h_2)= \emptyset,  (s,h_1) \models \alpha \textit{~and~} (s,h_2) \models \beta\\

(s, h) & \models & \alpha \wedge \beta & \textit{iff} &  (s,h) \models \alpha \textit{~and~} (s,h)\models \beta\\

(s, h) & \models & \alpha \weakconj \beta & \textit{iff} &  \textit{there~exists~} h_1, h_2 \textit{~subheaplets~of~} h,
\dom(h_1) \cup \dom(h_2)=\dom(h), \\
&&&& ~~
(s,h_1) \models \alpha \textit{~and~} (s,h_2) \models \beta\\

(s, h) & \models & \textit{ite}(\gamma, \alpha, \beta) & \textit{iff} &  
\textit{there~exists~} h_1, h_2 \textit{~subheaplets~of~} h,
\dom(h_1) \cup \dom(h_2)=\dom(h),\\
&&&& ~~
\dom(h_1)=\Supp(\gamma, s, h_1), \textit{~and,~} \\
&&&& ~~
[~ (s, h_1)     \models \gamma \textit{~and~} \dom(h_2)=\Supp(\alpha, s, h_2) 
\textit{~and~} (s,h_2) \models \alpha)~] or\\
&&&& ~~ 
[~(s,h_1) \not \models \gamma \textit{~and~} 
\dom(h_2)=\Supp(\beta, s, h_2) 
\textit{~and~} (s,h_2) \models \beta)~]\\

(s, h) & \models & (\exists y. x \carrow{f} y: \alpha)
 & \textit{iff} &  
 x \in dom(h) \textit{~and~}
(s[y \mapsto h(f)(x)], h) \models \alpha
\end{array}
\]
~\\
\vspace{-0.1 cm}
{\bf Rules for disjunction:}
\[
\begin{array}{rl}
\Supp(\alpha \vee \beta, s, h) & = \Supp(\alpha, s, h) \cup \Supp(\beta, s, h)
\end{array}
\]
\[
\begin{array}{rclcl}
(s, h) & \models & \alpha \vee \beta & \textit{iff} &  \textit{there~exists~} h_1, h_2 \textit{~subheaplets~of~} h,
\dom(h_1) \cup \dom(h_2)=\dom(h),\\
&&&& ~~
\dom(h_1)=\Supp(\alpha, s, h_1),
\dom(h_2)=\Supp(\beta, s, h_2), 
and,\\
&&&& ~~ ( (s,h_1) \models \alpha \textit{~or~} (s,h_2) \models \beta)
\end{array}
\]
\vspace{-0.4 cm}
\caption{Definition of supports and semantics for base separation logic \SLFLb, mutually defined. Definitions for \SLFLbd\ include the rules for disjunction. Set operations on supports that evaluate to $\bot$ evaluate to $\bot$.}
\label{fig:slflbsemantics}
\vspace{-0.5 cm}
\end{figure}

 The semantics of the base logics are defined in Figure~\ref{fig:slflbsemantics}. Formulas $\alpha$ are interpreted on a pair $(s, h)$, where $s$ is a store and $h$ is a heaplet, as in traditional separation logic. However, the semantics are mutually recursively defined along with a definition of a \emph{support operator} $\Supp$, where $\Supp$ maps each formula to a subset of locations, or to $\bot$ that stands for it being undefined. Intuitively, the support of a formula is the subset of locations of $h$ that the evaluation of a formula in the heaplet $h$ depends upon, and is $\bot$ if $\alpha$ cannot be evaluated within the heaplet $h$.

As we shall prove soon (Lemma~\ref{lemma52}), the definitions are designed so that whenever $(s, h) \models \alpha$ holds, it will be the case that $dom(h) =\Supp(\alpha, s, h)$, i.e., $\alpha$ can hold only on heaplets where the support map evaluates to the entire heaplet. Assuming this, let us go through the semantics. 
 
The semantics for heap-independent formulas requires heaplets to be empty. 
This allows us to ensure that the heaplet is determined, rather than in traditional separation logic where such formulas hold over any heaplet. Note that we don't have an $\textit{emp}$ formula, as $\textit{true}$ has the same semantics.
The semantics of
$x \carrow{f} y$, $\alpha \star \beta$, and $\alpha \wedge \beta$ are similar to standard separation logic semantics~\cite{reynolds02}. Though the definition for $\star$ says any way to split the two heaplets into disjoint parts that satisfy the two subformulas is fine, we know inductively that there is in fact only one way to effect this split.
The semantics of conjunction and weak conjunction are similar (the weak conjunct doesn't require the two heaplets to be the same). 

The $\ite$ construct has an involved semantics. Intuitively, to evaluate $\ite(\gamma, \alpha, \beta)$, we would evaluate $\gamma$ \emph{on the heaplet it requires}, and depending on whether $\gamma$ holds or not, evaluate $\alpha$ on the heaplet \emph{it requires}. Note that the heaplet required by any formula $\alpha$ is the subheaplet $h'$ such that the support of $\alpha$ computed with respect to $h'$ is precisely $dom(h')$. 
The definition of support is carefully designed so that such a subheaplet, if it exists, is \emph{unique} (proven below in Lemma~\ref{lemma51}). 

The definition of supports of formulas with respect to a store and heaplet $\Supp(\alpha, s, h)$, are natural definitions, but note that the definition of $\ite$ refers to the semantics of heaplets in order to split cases on whether $\gamma$ is satisfied by a subheaplet or not. 

Intuitively, while $(s,h) \models \alpha$ captures whether a formula holds on a heaplet or not, $\Supp(\alpha, s, h)$ being $\bot$ or not captures whether the heaplet is large enough to evaluate $\alpha$, and is crucial for the semantics of the \emph{else} branch of $\ite$ expressions.
 
\paragraph{Disjunction:} 
The semantics of disjunctive formulas $\alpha \vee \beta$ is quite different
from traditional semantics. Traditional semantics would declare this to hold if either $\alpha$ or $\beta$ holds. But in a larger heaplet $h$, there could be both a heaplet where $\alpha$ holds and one where $\beta$ holds, and hence this goes against our desire to have unique heaplets for formulae. In our semantics, the support of such $\alpha \vee \beta$ is considered to be the \emph{union} of the supports of $\alpha$ and $\beta$, and 
$\alpha \vee \beta$ holds in this combined heaplet. 

The logic \SLFLbd\ includes disjunction, but as we mentioned earlier, we did not see a true need for disjunction in our experimental evaluation as most uses are expressible using $\ite$ expressions.

\paragraph{Properties:} 
The supports of formulas have several properties worth noting. 
Let $S$ be the support of a formula with respect to a store and a heaplet $h$, with $S \not = \bot$. 
First, $S$ will be 
a subset of $dom(h)$. Second, consider a heaplet $h'$ that agrees with $h$ on $S$. Then
the support of the formula with respect to $h'$ will be $S$ as well. Third, the support of the formula with respect to $h$ restricted to $S$ will be $S$ itself. Finally, there is at most one subheaplet $h'$ of $h$ such that the support of the formula in $h'$ is $dom(h')$ itself. The following lemma formalizes this (see Appendix~\ref{app:baseseplogic} for a proof gist).

\begin{lemma}\label{lemma51}
    Let $s$ be a store and $\alpha$ be a \SLFLbd\ formula, $h$ be a heaplet, and let $S=\Supp(\alpha, s, h)$, and $S \not = \bot$.
    \begin{enumerate}
        \item $S \subseteq dom(h)$.
        \item Let $h'$ be a heaplet such that $S \subseteq dom(h')$ and 
        $h' \proj S = h \proj S$. Then  $\Supp(\alpha, s, h')=S$.
         \item $\Supp(\alpha, s, h \proj S)= S$.
        \item Let $h_1$ and $h_2$ be two sub-heaplets of $h$ and assume $\Supp(\alpha,s,h_1)=dom(h_1)$ and $\Supp(\alpha,s,h_2)=dom(h_2)$. Then $dom(h_1)=dom(h_2)$. \qed
    \end{enumerate}
\end{lemma}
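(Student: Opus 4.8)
I would prove all four parts by a single structural induction on $\alpha$, carried out \emph{simultaneously} with the companion claim of Lemma~\ref{lemma52} (that $(s,h)\models\alpha$ forces $dom(h)=\Supp(\alpha,s,h)$). The simultaneous treatment is essentially forced: the support clause for $\textit{ite}$ in Figure~\ref{fig:slflbsemantics} is a case split on the \emph{satisfaction} relation, so one cannot reason about $\Supp$ of an $\textit{ite}$-formula without, at the same level of the induction, reasoning about $\models$ for its guard and branches. Throughout I will use the convention that any set operation involving $\bot$ yields $\bot$, so that $S=\Supp(\alpha,s,h)\neq\bot$ guarantees that the supports of the immediate subformulas that actually contribute to $S$ are themselves genuine sets.

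The first observation is that parts (3) and (4) are purely formal consequences of parts (1) and (2), so the real content lies in (1) and (2). For (3), instantiate (2) with $h':=h\proj S$: the hypotheses $S\subseteq dom(h\proj S)=S$ and $(h\proj S)\proj S=h\proj S$ hold trivially, giving $\Supp(\alpha,s,h\proj S)=S$. For (4), given sub-heaplets $h_1,h_2$ of $h$ with $\Supp(\alpha,s,h_i)=dom(h_i)=:S_i$, put $h^\ast:=h\proj(S_1\cup S_2)$; since $h_i$ is a sub-heaplet of $h$ we have $h_i=h\proj S_i$, hence $S_1\subseteq dom(h^\ast)$ and $h^\ast\proj S_1=h\proj S_1=h_1\proj S_1$, so part (2) applied to the heaplet $h_1$ (whose support is $S_1\neq\bot$) yields $\Supp(\alpha,s,h^\ast)=S_1$; symmetrically $\Supp(\alpha,s,h^\ast)=S_2$, and therefore $S_1=S_2$. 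Part (1) is itself a short standalone induction needing no reference to $\models$: $\Supp$ of a constant or heap-independent formula is $\emptyset$; $\Supp(x\carrow{f}y,s,h)$ is either $\bot$ or $\{s(x)\}$ with $s(x)\in dom(h)$; and every remaining clause builds $\Supp$ as a union of supports of immediate subformulas (for $\textit{ite}$, two of $\{\gamma,\alpha,\beta\}$; for the guarded $\exists$, adding the guard location $s(x)\in dom(h)$), so the bound is inherited from the inductive hypothesis independently of which branch the case split selects.

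For part (2) I induct on $\alpha$. The base cases are immediate: for heap-independent $\delta$, $\Supp(\delta,s,h')=\emptyset=S$ for every $h'$; for $x\carrow{f}y$, $S\neq\bot$ forces $S=\{s(x)\}$, and $S\subseteq dom(h')$ gives $s(x)\in dom(h')$, so $\Supp(x\carrow{f}y,s,h')=\{s(x)\}=S$. The clauses for $\star,\wedge,\weakconj$ and (in \SLFLbd) $\vee$ are uniform, since $\Supp(\alpha_1\oplus\alpha_2,s,h)=\Supp(\alpha_1,s,h)\cup\Supp(\alpha_2,s,h)$: $S\neq\bot$ makes each $S_i:=\Supp(\alpha_i,s,h)$ a genuine set with $S_i\subseteq S$, and from $h'\proj S=h\proj S$ we get $S_i\subseteq dom(h')$ and $h'\proj S_i=h\proj S_i$, so the inductive hypothesis gives $\Supp(\alpha_i,s,h')=S_i$ and hence $\Supp(\alpha_1\oplus\alpha_2,s,h')=S_1\cup S_2=S$. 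The guarded existential is analogous: $s(x)\in S\subseteq dom(h')$ together with $h'\proj S=h\proj S$ yields $h'(f)(s(x))=h(f)(s(x))=:u$, after which the inductive hypothesis applied to $\alpha$ under the updated store $s[y\mapsto u]$ (whose support is contained in $S$) closes the case.

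The $\textit{ite}$ case is the crux and I expect it to be the main obstacle. Writing $\phi=\textit{ite}(\gamma,\alpha,\beta)$ and $S_\gamma=\Supp(\gamma,s,h)$, the rule tells us that $S$ is $S_\gamma\cup\Supp(\alpha,s,h)$ or $S_\gamma\cup\Supp(\beta,s,h)$ according to the guard test, and $S\neq\bot$ makes $S_\gamma$ and the support of the \emph{selected} branch genuine sets contained in $S$. By the inductive hypothesis (part (2) on $\gamma$, and part (3), which we have derived from (2)) we get $\Supp(\gamma,s,h')=S_\gamma$ and, crucially, $h'\proj S_\gamma=h\proj S_\gamma$; thus $h$ and $h'$ present \emph{literally the same} restricted heaplet to $\gamma$, so the guard test evaluates identically over $h$ and $h'$ — here the companion claim of Lemma~\ref{lemma52} is used to align $\models\gamma$ on these heaplets. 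Running the same argument on the selected branch shows $h$ and $h'$ present it the same restricted heaplet, so the $\textit{ite}$ test picks the same branch under $h'$ as under $h$, and $\Supp(\phi,s,h')$ equals the same union, namely $S$. The delicate points are precisely the $\bot$-bookkeeping (ensuring $S\neq\bot$ propagates only to the subformulas whose supports genuinely contribute to $S$) and the fact that the $\textit{ite}$ clause of part (2) cannot be decoupled from the $\textit{ite}$ clause of Lemma~\ref{lemma52}, which is the reason the whole argument must be organized as one simultaneous induction. Part (4), as noted above, then delivers the uniqueness statement.
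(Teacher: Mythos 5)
Your proposal is correct and follows essentially the same route as the paper's proof: part (1) by a short standalone induction, part (2) as the main structural induction, and parts (3) and (4) as formal consequences of (2) (the paper gets (4) even more directly than you do, by applying (2) with $h'=h$ itself to conclude $\Supp(\alpha,s,h)=dom(h_1)$ and $\Supp(\alpha,s,h)=dom(h_2)$, with no need for your auxiliary heaplet $h^\ast$). Your case analysis for the base cases, the $\oplus$-connectives, and the guarded existential matches the paper's.

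The one place where your framing diverges is the claim that the induction is ``essentially forced'' to be carried out simultaneously with Lemma~\ref{lemma52}. It is not, and the paper does not do this. In the $\textit{ite}$ case, once the inductive hypothesis gives $\Supp(\gamma,s,h')=\Supp(\gamma,s,h)=S_\gamma$ and one observes $h'\proj S_\gamma=h\proj S_\gamma$ (because $S_\gamma\subseteq S$ and $h'\proj S=h\proj S$), the two guard judgments $(s,h\proj S_\gamma)\models\gamma$ and $(s,h'\proj S_\gamma)\models\gamma$ are judgments about the \emph{same} store--heaplet pair, so they coincide by definition of $\models$ being a function of $(s,h,\gamma)$; no appeal to the statement of Lemma~\ref{lemma52} is needed to ``align'' them. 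Bundling in Lemma~\ref{lemma52} does no harm, but the justification you give for it is a misdiagnosis. Separately, note that both you and the paper are slightly loose in the $\textit{ite}$ case: the test that selects the branch also involves $(s,h\proj\Supp(\alpha,s,h))\models\alpha$ for the \emph{then}-branch $\alpha$, and when the \emph{else}-branch is selected, $\Supp(\alpha,s,h)$ need not be a subset of $S$ (it may even be $\bot$), so the inductive hypothesis cannot be invoked on it directly; a fully rigorous treatment must argue this sub-case explicitly rather than only ``running the same argument on the selected branch.''
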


The above semantics ensures a crucial property--- consider any store $s$, heap $h$, and formula $\alpha$, then there is \emph{at most} one sub-heaplet $h'$ of $h$ that satisfies $\alpha$ (namely $h\proj \Supp(\alpha, s, h)$). This property does not hold for standard separation logic semantics (see Appendix~\ref{app:baseseplogic} for a proof gist).

\begin{lemma}\label{lemma52}
 Let $s$ be a store and $h$ a heaplet, and $\alpha$ be an \SLFLbd\ formula.
\begin{enumerate}    
    \item  If $(s, h) \models \alpha$, then $\Supp(\alpha,s,h) = \dom(h)$.
    \item  There is at most one subheaplet $h'$ of $h$ such that $(s,h') \models \alpha$. \qed
\end{enumerate}
\end{lemma}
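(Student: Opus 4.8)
The plan is to prove part~(1) by structural induction on $\alpha$ (with the induction hypothesis universally quantified over the store $s$ and the heaplet $h$), and then obtain part~(2) as a short consequence of part~(1) and Lemma~\ref{lemma51}(4). The single auxiliary fact I would extract from Lemma~\ref{lemma51} is a \emph{lifting principle}: if $h_0$ is a subheaplet of $h$ and $\Supp(\alpha,s,h_0)\neq\bot$, then $\Supp(\alpha,s,h)=\Supp(\alpha,s,h_0)$. This follows by setting $S=\Supp(\alpha,s,h_0)$: by Lemma~\ref{lemma51}(1) we have $S\subseteq\dom(h_0)\subseteq\dom(h)$, and since $h_0=h\proj\dom(h_0)$ we get $h\proj S=h_0\proj S$, so Lemma~\ref{lemma51}(2) (applied with its ``$h$'' instantiated to $h_0$ and its ``$h'$'' to $h$) yields $\Supp(\alpha,s,h)=S$. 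I would also remark that the mutual recursion between $\Supp$ and $\models$ is well-founded for this induction: although the $\ite$ clause of $\Supp$ refers to $\models$, it does so only on the strict subformulas $\gamma$ and $\alpha$, so structural induction on the formula goes through.

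For part~(1) I would run through the clauses of Figure~\ref{fig:slflbsemantics}. The atomic cases are immediate: a heap-independent $\delta$ forces $\dom(h)=\emptyset=\Supp(\delta,s,h)$, and $x\carrow{f}y$ forces $\dom(h)=\{s(x)\}=\Supp(x\carrow{f}y,s,h)$. For $\alpha\star\beta$ the semantics directly supplies $h_1,h_2$ with $\dom(h_1)=\Supp(\alpha,s,h)$, $\dom(h_2)=\Supp(\beta,s,h)$, and $\dom(h_1)\cup\dom(h_2)=\dom(h)$, so $\Supp(\alpha\star\beta,s,h)=\dom(h)$ with no appeal to the hypothesis. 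For $\alpha\wedge\beta$, the semantics gives $(s,h)\models\alpha$ and $(s,h)\models\beta$, the hypothesis gives $\Supp(\alpha,s,h)=\Supp(\beta,s,h)=\dom(h)$, and the $\Supp$-equation for $\wedge$ finishes it. For $\alpha\weakconj\beta$, $\alpha\vee\beta$, and the guarded quantifier $\exists y.(x\carrow{f}y:\alpha)$, I would use the lifting principle: the semantics produces subheaplets on which the relevant subformulas hold (and, for $\weakconj$/$\vee$, whose domains already equal the corresponding support-of-itself); the hypothesis upgrades ``holds on $h_i$'' to ``$\Supp(\cdot,s,h_i)=\dom(h_i)$'' where needed, lifting replaces $\Supp(\cdot,s,h_i)$ by $\Supp(\cdot,s,h)$, and the $\Supp$-equation then returns $\dom(h)$. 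The quantifier case additionally uses that $s(x)\in\dom(h)$ so that $\{s(x)\}\cup\dom(h)=\dom(h)$.

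The delicate case, and the one I expect to be the main obstacle, is $\ite(\gamma,\alpha,\beta)$, because the $\Supp$-equation for $\ite$ is phrased in terms of $\models$ on certain \emph{restrictions of $h$} rather than on the witness subheaplets produced by the semantics, so one must argue these coincide. The semantics gives subheaplets $h_1,h_2$ of $h$ with $\dom(h_1)\cup\dom(h_2)=\dom(h)$ and $\dom(h_1)=\Supp(\gamma,s,h_1)$. By lifting, $\Supp(\gamma,s,h)=\dom(h_1)$, hence $h\proj\Supp(\gamma,s,h)=h_1$, so the test ``$(s,h\proj\Supp(\gamma,s,h))\models\gamma$'' in the $\Supp$-equation is literally ``$(s,h_1)\models\gamma$'', the same predicate on which the semantics splits. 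If $(s,h_1)\models\gamma$, the semantics further gives $(s,h_2)\models\alpha$ with $\dom(h_2)=\Supp(\alpha,s,h_2)$; lifting gives $\Supp(\alpha,s,h)=\dom(h_2)$, so $h\proj\Supp(\alpha,s,h)=h_2$ and $(s,h\proj\Supp(\alpha,s,h))\models\alpha$, whence the first clause of the $\Supp$-equation applies and yields $\Supp(\gamma,s,h)\cup\Supp(\alpha,s,h)=\dom(h_1)\cup\dom(h_2)=\dom(h)$. If $(s,h_1)\not\models\gamma$, the first clause's condition fails, the $\Supp$-equation returns $\Supp(\gamma,s,h)\cup\Supp(\beta,s,h)$, and lifting from $h_2$ (where $(s,h_2)\models\beta$, $\dom(h_2)=\Supp(\beta,s,h_2)$) again gives $\dom(h)$. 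This completes part~(1).

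For part~(2): if $(s,h_1')\models\alpha$ and $(s,h_2')\models\alpha$ with $h_1',h_2'$ subheaplets of $h$, then part~(1) gives $\Supp(\alpha,s,h_1')=\dom(h_1')$ and $\Supp(\alpha,s,h_2')=\dom(h_2')$, so Lemma~\ref{lemma51}(4) gives $\dom(h_1')=\dom(h_2')$; since a subheaplet of $h$ is determined by its domain (each $h_i'=h\proj\dom(h_i')$), we conclude $h_1'=h_2'$.
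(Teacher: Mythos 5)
Your proposal is correct and follows essentially the same route as the paper's proof: structural induction for part~(1), using Lemma~\ref{lemma51}(1)--(2) to lift supports computed on the witness subheaplets up to $h$ (your ``lifting principle'' is exactly the paper's appeal to Lemma~\ref{lemma51} in the $\star$ case), and part~(2) obtained from part~(1) together with Lemma~\ref{lemma51}(4). Your treatment is simply more explicit where the paper says the remaining cases are ``similar'' — in particular your careful handling of the $\ite$ case, showing that $h\proj\Supp(\gamma,s,h)$ coincides with the semantic witness $h_1$, fills in a detail the paper leaves to the reader.
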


\mypara{Translation to Frame Logic}

\begin{figure}\small
\noindent\begin{minipage}{.5\linewidth}
\begin{align*}
\Pi(\gamma) &= \gamma, \textit{~for~H.I.~atomic~formula~} \gamma \\
\Pi(\alpha \star \beta) &= \Pi(\alpha) \wedge  \Pi(\beta) \wedge \Sp(\Pi(\alpha)) \cap \Sp(\Pi(\beta)) = \emptyset\\
\Pi(\alpha \wedge \beta) &= \Pi(\alpha) \wedge  \Pi(\beta) \wedge \Sp(\Pi(\alpha)) =  ~~\Sp(\Pi(\beta))\\
\Pi(\alpha \weakconj \beta) &= \Pi(\alpha) \wedge  \Pi(\beta)
\end{align*}
\end{minipage}%
\begin{minipage}{.5\linewidth}
\begin{align*}
\Pi(x \carrow{f} y) &= f(x)=y  \\ 
\Pi(\alpha \vee \beta) &= \Pi(\alpha) \vee  \Pi(\beta)\\
\Pi( \textit{ite}(\gamma, \alpha, \beta) ) &=  \textit{ite}(\gamma, \Pi(\alpha), \Pi(\beta)) \\
\Pi(\exists y. (x \carrow{f} y: \alpha)) &= \exists y\!: y = f(x).\, \Pi(\alpha)
\end{align*}
\end{minipage}
\vspace{-0.2cm}
\caption{Translation from \SLFLbd to Frame Logic}
\label{fig:sltofl}
\vspace{-0.5 cm}
\end{figure}

\SLFLbd formulas, with their semantics using determined heaplets/supports, can be readily translated to frame logic. The translation is given in Figure~\ref{fig:sltofl}, and is simple and natural.
The translation preserves both truthhood as well as heaplet semantics in the following sense: 
\begin{lemma}\label{lemma53}
    Let $g$ be a global heap (a heaplet with domain $Loc$) and $s$ be a store. For frame logic formulas, we interpret the store as an FO interpretation of variables.
    Let $\alpha$ be an \SLFLbd\ formula.
    Then 
    \begin{enumerate}
        \item $g \models_s \Pi(\alpha)$ iff there exists a heaplet $h$ of $g$ such that
        $(s,h) \models \alpha$.
        \item If $g \models_s \Pi(\alpha)$, then
         $\Supp(\alpha, s, g)$ is equal to the value of $\Sp(\Pi(\alpha))$ in $g$. \qed
    \end{enumerate}
\end{lemma}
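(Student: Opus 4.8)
The plan is to prove parts (1) and (2) simultaneously by structural induction on the \SLFLbd\ formula $\alpha$, with the store $s$ universally quantified so that the hypothesis can be reapplied under a modified store $s[y\mapsto\cdot]$ in the guarded-existential case. Two preliminary observations set things up. First, on a global heap the support is never $\bot$: the only way $\Supp(\cdot,s,h)$ returns $\bot$ is through a dereference $s(x)\notin\dom(h)$, and on $g$ every location lies in $\dom(g)$ while every recursive call of $\Supp(\cdot,s,g)$ is taken with respect to $g$ itself (the side conditions inside the $\textit{ite}$ and disjunction clauses only query $\models$ on subheaplets, they do not recompute $\Supp$ there), so a trivial sub-induction gives $\Supp(\alpha,s,g)\neq\bot$. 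Second, part (1) is equivalent to the sharper statement $(1')$: $g\models_s\Pi(\alpha)$ iff $(s,\,g\proj\Supp(\alpha,s,g))\models\alpha$, i.e.\ iff the \emph{canonical} subheaplet $g\proj\Supp(\alpha,s,g)$ satisfies $\alpha$. Indeed $(1')\Rightarrow(1)$ is immediate since $g\proj\Supp(\alpha,s,g)$ is a heaplet of $g$, and $(1)\Rightarrow(1')$ follows because if any heaplet $h$ of $g$ satisfies $\alpha$, then Lemma~\ref{lemma52}(1) gives $\dom(h)=\Supp(\alpha,s,h)$ and Lemma~\ref{lemma51}(2) applied with $h'=g$ (using $g\proj\dom(h)=h$) forces $\Supp(\alpha,s,g)=\dom(h)$, so $h$ is that canonical subheaplet, uniquely by Lemma~\ref{lemma52}(2). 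I would carry $(1')$ together with part (2) --- $\sem{\Sp(\Pi(\alpha))}_{(g,s)}=\Supp(\alpha,s,g)$ whenever $g\models_s\Pi(\alpha)$ --- through the induction.

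The base cases are direct unfoldings: $\Pi$ is the identity on heap-independent atoms $\delta$ and $\sem{\Sp(\delta)}_{(g,s)}=\emptyset=\Supp(\delta,s,g)$, while $\Pi(x\carrow{f}y)=(f(x)=y)$ has support $\{s(x)\}=\Supp(x\carrow{f}y,s,g)$ and both sides of $(1')$ say $g(f)(s(x))=s(y)$. For $\oplus\in\{\star,\wedge,\weakconj\}$, $\Pi$ distributes as a conjunction, and the extra equational conjunct it introduces contributes nothing new to the support: by the $\Sp(\Sp(\cdot))$, $\Sp(t_1{=}t_2)$, and non-mutable-function rules of Figure~\ref{fig:sp-semantics}, the support of $\Sp(\Pi(\alpha))\cap\Sp(\Pi(\beta))=\emptyset$ (resp.\ of $\Sp(\Pi(\alpha))=\Sp(\Pi(\beta))$) is exactly $\sem{\Sp(\Pi(\alpha))}_{(g,s)}\cup\sem{\Sp(\Pi(\beta))}_{(g,s)}$, so $\sem{\Sp(\Pi(\alpha\oplus\beta))}_{(g,s)}$ equals that union, matching $\Supp(\alpha\oplus\beta,s,g)=\Supp(\alpha,s,g)\cup\Supp(\beta,s,g)$ via the hypothesis. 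For $(1')$ one invokes the hypothesis on $\alpha$ and $\beta$ and uses Lemma~\ref{lemma51}(2,3) to see that restricting $g$ to $\Supp(\alpha,s,g)\cup\Supp(\beta,s,g)$ leaves $\Supp(\alpha,s,\cdot)$ and $\Supp(\beta,s,\cdot)$ unchanged, so the subheaplet(s) demanded by the SL semantics of $\oplus$ (a disjoint covering pair for $\star$; the full heaplet twice for $\wedge$; an arbitrary covering pair for $\weakconj$) are pinned down to $g\proj\Supp(\alpha,s,g)$ and $g\proj\Supp(\beta,s,g)$, and $\alpha\oplus\beta$ holds on $g\proj(\Supp(\alpha,s,g)\cup\Supp(\beta,s,g))$ exactly when the translation's side condition (intersection empty for $\star$; the two supports coincide for $\wedge$; none for $\weakconj$) holds and $g\models_s\Pi(\alpha)$, $g\models_s\Pi(\beta)$. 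The disjunction case of \SLFLbd\ is analogous: Lemma~\ref{lemma51}(3,4) pin the two subheaplets in the semantics of $\vee$ down to the canonical heaplets of $\alpha$ and of $\beta$, and the hypothesis turns ``$(s,h_1)\models\alpha$ or $(s,h_2)\models\beta$'' into $g\models_s\Pi(\alpha)\vee\Pi(\beta)$.

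The $\textit{ite}$ case is handled by a case split on whether $g\models_s\gamma$, with the hypothesis applied to $\gamma$, $\alpha$, and $\beta$. The FL semantics of $\textit{ite}(\gamma,\Pi(\alpha),\Pi(\beta))$ branches on $g\models_s\gamma$, while both the SL satisfaction and the SL support of $\textit{ite}(\gamma,\alpha,\beta)$ branch on whether the canonical $\gamma$-heaplet satisfies $\gamma$ (and the support additionally on whether the canonical $\alpha$-heaplet satisfies $\alpha$); the hypothesis makes these branch conditions coincide, and Lemma~\ref{lemma51}(3,4) together with Lemma~\ref{lemma52} force the witness subheaplets to be the canonical heaplets of $\gamma$ and of $\alpha$ (or $\beta$). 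Note that part (2) only asserts the support equality when $g\models_s\Pi(\textit{ite}(\gamma,\alpha,\beta))$, which is exactly the situation in which both sides read off the same branch, so there is no clash with the fact that the FL support is total whereas the SL support of a dead branch may differ.

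I expect the guarded existential to be the main obstacle. Here $\Pi$ sends $\exists y.(x\carrow{f}y:\alpha)$ to $\exists y\!:y=f(x).\,\Pi(\alpha)$, and the support side goes through, since both $\Supp(\exists y.(x\carrow{f}y:\alpha),s,g)$ and $\sem{\Sp(\exists y\!:y=f(x).\Pi(\alpha))}_{(g,s)}$ prepend $\{s(x)\}$ to $\Supp(\alpha,s',g)$ and to $\sem{\Sp(\Pi(\alpha))}_{(g,s')}$ respectively, where $s'=s[y\mapsto g(f)(s(x))]$, and these agree by the hypothesis on $\alpha$ under $s'$. The delicate point is the satisfaction equivalence: one must show that $g\models_{s'}\Pi(\alpha)$ holds iff the canonical heaplet of the whole formula --- which is $g\proj(\{s(x)\}\cup\Supp(\alpha,s',g))$ --- satisfies $\exists y.(x\carrow{f}y:\alpha)$, i.e.\ that evaluating the body $\alpha$ on this heaplet, which carries the extra cell $s(x)$, coincides with evaluating it on its own canonical heaplet $g\proj\Supp(\alpha,s',g)$. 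Reconciling the unconditional inclusion of $\{s(x)\}$ in the support with the SL requirement (Lemma~\ref{lemma52}(1)) that a heaplet satisfying the body have domain \emph{exactly} the body's support is where the argument's real content lies, and it must rest on a careful reading of how the guarded quantifier exposes the dereferenced cell $h(f)(s(x))$; the remaining cases are, by contrast, routine once the strengthened hypothesis $(1')$ is in place.
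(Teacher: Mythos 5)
The paper itself states Lemma~\ref{lemma53} without proof (neither Appendix~\ref{app:baseseplogic} nor Appendix~\ref{app:seplogic} supplies one), so there is no official argument to compare against; your setup --- a simultaneous structural induction on the strengthened claim $(1')$ together with $(2)$, using Lemma~\ref{lemma51}(2)--(4) and Lemma~\ref{lemma52} to pin every witness subheaplet down to the canonical one $g\proj\Supp(\alpha,s,g)$ --- is surely the intended route, and your treatment of the atoms and of $\star$, $\wedge$, $\weakconj$ is correct. But two steps do not go through as written. First, the case you flag as ``the real content'' is a genuine failure point, not merely a delicate one: in the semantics of $\exists y.(x\carrow{f}y:\alpha)$ the body is evaluated on the \emph{entire} heaplet $h$, so by Lemma~\ref{lemma52}(1) satisfaction forces $\dom(h)=\Supp(\alpha,s',h)$ while the clause separately demands $s(x)\in\dom(h)$; hence the formula is satisfiable only when $s(x)\in\Supp(\alpha,s',g)$. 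The FL side $\exists y\!:y=f(x).\,\Pi(\alpha)$ imposes no such constraint (take $\alpha=\textit{true}$: the \SLFLbd\ formula is unsatisfiable on every heaplet, yet its translation holds on every global heap). So the biconditional in part (1) needs either an extra well-formedness hypothesis on the body or an argument you have not supplied; deferring it to ``a careful reading'' leaves the central case open.

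Second, your induction invariant for part (2) is too weak for disjunction. Both $\Supp(\alpha\vee\beta,s,g)$ and $\sem{\Sp(\Pi(\alpha)\vee\Pi(\beta))}_g$ are unions over \emph{both} disjuncts, but if only $\Pi(\alpha)$ holds, the conditional induction hypothesis gives the support equality for $\alpha$ alone; you still need $\Supp(\beta,s,g)=\sem{\Sp(\Pi(\beta))}_g$ for the unsatisfied disjunct. The obvious fix --- strengthening (2) to hold unconditionally --- collides with the very observation you make in the $\ite$ case: when $\gamma$ holds on its heaplet but $\alpha$ fails on its own, $\Supp(\ite(\gamma,\alpha,\beta),s,g)$ reads off the $\beta$ branch while $\Sp(\ite(\gamma,\Pi(\alpha),\Pi(\beta)))$ reads off the $\alpha$ branch, so the unconditional version is false. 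You must resolve this tension (a more refined invariant, or a restriction to the $\vee$-free fragment \SLFLb) before the induction closes. A smaller nit: your preliminary claim that $\Supp(\cdot,s,g)$ is never $\bot$ on a global heap overlooks $s(x)=\textit{nil}$, since $\textit{nil}\notin Loc=\dom(g)$.
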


The above allows us to build automatic verification procedures for programs annotated with \SLFLb and \SLFLbd\ formulas. Annotations with contracts using separation logic formulas can be translated to Frame Logic, with the understanding that they define implicit heaplets which is the support of the corresponding Frame Logic formula (see Appendix~\ref{app:eval} for an example).

\subsection{Extending the base logic to background sorts and recursive definitions}
We now extend the base logic to a more powerful logic that (a) allows recursive definitions of predicates and functions, and (b) incorporates background sorts (like arithmetic), pointer fields to such sorts,  and recursively defined functions that evaluate to the background sort.

\smallskip
Let us fix background sorts, with accompanying functions and relations, $\mathcal{G}=\{g_1, \ldots, \}$ and $\mathcal{P} = \{p_1, \ldots, \}$. Let us also fix a set of symbols for functions and relations, $\mathcal{F}$ and
$\mathcal{R}$, respectively, which we will use to define new \emph{recursively-defined} functions and relations.
Heaplets are extended to include pointers that map locations to background sorts.

\begin{figure}
\vspace{-0.5cm}
\[
\begin{array}{rrcl}
\text{SL-FL Formulas} & \alpha,\alpha',\beta & \coloneq & 
\textit{true}  
\mid \textit{false}  
\mid x=y 
\mid x\not = y
\mid x=\textit{nil}
\mid x\not =\textit{nil}
\mid x \carrow{f} y \\
& & & 
\!\!\!\!\!\!\!\!\!\!\!\!\!\!\!\!\!\!\!\!\!\mid p(\overline{t})
\mid \alpha \star \beta 
\mid \alpha \wedge \beta 
\mid \alpha \weakconj \beta
\mid \ \alpha \vee \beta 
\mid \textit{ite}(\alpha', \alpha, \beta)
\mid \exists y. (x \carrow{f} y: \alpha) 
\mid R(\overline{x}) \\

\text{Terms} & t, t', t_i & \coloneq & 
x \mid t.f \mid g(\overline{t}) \mid \textit{ite}(\alpha, t, t') \mid F(\overline{t})\\
\textit{Rec. Defs:} & R(\overline{x}) &  :=_\textit{lfp} & \rho_R(\overline{x}, \mathcal{R}, \mathcal{F}), \textit{~for~each~} R \in \mathcal{R} \\
        &  F(\overline{x}) &  :=_\textit{lfp} & \mu_F(\overline{x}, \mathcal{R}, \mathcal{F}), \textit{~for~each~} F \in \mathcal{F} 
\end{array}
\]
\vspace{-0.5cm}
\caption{\SLFL: Syntax of full Separation Logic with Frame-Logic inspired semantics. The guards in $\textit{ite}$ expressions should not mention recursively defined relations $R \in \mathcal{R}$}
\label{fig:slfl}
\vspace{-0.5 cm}
\end{figure}

The syntax of \SLFL\ is given in Figure~\ref{fig:slfl}. The syntax for \SLFL\ formulas is similar to the base logic, but we allow them to evaluate predicates over terms of background sorts ($p(\overline{t})$), and allow evaluation of recursively defined predicates ($R(\overline{x})$). 
The syntax for terms allows dereferencing locations (resulting in either locations or background sort),
computing functions over background sorts (like $+$ over integers), if-then-else constructs, or recursively defined functions that return values of background sort (like "Keys" of a location $x$ pointing to a list). 

Recursively defined predicates ($R$) and functions ($F$) are defined with parameters of type location \emph{only}. This is a crucial  restriction as we will, upon translation to Frame Logic and later to first-order logic, treat these definitions as universally quantified equations, and it's important for FO-completeness using natural proofs that these quantifications are over the foreground sort of locations only. Definitions of predicates ($R$) and functions ($F$) are mutually recursive and are 
arbitrary \SLFL\ formulae $\rho$ and terms over background sorts $\mu$ that can mention $\mathcal{R}$ and $\mathcal{F}$.

The semantics of the logic extend that of the base logic in the natural way, and we skip the formal semantics here (see Appendix~\ref{app:seplogic} for some details).
We assume that each background sort is a complete lattice (for flat sorts like arithmetic, we can introduce a bottom element and a top element to obtain a complete lattice). The support maps are extended for recursive definitions, and they evaluate to the support of their definitions. The semantics of both $\Supp$ and $\models$ are taken together as equations and their semantics is defined as the least fixpoint of these equations.

The separation logic that emerges is powerful and can state properties such as standard datastructures (lists, trees) and properties of them (keys, bst), etc. 
Furthermore, the logic can be translated easily to FL (extending the $\Pi$ translation above) for automated verification of programs annotated using \SLFL\ (see Appendix~\ref{app:seplogic} for some details).

\section{Automating Program Verification for Frame Logic}
\label{sec:vcgen}

In this section, we present the second technical contribution of this paper: a VC generation technique for programs annotated in \FL\ . To this extent, we extend \FL\ with the `cloud' operator, which enables quantifier elimination. We then utilize strongest postconditions for VC generation. This avoids the nightmarish formulas present in the weakest precondition computations in~\cite{esop2020framelogic, framelogictoplas2023}. 

\subsection{Simplifying Quantification Using the Cloud Operator}
\label{sec:cloud}

In this work we wish to write quantifier-free specifications, generating quantifier-free verification conditions in the $\nplogic$ fragment. However, defining something even as simple as linked lists in FL requires quantification~\cite{framelogictoplas2023,esop2020framelogic}:
\begin{center}
$\lst(x) :=_{\mathit{lfp}} \ite(x=\nil, \top, \exists y : y = \nxt(x).\, \lst(y) \land x \notin \Sp(\lst(y) ))$    
\end{center}

\noindent
where an existential quantifier is used in the above definition to say that when $x \neq \nil$, then $y$, the ``next'' location of $x$ must point to a linked list such that $x$ does not belong to the support of $\lst(y)$\footnote{For readers familiar with separation logic, this formula is similar to the SL formula $\exists y.\, (x \overset{\nxt}{\mapsto} y) * \lst(y)$}. 

It is tempting to think that the quantifier can be eliminated by simply replacing $y$ with $\nxt(x)$. However, the second conjunct then becomes $x \notin \Sp(\lst(\nxt(x)))$, which does not hold because the support of a formula that mentions $\nxt(x)$ always contains $x$\footnote{The example shows that substitution is not a semantics-preserving operation in FL. Indeed, a substitution is only valid if the replacement expression has the same truth value \emph{and} the same support as the original substituted sub-expression.}.

In this work we introduce a new operator called the \emph{Cloud} operator, denoted by square brackets $\Cl{\cdot}$. The cloud of a formula (resp. term) $\alpha$ is a formula $\Cl{\alpha}$ that evaluates the same way as $\alpha$, but whose support is empty. Simply, $\Cl{\alpha}$ treats $\alpha$ as a support-less expression. We extend FL with the cloud operator, which allows us to rewrite the above definition without quantification:

\begin{center}
$\lst(x) :=_{\mathit{lfp}} \ite(x=\nil,\, \top,\, \nxt(x) = \nxt(x) \land \lst(\Cl{\nxt(x)}) \land x \notin \Sp(\lst(\Cl{\nxt(x)})))$
\end{center}

\noindent
where all occurrences of $y$ are replaced with $\Cl{\nxt(x)}$. We also add the tautological conjunct $\nxt(x) = \nxt(x)$ to ensure that the support of the resulting formula still contains $\{x\}$.

\noindent
Formally, we define the semantics of the cloud operator as follows:

\begin{center}
$\Sp(\Cl{\varphi}) = \emptyset \textrm{ and } M \models \Cl{\alpha} \textrm{ iff } M \models \alpha  \textrm{ for a formula $\alpha$}$\\
$\Sp(\Cl{t}) = \emptyset \textrm{ and } \sem{~\Cl{t}~}_M = \sem{t}_M \textrm{ for a term $t$}$
\end{center}

We then have the following lemma. Fix a model $M$.
\begin{lemma}[Eliminating Quantification using the Cloud Operator]
\label{lem:cloud-lemma}
$M \models \exists y\!: y = f(x).\; \varphi$ if and only if $M \models (f(x) = f(x)) \land \varphi(\Cl{f(x)})$, and $\sem{\Sp(\exists y\!: y = f(x).\; \varphi)}_M = \sem{\Sp((f(x) = f(x)) \land \varphi(\Cl{f(x)}))}_M$\qed
\end{lemma}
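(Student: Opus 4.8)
The statement has two parts: a truth-value equivalence and a support equality, both to be verified in a fixed model $M$. The plan is to unfold the semantics of guarded quantification, the cloud operator, and the support rules from Figure~\ref{fig:sp-semantics}, and check both halves by direct computation. For the truth value, first observe that by the note in Section~\ref{sec:fl}, $M \models \exists y\!: y = f(x).\,\varphi$ iff $M \models \exists y.\, y = f(x) \wedge \varphi(y)$, which (since $f(x)$ denotes a unique element) holds iff $M \models \varphi(f(x))$, i.e. iff $M[y \leftarrow \sem{f(x)}_M] \models \varphi$. On the other side, $M \models (f(x) = f(x)) \wedge \varphi(\Cl{f(x)})$ holds iff $M \models \varphi(\Cl{f(x)})$ (the first conjunct is trivially true), and by the defining property of the cloud operator $\sem{\Cl{f(x)}}_M = \sem{f(x)}_M$, so substituting the term $\Cl{f(x)}$ for $y$ yields the same truth value as interpreting $y$ as $\sem{f(x)}_M$. (Here one should note that substituting a \emph{closed-up} term is legitimate precisely because $\Cl{f(x)}$ has the same value as $f(x)$; the subtlety flagged in the footnotes about substitution not preserving support is exactly why we use $\Cl{\cdot}$ rather than $f(x)$ directly.) This gives the truth-value equivalence.

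For the support equality, I would compute both sides from the rules. On the left, the rule for guarded quantification in Figure~\ref{fig:sp-semantics} gives
\[
\sem{\Sp(\exists y\!: y = f(x).\,\varphi)}_M = \{\sem{x}_M\} \cup \sem{\Sp(\varphi)}_{M[y \leftarrow u]}, \qquad u = \sem{f(x)}_M.
\]
On the right, using the rule for $\wedge$ and then for the cloud operator,
\[
\sem{\Sp((f(x)=f(x)) \wedge \varphi(\Cl{f(x)}))}_M = \sem{\Sp(f(x)=f(x))}_M \cup \sem{\Sp(\varphi(\Cl{f(x)}))}_M.
\]
The first term equals $\sem{\Sp(f(x))}_M$ (by the rule for $\Sp(t_1 = t_2)$, taking the union with itself), and since $f \in \Ff_m$ is mutable this is $\{\sem{x}_M\}$ (note $x$ is a variable, so $\Sp(x) = \emptyset$). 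For the second term I need the key claim that $\sem{\Sp(\varphi(\Cl{f(x)}))}_M = \sem{\Sp(\varphi)}_{M[y \leftarrow u]}$: replacing every occurrence of $y$ by the support-less term $\Cl{f(x)}$ contributes nothing extra to the support (since $\Sp(\Cl{f(x)}) = \emptyset$), while the value $u$ threaded through the rest of the computation of $\Sp(\varphi)$ is exactly the value $y$ takes in $M[y \leftarrow u]$. Combining, both sides equal $\{\sem{x}_M\} \cup \sem{\Sp(\varphi)}_{M[y \leftarrow u]}$.

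The one step that needs genuine care, and which I expect to be the main obstacle, is the claim $\sem{\Sp(\varphi(\Cl{f(x)}))}_M = \sem{\Sp(\varphi)}_{M[y \leftarrow u]}$. This is a substitution lemma for supports and should be proved by structural induction on $\varphi$, simultaneously with the analogous statement for terms. The base cases and the Boolean/$\ite$ cases are routine — the only place $y$ can occur is inside terms, where $\Cl{f(x)}$ has empty support and value $u$; the $\ite$ case additionally needs the truth-value equivalence (already established for the outer formula, but needed inductively for guards, which is fine since guards contain no support expressions or inductive predicates). The inductive-predicate case $I(\overline{t})$ uses the least-fixpoint characterization: one argues that the support equations for $\Sp(I(\overline{t}(\Cl{f(x)})))$ in $M$ and for $\Sp(I(\overline{t}))$ in $M[y\leftarrow u]$ are the same system of equations (since $I$'s body is fixed and the only dependence on $M$ versus $M[y\leftarrow u]$ is through the values of the arguments $\overline{t}$, which agree), hence have the same least fixpoint. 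Once this substitution lemma is in hand, both parts of Lemma~\ref{lem:cloud-lemma} follow by the unfoldings above. I would present the substitution lemma either as a separate lemma preceding this one or fold it inline, depending on space.
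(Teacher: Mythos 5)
Your proof is correct and is exactly the direct unfolding of the definitions that the paper has in mind (the paper omits the proof entirely, remarking only that it ``follows trivially from definitions''). You rightly isolate the one piece of genuine content --- the substitution lemma $\sem{\Sp(\varphi(\Cl{f(x)}))}_M = \sem{\Sp(\varphi)}_{M[y \leftarrow u]}$ proved by structural induction simultaneously on terms, formulas, and guards --- and your treatment of the $\ite$ and inductive-predicate cases (where the guard's truth value and the least-fixpoint characterization respectively must be threaded through the induction) is sound.
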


We skip the proof of this lemma as it follows trivially from definitions.

\medskip
\noindent

We now turn to VC generation for programs annotated with quantifier-free FL specifications, now extended with the cloud operator. Note that the user continues to write specifications in FL with guarded quantification as in Figure~\ref{fig:syntax}. These are eliminated automatically using the transformation described in Lemma~\ref{lem:cloud-lemma}.

We fix some notation. For a given program, we assume a fixed set of pointer and data fields $\Ff_m$, and we use $f, f', f_1$, etc. to denote these. We also fix a finite set of recursive definitions $\Ii$, and symbols $I, I', I_1$, etc. range over $\Ii$. We denote by $\textit{Def}$ the recursive definitions for each $I \in \Ii$, with the definition of $I$ of the form $I(\vec{x}) =_\mathit{lfp} \rho_I(\vec{x})$. Definitions can be mutually recursive. 

Given a triple $\{\textit{Pre}\}\, P\, \{\textit{Post}\}$, we fix a set of methods $G$ which are called by $P$. We assume every $g\in G$ has its own contract with precondition $\textit{Pre}_{g}$ and postcondition $\textit{Post}_{g}$. Our verification is modular and therefore simply assumes the validity of triples for called methods. 

An illustrative example is given below. The program \egprog{} prepends the reverse of list pointed to by $x$ to the list pointed to by $y$. The precondition asserts the two lists are disjoint. The postcondition asserts the program returns a list $\ret$ whose keys is the union of the keys of the two lists at the start of the program. We use $\Old(R(x))$ to denote references to elements of the appropriate background sort obtained by calculating $R(x)$ on the program state at the start of the method call. In particular, the support of the expression $\Old(R(x))$ is empty.

\begin{footnotesize}
\begin{tabular}{l l}
\makecell[l]{
    {\bf Definitions and Specification:}\\
    $\lst(x) := \ite(x = \nil, \top, \lst(\nxt(x)) \land x \not\in \Sp(\lst([\nxt(x)]))$
    \\
    $\Keys(x) := \ite(x = \nil, \phi, \Keys(\nxt(x)) \cup \{\key(x)\})$
    \\ 
    $\text{Pre: }\lst(x) \wedge \lst(y) \wedge \Sp(\lst(x)) \cap \Sp(\lst(y)) = \phi$
    \\
    $\text{Post: } \lst(\ret) \wedge \Keys(\ret) = \Old(\Keys(x)) \cup \Old(\Keys(y))$
}
&  
\makecell[l]{
    {\bf Program:}\\
    $\text{\egprog{}}(x, y) \text{  returns}(\ret)$:
    \\
   (if (x = nil) then ret := y; \\ \hspace{0.3cm}
     else tmp := x.next; x.next := y; \\ \hspace{0.3cm}
     ret := \egprog{}(tmp, x);  )\\
    return;
    }
\end{tabular}
\end{footnotesize}

\subsection{Generating Triples over Basic Blocks}
Given a program with annotations, we first generate several triples over basic blocks. 
These generated triples capture both the verification of the postcondition for $P$ along different control flows as well as other ``side-conditions'' such as memory safety of dereferences and deallocations.

The basic blocks corresponding to a program $P$, denoted $BB(P)$, are as follows: 
\begin{itemize}
\item $BB(c) = \{c\}$ for any atomic command $c$ 

\item $BB( \textit{if } \eta \textit{ then } P_1 \textit{ else } P_2 ) = \{ assume (\eta); P_1'\mid P_1' \in BB(P_1)\} \cup \{ assume (\neg \eta); P_2' \mid P_2' \in BB(P_2)\}$

\item $BB( P_1; P_2) = \{ P_1'; P_2' \mid P_1' \in BB(P_1), P_2' \in BB(P_2) \}$
\end{itemize}

$BB(P)$ computes basic blocks in the usual manner, splitting paths on conditionals and including the condition in conditionals using $\mathsf{assume}$ statements. Each basic block is assumed to be in ``SSA form'', i.e., every variable is assigned at most once. This is not a restriction as every basic block in our language can be transformed into SSA form.  

Given a triple $\{\textit{Pre}\}\,P\,\{\textit{Post}\}$, the set of triples associated with the basic blocks $BB(P)$ are:
\begin{itemize}
 \item For each $s \in BB(P)$,$~~~~~~~~~~~$
    $\{\textit{Pre}\}\, s\, \{\textit{Post}\}$

 \item For every $s_1, s_2$, where 
    $s_1; x:=y.f; s_2$ or $s_1; y.f := x; s_2$ or $s_1; \textit{free}(y); s_2$ is in $BB(P)$,

    \begin{center}
    $\{\textit{Pre}\}\, s_1\, \{HP: y \in A\}
    \footnote{ Later in this section, we will show that the allocated set $A$ can be represented as an FL formula. Thus, checks such as $y \in A$ can be captured in FL.}
    $    
    \end{center}
    
    where $HP$ is Heapless Post (see Section~\ref{sec:triples}).

 \item For every $s_1, s_2$ such that 
    $s_1; \bar{b}:= g(\bar{a}); s_2 \in BB(P)$,

    \begin{center}
    $\{\textit{Pre}\}\, s_1\, \{RP: \textit{Pre}_g[\bar{p} \gets \bar{a}]\}$    
    \end{center}
    
    where $RP$ is Relaxed Post (see Section~\ref{sec:triples}).

\end{itemize}

The first item above includes the verification of the postcondition of the program along every basic block. 
The second set of Hoare triples captures safety of dereferences and ensures freed locations are currently allocated. For each statement that dereferences a location variable or frees a location, we add a Hoare triple for the prefix of the basic block till that statement and check using relaxed postcondition that
the variable is in the current allocated set. 
Finally, for each statement that calls a method $g$, we introduce a prefix of the basic block till the call and check, again using a relaxed postcondition, that the precondition of the called method holds.
The relaxed postcondition above ensures that the heaplet which is the support of the precondition of $g$, $Pre_g$ (which $g$ will remain within) is a subset of the currently allocated set. 

One can think of generating the triples for BBs as stepping through the program one command at a time and discharging necessary obligations. 
Let us consider the 'else' branch of the illustrative example. Assuming $x\not=\nil$, we first check if dereferencing $x$ is safe when assigning $x.\nxt$ to $\tmp$ with the top-left Hoare triple in Figure~\ref{fig:example-BBs}. The case for the mutation $x.\nxt$ to $\tmp$ is similar. Next, we require the precondition for $\egprog{}$ to hold on the method call, i.e., with inputs $\tmp, x$. Moreover, we also ensure its support is contained in the allocated set. We treat the support of the precondition to the call as the set of modified locations for the function call. This is captured by the top-right Hoare triple in Figure~\ref{fig:example-BBs}. Finally, we check that the post-condition holds at the end of the program by the bottom triple in Figure~\ref{fig:example-BBs}.

\begin{figure}[H]
\footnotesize
\centering
\begin{tabular}{l l}
    \makecell[c]{
    $\{ \lst(x) \wedge \lst(y) \wedge \Sp(\lst(x)) \cap \Sp(\lst(y)) = \phi \}$\\
    $\assume(x \not= \nil); \tmp := x.\nxt$\\
    $\{ HP: x \in A\}$
    }
     &  
    \makecell[c]{
    $\{ \lst(x) \wedge \lst(y) \wedge \Sp(\lst(x)) \cap \Sp(\lst(y)) = \phi \}$\\
    $\assume(x \not= \nil); \tmp := x.\nxt; x.\nxt := y $\\
    $\{ RP: \lst(\tmp) \wedge \lst(x) \wedge \Sp(\lst(\tmp)) \cap \Sp(\lst(x)) = \phi \}$
    }
\end{tabular}
\phantom{~}\\
    $\{ \lst(x) \wedge \lst(y) \wedge \Sp(\lst(x)) \cap \Sp(\lst(y)) = \phi \}$\\
    $\assume(x \not= \nil); \tmp := x.\nxt; x.\nxt := y;$\\
    $ \ret := \text{\egprog{}}(\tmp,x)$\\
    $\{ \lst(\ret) \wedge \Keys(\ret) = \Old(\Keys(x)) \cup \Old(\Keys(y)) \}$
\caption{Hoare Triples generated for the `else' branch in the illustrative example.
}
\label{fig:example-BBs}
\end{figure}

\vspace{-0.5 cm}
\subsection{Verification Condition Generation}

We present the core technical contribution in this section, a verification condition (VC) generation mechanism for basic blocks. We generate verification conditions in frame logic with only universal quantification over locations (for satisfiability) and with functions mapping from the foreground sort to the foreground or background sorts (integers, sets of locations, etc.).  

Intuitively, we will be constructing, for each Hoare triple $\{\textit{Pre}\}\, s\, \{\textit{Post}\}$ a VC of the form $((\textit{Pre} \wedge T) \Rightarrow \textit{Post}) \wedge SC$ where $T$ captures the semantics of transformation the basic block $s$ has on the state and heap, and $SC$ is a support condition that demands that the support of $\{Post\}$ is the support of $\textit{Pre}$ modulo allocations and freeing of locations that happen in $s$ (including calls to other functions). In the case of Hoare triples with relaxed postconditions $\{\textit{Pre}\}\, s\, \{RP: \textit{Post}\}$, the VC is similar, except that the support condition will demand that the support of the postcondition is a subset of the support of the precondition, modulo allocations and freeing of locations. Heapless postconditions $\{\textit{Pre}\}\, s\, \{HP: \textit{Post}\}$ simply do not have a support condition.

Consider a Hoare triple $\{\alpha\}\, s\, \{\beta\}$ where $s$ is a basic block and where 
$\alpha, \beta$ are annotations in frame logic.
We use the notion of a \emph{local configuration} to build the VC. We describe a transformation of local configurations across statements of a basic block, processing one statement of the basic block at a time, left to right. 
A local configuration is a 5-tuple $(T, A, H, \textit{Fr}, \textit{RD})$. After processing a prefix $s'$ of a basic block $s$,  $T$ corresponds to the formula describing the state after the transformation $s'$ has had on states satisfying the precondition. The component $A$ represents the set of allocated locations after the prefix, 
$H$ represents the heap after executing the prefix, and $\textit{Fr}$ stands for a set of frame rules gathered during mutations of the heap by the prefix. Finally, $\textit{RD}$ represents a set of new recursive definitions gathered to represent the recursive definitions $\Ii$ evaluated on intermediate heaps during the execution of the prefix $s'$. 

Initially, $T$ is set to  $\textit{Pre}$ and $A$ is set to the support of the precondition $\alpha$ of the Hoare triple, i.e., $\Sp(\alpha)$, $H$ is initially simply a set of uninterpreted functions denoting the initial heap, and $\textit{Fr}$ is the empty set. The set $\textit{RD}$ is initialized to $\Dd$, the recursive definitions of $\Ii$ on the initial heap. Formally:

\begin{definition}
A \emph{logical configuration} is a tuple $(T, A, H, \textit{Fr}, \textit{RD})$ where:
\begin{itemize}
 \item $T$ is a quantifier-free logical formula describing the program transformation. $T$ is expressed over program variables and recursive functions defined in $\textit{RD}$
 \item $A$ is a term whose type is a set of locations, and denotes the current allocated set of locations.
 \item $H$ is a map that captures the current heap; it associates with each pointer $f \in F$ a map
 $H(f): Loc \xrightarrow{} \sigma^{\prime}$, where $\sigma'$ is the foreground sort of Locations or a background sort. $H(f)$ is expressed as $\lambda u. t(u)$, where $t$ is a term over $u$ and the program variables.  

\item $\textit{Fr}$ is a set of universally quantified FO formulae that denotes a set of \emph{frame rules} gathered for each mutation of the heap.

\item $\textit{RD}$ is a set of recursive definitions of a set of function symbols that capture properties/data structures of intermediate heaps.
\end{itemize}
\end{definition}

The heap map $H$ associates with each pointer $f \in F$, a map $H(f)$, that maps locations to the appropriate sort (either location sort or a background sort). Note that these hence satisfy the one-way condition we want for FO-completeness.
The map $H(f)$ is itself expressed in logic using quantifier-free terms involving other function symbols, such as the functions that characterize the initial heaplet. For example, $H(f)$ may be the formula $\lambda u. ite(u=z, g_1(z), g_2(u))$, where $g_1, g_2$ are function symbols and $z$ is a program variable. In general, we will expand the signature with new function symbols as we process a basic block. 

We start with a set of recursive definitions for function symbols in  $\Ii$. 
As we construct the VC,
we adapt these definitions of functions in $\Ii$ to ones that refer to them interpreted on new heaps. 

In particular, for any $H$ and $I \in \Ii$, let us introduce a new function symbol
$I_H$, and let us add a recursive definition for
$I_H$. This recursive definition $I[H]$ 
is the definition for $I$ where every pointer and data field $f \in \Ff_m$ mentioned in the recursive definition is replaced by $H(f)$. 
In other words, for each recursively defined function $I \in \Ii$, where $I$ is defined as 
$I(x)=_\textit{lfp} \rho(x)$, we give a definition $I[H]: I_H(x) = _\textit{lfp} \rho[H(f)/f](x)$, where $\rho[H(f)/f]$ is $\rho$ where every occurrence of $f$ is replaced with the term $H(f)$, for each $f \in \Ff_m$.

For example, consider the example of $H$ mentioned above, and consider the definition
$\textit{lseg}(x,y) =_\textit{lfp} (x=y) \vee 
  \textit{lseg}(f(x),y)$. Then 
  $\textit{lseg}[H]$ is the definition $\textit{lseg}_H =_\textit{lfp}  (x=y) \vee 
  \textit{lseg}(\textit{ite}(x=z, g_1(z), g_2(x)),y)$.

The generation of VCs are presented in Figure~\ref{fig:vc}. However, before diving into explaining the rules, let us discuss the frame rule (formally also described in Figure~\ref{fig:vc}) that is used in the rules.

\medskip
\noindent
{\bf Frame conditions:}
For any recursively defined function symbol $I(\vec{y}) \in \Ii$, two heap maps $H$, $H'$, and a set of locations $X$, we define the frame rule to be 
$$\textit{fr}(I, X, H, H') ~~:~~ \forall \vec{y} \in Loc, ( X \cap \Sp(I_H(\vec{y}))  = \emptyset) \implies (I_{H'}(\vec{y}) = I_H(\vec{y}) )$$

The above formula is meant to be used when a heap $H$ is transformed to a heap $H'$ where mutations happen only on the locations in $X$. It says that if the support of the recursive definition on $\vec{y}$ (evaluated in $H$) does not intersect $X$, its value in $H'$ is the same as that in $H$. Note that this formula has only universal quantification over the foreground sort of locations, and hence its translation to 
FORD will be in the $L_\textit{oneway}$ fragment.

We use $\textit{fr}(X,H,H')$ to denote the conjunction of  $\textit{fr}(I, X, H, H')$, for each $I \in \Ii$, which states the frame condition for each recursively defined function in $\Ii$.

\medskip
\noindent {\bf Verification Condition Generator:}
\begin{definition}[VC for a basic block]
\label{defn:bb-vc}
For Hoare triples of basic blocks $\{\alpha\}\, s\, \{\beta\}$ and for basic blocks with relaxed post, the associated VCs are defined as: 
 $$ \mathit{VC}(\{\alpha\}\, s\, \{\beta\}) =  \left(\bigwedge \textit{Fr} \wedge T\right) \Rightarrow ( \beta \wedge \textit{Sp}(\beta)=A)$$
 $$ \mathit{VC}(\{\alpha\}\, s\, \{RP: \beta\}) =  \left(\bigwedge \textit{Fr} \wedge T\right) \Rightarrow ( \beta \wedge  \textit{Sp}(\beta) \subseteq A )$$
 $$ \mathit{VC}(\{\alpha\}\, s\, \{HP: \beta\}) =  \left(\bigwedge \textit{Fr} \wedge T\right) \Rightarrow ( \beta )$$
where $(T, A, H, \textit{Fr}, \textit{RD}) = \mathit{VC}( (T_0, A_0, H_0, \textit{Fr}_0, \textit{RD}_0), s)$ where $T_0 = \alpha$, $A_0 = \textit{Sp}(\alpha)$,
$H_0(f)=\lambda u. f(u)$ (for each $f \in F$), 
$\textit{Fr}_0 = \emptyset$, $\textit{RD}_0 = \textit{RD}$, and $\mathit{VC}$ is the transformer defined formally in Figure~\ref{fig:vc}.
\end{definition}

\begin{figure}\small
\begin{mathpar}
\inferrule*[right = Assn]{T' = T \wedge (x=y)}
{\vc{\sigma}{ x \coloneq y} = (T', A, H, \Fr, \RD)}

\inferrule*[right = SclrAssn]{T' = T \wedge (v = \mathit{be})}
{\vc{\sigma}{ v \coloneq \mathit{be}} = (T', A, H, \Fr, \RD)}

\inferrule*[right = Deref]{T' = T \wedge (x = H(f)(y))}
{\vc{\sigma}{ x \coloneq y.f} = (T', A, H, \Fr, \RD)}

\inferrule*[right = Mutation]{
H'(f) = \lambda\marg.~~ite(\marg = y, ~x, ~H(f)(\marg)) \text{ for every $f' \in \Ff_m, f' \not = f$}
\\
\Fr' = \Fr \cup \{ \fr(\{y\}, H, H') \}
\\
\RD' = \RD \cup \{ I[H'] \mid I \in \Ii\}
}
{\vc{\sigma}{ y.f \coloneq x} = (T, A, H', \Fr', \RD')}

\inferrule*[right = Alloc]{
T' = T \wedge (x\not\in A)\\
A' = A \cup \{x\}\\
H'(f) = \lambda \marg.~~\ite( \marg = x, ~\defaultf, ~H(f)(\marg)), \text{ for every $f\in \Ff_m$, with domain $A'$}
\\
\RD' = \RD \cup \{ I[H'] \mid I \in \Ii\}
}
{\vc{\sigma}{\alloc(x)} = (T', A', H', \Fr, \RD')}

\inferrule*[right = Free]{
A' = A \backslash \{x\}\\
H'(f) = H(f) \text{ for every $f\in \Ff_m$, with domain $A'$}\\
\RD' = \RD \cup \{ I[H'] \mid I \in \Ii\}
}
{\vc{\sigma}{\free(x)} = (T, A', H', \Fr, \RD')}

\inferrule*[right = SeqComp]{\vc{\sigma}{P} = \sigma'}
{\vc{\sigma}{P;Q} = \vc{\sigma'}{Q}
}

\inferrule*[right = Assume]{T' = T \wedge \alpha}
{\vc{\sigma}{\assume(\alpha)} = (T', A, H, \Fr, \RD)
}

\inferrule*[right = Call]{
g \in G \\ \alpha(\bar{x}) = \mathit{Pre}(g) \\ \beta(\bar{x}, \bar{y}) = \mathit{Post}(g)
\\
H'(f) = \lambda \marg.~~\ite( \marg \in \Sp(\alpha_H[\bar{x}\gets \bar{a}]),~f_{\mathit{fresh}}(\marg),~H(f)(\marg))  
\\
T' = T \wedge \beta_{H'}[\bar{x}\gets \bar{a}, \bar{y}\gets \bar{b}]
\\
A' = (A \setminus \Sp(\alpha_H[\bar{x}\gets \bar{a}])) \cup \Sp(\beta_{H'}[\bar{x} \gets \bar{a}, \bar{y} \gets \bar{b}])
\\
\Fr' = \Fr \cup \{ ~\fr(\Sp(\alpha_H[\bar{x} \gets \bar{a}]), H, H') ~\}
\\
\RD' = \RD \cup \{ I[H'] \mid I \in \Ii\}
}
{\vc{\sigma}{\bar{b} \coloneq g(\bar{a})} = (T', A', H', \Fr', \RD')}
\end{mathpar}

\vspace{-0.3 cm}
\caption{Verification Condition Generation: Predicate transformers for basic blocks. Here $\sigma = (T, A, H, \Fr, \RD)$, the configuration before applying the $\mathit{VC}$ transformer, and $\fr(X, H, H') = \bigwedge_{I \in \Ii} \fr(I, X, H, H')$. We also assume a (fixed) set of callable functions $G$. Each function $g \in G$ is annotated with precondition $\mathit{Pre}(g) = \alpha(\bar{x})$ and postcondition $\mathit{Post}(g) = \beta(\bar{x}, \bar{y})$. By $\alpha_{H}$, we mean $\alpha$ where each $f \in \Ff_m$ is replaced by $H(f)$ and each $I \in \Ii$ by $I_H$. By $f_{\mathit{fresh}}$, we mean a fresh function symbol with the signature of $f$.  
\label{fig:vc}
}
\end{figure}

We now give the intuition behind the transformer $\mathit{VC}$. For an assignment $\texttt{x} := \texttt{y}$ ({\bf Assn}), we just
add the condition $x=y$ as a conjunct to our transformation formula. Recall that we have assumed that the basic block is in SSA form, and this is the sole assignment to the variable $x$, and hence this conjunct captures the program configuration constraining $x$ and $y$ to be equal. The scalar assignment rule ({\bf SclrAssn}) for assigning variables of background sorts is similar. 

For a dereference $\texttt{x}:= \texttt{y.f}$ ({\bf Deref}), the transformation looks up the current pointer $f$, using the function $H(f)$, and adds the conjunct that equates $x$ to $H(f)(y)$. 
Note that $H(f)(y)$ is the formula $H(f)$ with $y$ substituted as its parameter, and hence results in a quantifier-free term of the appropriate type. The {\bf Assn} and {\bf Deref} rules only change the transformation formula, and the rules for assume ({\bf Assume}) and sequential composition ({\bf SeqComp}) are as expected.

For a {\bf Mutation} $\texttt{y.f} := \texttt{x}$, the transformation component doesn't change, but the heap is updated to reflect the mutation, where $H'(f)(y)=x$ and is $H(f)$ on other locations. 
We also introduce the recursive definitions for $\Ii$ adjusted for the current heap $H'$ (adding them to $\textit{RD}'$). And introduce frame rules $\textit{fr}(\{y\}, H, H')$ that set these new recursive definitions on any tuple to the value of old ones provided that $y$ does not belong to the support of these definitions on the tuple. 

For a function call ({\bf Call}), we introduce an elegant way to take care of the heap transformation. For each pointer and data field $f \in \Ff_m$, we introduce a \emph{fresh} function symbol $f_{\mathit{fresh}}$. The map $H'(f)$ is now an $ite$ (if-then-else) term, where $H'(f)(u)$ evaluates to $H(f)(u)$, which is the old value of the pointer $f$, provided $u$ does not belong to the support of the precondition of the called function $g$. Otherwise, $H(f)(u)$ evaluates to $f_{\mathit{fresh}}(u)$, which is an arbitrary value since $f_{\mathit{fresh}}$ is uninterpreted. The transformation component imbibes the postcondition guaranteed by the call to $g$, where the recursively defined functions $R$ in the postcondition are modified to the functions $R_{H'}$. Recursive definitions are adapted to the new heap $H'$ after the call and added to $\textit{RD}'$. The new allocated set is derived from the old allocated set by removing the locations in the support of the precondition of $g$ and adding back the locations in the support of the postcondition of $g$. This handles the locations $g$ may allocate or free during its execution. Finally, we add frame condition formulae  
$\textit{fr}(Sp(\alpha[\bar{x} \gets \bar{a}]), M, M^{\prime})$ that says that for any recursive definition that has an empty intersection with the support of the precondition of $g$ remains unaffected in the new heap. 

The transformation for $\texttt{alloc}$ statement ({\bf Alloc}) adds an assumption that the allocated location is not already present in the current allocated set $A$, adds it to the new allocated set, and updates the heap map so that the new location's pointers point to default values. The transformation for $\texttt{free}$ statements ({\bf Free}) simply removes the location $x$ from the allocated set. Even though logical expression for the heap function $H$ does not change with freeing of locations, \emph{its domain does}. Hence, when either allocation or free happens, we recompute the recursive definitions on the new heap by updating the $\textit{RD}$ component.
Note that we do not check whether freed locations are indeed allocated as those are checked on other basic blocks, as described in the previous subsection. 

The algorithm applied to the bottom triple of Figure~\ref{fig:example-BBs} is given below. The text in black is the triple to be proven. The text in green shows the changes to the logical configuration. The first two statements simply add the appropriate term to the program transformation. For the mutation $x.\nxt := y$, we update the heap $H$ to $H_1$ reflecting this change. We then define the recursive functions on the new heap. Finally, we add frame rules relating recursive functions over $H$ to those over $H_1$. The function call $\ret := \text{\egprog{}}(\tmp,x)$ employs a similar idea. First, we must update the heap. We know the portion of the heap modified by the call is the support of the precondition for the call, i.e $\Sp(\alpha_{H_1}(\tmp, x))$.  Thus, the new heap $H_2$ will only differ outside of $H_1$. Furthermore, the postcondition $\beta_{H_2}(\tmp, x, \ret)$ for the call holds on the new heap. The new allocated set is the portion not modified by the call union the support of the postcondition the call. Finally, we add a frame rule connecting the heaps $H_1$ and $H_2$. The VC generated for this BB is $(\fr({x}, H, H_1) \wedge \fr(\Sp(\alpha_{H_1}(\tmp, x)), H_1, H_2) \wedge T_2) \implies \varphi \wedge \Sp(\varphi) = A_2$, where $\varphi = \lst_2(\ret) \wedge (\Keys_2(\ret) = (\Old(\Keys(x)) \cup \Old(\Keys(y))))$.

\begin{footnotesize}
\flushleft
$\{ \lst(x) \wedge \lst(y) \wedge ( \Sp(\lst(x)) \cap \Sp(\lst(y))  = \phi)\}$
\begin{tabular}{l l}
    $\assume(x \not= \nil);$ &
    \graytext{ \# $T \leftarrow T \wedge (x \not=\nil)$}
    \\
    $ \tmp := x.\nxt; $ &
    \graytext{ \# $T \leftarrow T \wedge (\tmp = \nxt(x))$}
    \\
    $x.\nxt := y;$ &
    \graytext{ \# $H_1 = H[\nxt \leftarrow \lambda\marg.~~\ite(\marg = x, y, \nxt(\marg))]$
    }
    \\ 
    &
    \graytext{ \# $\lst_1 = \lst[H_1],~\Keys_1 = \Keys[H_1],~\RD_1 = \{ \lst, \Keys, \lst_1, \Keys_1 \}
    $
    }
    \\
    &
    \graytext{ \# $\Fr_1 = \{\fr({x}, H, H_1)$ \}}
    \\
    $ \ret := \text{\egprog{}}(\tmp,x)$
    & \graytext{ \# $ \alpha_{H_1}(\tmp, x) = (\lst_1(\tmp) \wedge \lst_1(x) \wedge \Sp(\lst_1(\tmp)) \cap \Sp(\lst_1(x)) = \phi) $}
    \\
    & \graytext{ \# $H_2[f] = \lambda\marg.\ite(\marg\not\in\Sp(\alpha_{H_1}(\tmp, x)), H_1(f)(\marg) , \nxt_2(\marg) )$ for $f \in \{ \nxt, \key \}.$}
    \\
    &
    \graytext{ \# $ \lst_2 = \lst[H_2],~\Keys_1 = \Keys[H_2] ,~\RD_2 = \RD_1 \cup \{ \lst_2, \Keys_2 \}$}
    \\
    & \graytext{ \# $\beta_{H_2}(\tmp,x, \ret) = (\lst_2(\ret) \wedge \Keys_2(\ret) = \Old(\Keys_1(\tmp)) \cup \Old(\Keys_1(x)))$  
    }
    \\
    & \graytext{ \# $T_2 = T \wedge \beta_{H_2}(\tmp,x, \ret)$
    }
    \\
    &
    \graytext{\# $A_2 = (A\backslash \Sp(\alpha_{H_1}(\tmp, x))) \cup \Sp(\beta_{H_2}(\tmp,x, \ret))$}
    \\
    & \graytext{ \# $\Fr_2 = \Fr_1 \cup \{ \fr(\Sp(\alpha_{H_1}(\tmp, x)), H_1, H_2) \}$
    }
\end{tabular}
$\{ \lst_2(\ret) \wedge \Keys_2(\ret) = (\Old(\Keys(x)) \cup \Old(\Keys(y))\}$

\end{footnotesize}

\paragraph{Soundness and Completeness} Our VC generation is sound and complete for the validity of Hoare Triples over basic blocks. Fix a triple $\{\alpha\}\, s\, \{\beta\}$ where $s$ is a basic block, and let $\mathit{VC}$ be the map defined in Definition~\ref{defn:bb-vc}. Then:

\begin{theorem}[Soundness of VC Generation for Basic Blocks]
\label{thm:vcgen-soundness}
If $\mathit{VC}(\{\alpha\}\, s\, \{\beta\})$ is a valid formula in Frame Logic then $\{\alpha\}\, s\, \{\beta\}$ is valid in the sense of Definition~\ref{defn:triple-validity}. Similarly, if $\mathit{VC}(\{\alpha\}\, s\, \{RP\!: \beta\})$ is valid in Frame Logic then the triple $\{\alpha\}\, s\, \{RP\!: \beta\})$ is valid. 
\end{theorem}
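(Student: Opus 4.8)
The plan is to prove soundness via an \emph{adequacy lemma} established by induction on the length of the basic block $s$, and then to read the theorem off at the final statement. Fix $s = c_1;\dots;c_n$ and a program configuration $(S, H^{\mathrm p}, A^{\mathrm p})$ with $(S,H^{\mathrm p}) \models \alpha$ and $\sem{\Sp(\alpha)}_{(S,H^{\mathrm p})} = A^{\mathrm p}$, and let $\sigma_k = (T_k, A_k, H_k, \Fr_k, \RD_k)$ be the logical configuration produced after the $\mathit{VC}$ transformer processes $c_1;\dots;c_k$, so $\sigma_0 = (\alpha, \Sp(\alpha), \lambda u.f(u), \emptyset, \Dd)$. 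The invariant I would maintain is: whenever $(S,H^{\mathrm p},A^{\mathrm p}) \xrightarrow{c_1;\dots;c_k} (S_k, H^{\mathrm p}_k, A^{\mathrm p}_k)$ with the right side not $\bot$, there is an FL model $M_k$ of the (expanded) signature that (i) interprets the original vocabulary in agreement with $(S_k, H^{\mathrm p})$ and interprets each $I \in \Ii$ as its least fixpoint over $H^{\mathrm p}$, (ii) satisfies $\bigwedge \Fr_k \wedge T_k$, and (iii) \emph{represents} the current program state in the following precise sense: program variables are interpreted as in $S_k$; for each mutable $f \in \Ff_m$, the term $H_k(f)$ evaluated in $M_k$ agrees with $H^{\mathrm p}_k(f)$ on $\dom(H^{\mathrm p}_k)$; the term $A_k$ evaluates to $A^{\mathrm p}_k$; and each re-indexed symbol $I_{H_k}$ evaluates to $I$ computed over $H^{\mathrm p}_k$.

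The base case $k=0$ is immediate from the initialization of $\sigma_0$. For the inductive step I would case on $c_{k+1}$ following Figure~\ref{fig:vc}. The \textbf{Assn}/\textbf{SclrAssn}/\textbf{Deref}/\textbf{Assume} rules only conjoin an equality to $T$; since the basic block is in SSA form, this equality is exactly the constraint the operational semantics imposes on the newly written variable, so $M_{k+1}$ is $M_k$ with that variable reinterpreted. For \textbf{Mutation} $y.f := x$, $M_{k+1}$ keeps $M_k$ on the old symbols, the term $H_{k+1}(f) = \lambda u.\ite(u = y, x, H_k(f)(u))$ now evaluates to $H^{\mathrm p}_{k+1}(f)$, and each new $I_{H_{k+1}}$ is interpreted as the least fixpoint of its $H_{k+1}$-instantiated definition; the added formula $\fr(\{y\}, H_k, H_{k+1})$ holds because it is a semantic consequence of the FL frame property (the value and support of $I$ on a tuple depend only on the mutable-function values inside that support, and $H_k,H_{k+1}$ differ only at $y$), which also lets me conclude $I_{H_{k+1}}$ agrees with $I$ over $H^{\mathrm p}_{k+1}$ off the mutated location and, by a fixpoint argument, everywhere on the new domain. \textbf{Alloc} and \textbf{Free} are similar, the new point being that the heap \emph{term} barely changes but its \emph{domain} does; the assumption $x \notin A$ added by \textbf{Alloc} holds because the operational semantics picks a fresh location, and the re-indexed definitions are re-interpreted over the updated domain.

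The crux is the \textbf{Call} case $\bar b := g(\bar a)$. Using the modular hypothesis that $g$'s Hoare triple is valid together with the separately generated side-condition triple $\{\alpha\}\, c_1;\dots;c_k\, \{RP: \mathit{Pre}_g[\bar p \gets \bar a]\}$, the precondition of $g$ holds at the call on the heaplet with domain $\sem{\Sp(\mathit{Pre}_g[\bar x \gets \bar a])}_{(S_k, H^{\mathrm p}_k)} \subseteq A^{\mathrm p}_k$; hence $g$ runs without fault and, if it returns, lands in a state satisfying $\mathit{Post}_g[\bar x \gets \bar a, \bar y \gets \bar b]$ on a heaplet equal to its own support, with the surrounding heap untouched. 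I build $M_{k+1}$ by interpreting each $f_{\mathrm{fresh}}$ to coincide with the callee's result heap on the post-footprint (arbitrary elsewhere). Then, using the IH to identify $\sem{\Sp(\alpha_{H_k}[\bar x \gets \bar a])}_{M_{k+1}}$ with the pre-footprint, one checks: $H_{k+1}(f) = \lambda u.\ite(u \in \Sp(\alpha_{H_k}[\bar x\gets\bar a]),\, f_{\mathrm{fresh}}(u),\, H_k(f)(u))$ evaluates to $H^{\mathrm p}_{k+1}(f)$ on the new allocated domain; since $\mathit{Post}_g$ and the symbols $I_{H_{k+1}}$ occurring in $\beta_{H_{k+1}}[\bar x\gets\bar a,\bar y\gets\bar b]$ have support inside the post-footprint, where $H_{k+1}$ matches the callee's heap, $M_{k+1} \models \beta_{H_{k+1}}[\bar x\gets\bar a,\bar y\gets\bar b]$, so $T_{k+1}$ holds; $A_{k+1} = (A_k \setminus \Sp(\alpha_{H_k}[\ldots])) \cup \Sp(\beta_{H_{k+1}}[\ldots])$ evaluates to $A^{\mathrm p}_{k+1}$; and $\fr(\Sp(\alpha_{H_k}[\ldots]), H_k, H_{k+1})$ is again FL-valid since the two heaps differ only on the pre-footprint. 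Subterms $\Old(R(x))$ of $\mathit{Post}_g$ are handled by retaining the pre-call interpretations of the relevant symbols, which are unchanged by construction.

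Finally, at $k = n$ the invariant gives $M_n \models \bigwedge \Fr_n \wedge T_n$, so validity of $\mathit{VC}(\{\alpha\}\, s\, \{\beta\}) = (\bigwedge \Fr_n \wedge T_n) \Rightarrow (\beta \wedge \Sp(\beta) = A_n)$ forces $M_n \models \beta$ and $\sem{\Sp(\beta)}_{M_n} = \sem{A_n}_{M_n}$; since $M_n$ represents $(S_n, H^{\mathrm p}_n, A^{\mathrm p}_n)$, these unwind to $(S_n, H^{\mathrm p}_n) \models \beta$ and $\sem{\Sp(\beta)}_{(S_n, H^{\mathrm p}_n)} = A^{\mathrm p}_n$, which is condition (2) of Definition~\ref{defn:triple-validity}. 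Condition (1) follows because a fault can only arise at a dereference, mutation, free, or call in $s$, and each such position yields a prefix $c_1;\dots;c_j$ for which a side-condition triple ($\{\alpha\}\, c_1;\dots;c_j\, \{HP: y \in A\}$, resp.\ the call-precondition triple) was generated; applying the same simulation to that prefix shows the fault would falsify that side-condition's VC, contradicting its validity. The relaxed-post case is identical except that the VC demands only $\Sp(\beta) \subseteq A_n$, which is exactly what the relaxed notion requires, and the heapless case drops the support obligation. I expect the main obstacle to be the \textbf{Call} case: keeping $M$'s interpretation of the fresh heap functions $f_{\mathrm{fresh}}$, the re-indexed predicates $I_{H'}$, and the support terms all simultaneously synchronized with the callee's actual effect, which forces a careful interleaving of the callee's triple validity with the FL support/frame properties and the correct treatment of $\Old(\cdot)$.
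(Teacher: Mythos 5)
The paper elides this proof entirely (``fairly straightforward from the definition of the VC transformer and the operational semantics''), so there is no written argument to compare against; your simulation‑invariant induction over the prefixes of the basic block, relating each logical configuration $(T_k,A_k,H_k,\Fr_k,\RD_k)$ to the operational configuration via a model of the expanded signature, is exactly the argument the authors are gesturing at, and the case analysis you give (including the reliance on the FL frame property to discharge the $\fr$ formulas, and the interpretation of $f_{\mathit{fresh}}$ via the callee's contract in the \textbf{Call} case) is the right way to fill it in.

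One point in your proof is worth flagging because it is actually a correction to the theorem as literally stated. Condition (1) of Definition~\ref{defn:triple-validity} (no transition to $\bot$) cannot follow from validity of $\mathit{VC}(\{\alpha\}\,s\,\{\beta\})$ alone: the \textsc{Deref}, \textsc{Mutation}, \textsc{Free}, and \textsc{Call} rules in Figure~\ref{fig:vc} add no membership checks against $A$, so for example $\{\top\}\;x:=y.f\;\{\top\}$ has a valid VC ($T=(x=f(y))$, $A=\emptyset$) yet faults from the empty allocated set. Non-faulting is only guaranteed by the separately generated $HP$/$RP$ side-condition triples of Section~5.2, and your proof of condition (1) correctly invokes \emph{their} VCs — i.e., you are proving the (true and intended) statement that validity of \emph{all} VCs generated for the block implies triple validity, which strengthens the hypothesis beyond what the theorem literally asserts. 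The only other thing you gloss over, as does the paper, is that the ``modular hypothesis'' for recursive callees is circular as stated and needs to be discharged by an induction on the length of the operational derivation; this is routine but should be said if the proof were written out in full.
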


\begin{theorem}[Completeness of VC Generation for Call-Free Basic Blocks]
\label{thm:vcgen-completeness}
Let $s$ be a basic block with no function calls. If $\{\alpha\}\, s\, \{\beta\}$ is valid (Definition~\ref{defn:triple-validity}) then $\mathit{VC}(\{\alpha\}\, s\, \{\beta\})$ is valid. Similarly, if $\{\alpha\}\, s\, \{RP\!: \beta\})$ is valid then $\mathit{VC}(\{\alpha\}\, s\, \{RP\!: \beta\})$ is valid.
\end{theorem}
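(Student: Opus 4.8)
The plan is to prove the theorem directly via an \emph{adequacy} (simulation) lemma showing that, for call-free blocks, the symbolic transformer $\vc{\cdot}{\cdot}$ of Figure~\ref{fig:vc} faithfully mirrors the concrete operational semantics of Section~\ref{sec:triples}. Fix the triple $\{\alpha\}\, s\, \{\beta\}$ with $s$ call-free, let $(T, A, H, \textit{Fr}, \textit{RD}) = \vc{(T_0, A_0, H_0, \emptyset, \RD_0)}{s}$ be the final logical configuration, and take an arbitrary model $M$ of Frame Logic interpreting the program variables such that $M \models \bigwedge \textit{Fr} \wedge T$. From $M$ I read off a concrete starting configuration: the store $S_0$ is the $M$-valuation of the entry variables (those not assigned within $s$), the heap $H_0^M$ interprets each $f \in \Ff_m$ as $M$ does, and $A_0^M = \sem{\Sp(\alpha)}_{(S_0, H_0^M)}$. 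Since $\alpha$ is a conjunct of $T$ and mentions only symbols and variables interpreted by $M$, we have $(S_0, H_0^M) \models \alpha$, and $A_0^M$ is tight by construction, so $(S_0, H_0^M, A_0^M)$ is a legal starting configuration in the sense of Definition~\ref{defn:triple-validity}.

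The adequacy lemma I would establish is: \emph{if the concrete execution of $s$ from $(S_0, H_0^M, A_0^M)$ does not transition to $\bot$, then it reaches a final configuration $(S', H', A')$ such that} (i) $S'(v) = \sem{v}_M$ for every program variable $v$, (ii) $A' = \sem{A}_M$, (iii) $H'(f)$ agrees with $\sem{H(f)}_M$ on $A'$ for each $f \in \Ff_m$ (which suffices, since the relevant supports lie inside $A'$), and (iv) for every intermediate heap map $H''$ produced while processing $s$ and every $I \in \Ii$, the interpretation of the re-versioned symbol $I_{H''}$ in $M$ equals the least-fixpoint interpretation of $I$ over the concrete heap corresponding to $H''$. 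Granting this lemma the theorem follows: if $\{\alpha\}\, s\, \{\beta\}$ is valid then clause (1) of Definition~\ref{defn:triple-validity} forbids the $\bot$-transition, so the execution reaches $(S', H', A')$; clause (2) then gives $(S',H') \models \beta$ and $\sem{\Sp(\beta)}_{(S',H')} = A'$. Using (i)--(iv) — in particular that evaluating $\beta$ in $M$ with the post-heap versions of the symbols (the $I_H$ and $H(f)$, as in $\varphi=\lst_2(\ret)\wedge\dots$ in the worked example) coincides with evaluating $\beta$ in $(S',H')$, and likewise for $\Sp(\beta)$ — this transfers to $M \models \beta \wedge \Sp(\beta) = A$, so $M$ satisfies the consequent of the VC. As $M$ was arbitrary, $\mathit{VC}(\{\alpha\}\, s\, \{\beta\})$ is valid. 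The relaxed-post case is identical except clause (2) yields only $\sem{\Sp(\beta)}_{(S',H')} \subseteq A'$, giving $M \models \Sp(\beta) \subseteq A$.

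The adequacy lemma itself is proved by induction on the length of the basic block, the inductive invariant being the conjunction of (i)--(iv) for the prefix processed so far together with the fact that every $\assume$ condition accumulated into $T$ holds in $M$ (which, since $M \models T$ and the current configuration corresponds to $M$ by the invariant, guarantees the concrete execution is never blocked at an $\assume$). The inductive step is a case analysis over the atomic command: for \textbf{Assn}/\textbf{SclrAssn}/\textbf{Deref} the added conjunct of $T$ (e.g.\ $x = H(f)(y)$) pins the $M$-value of the freshly-assigned SSA variable to the value the operational semantics computes, using (iii) for the lookup; \textbf{Mutation} leaves $T$ unchanged but rewrites $H(f)$ as the $\ite$-term implementing the pointwise update, and the newly added definitions $I[H']$ are, by FORD semantics, interpreted in $M$ as the lfp of $I$ over the updated heap, re-establishing (iv) (the conjuncts $\fr(\{y\},H,H')$ are semantic consequences of this and are needed only by the natural-proofs backend, not by this argument); \textbf{Alloc}/\textbf{Free} update $A$, the heap and its domain, and recompute $I[H']$, for which I re-establish (iv) over the new domain; \textbf{Assume} extends $T$ and maintains the invariant; \textbf{SeqComp} is the compositional glue. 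Call-freeness is used crucially: it guarantees the VC introduces no fresh uninterpreted function symbols (only re-versioned recursive symbols, which come with definitions), so the $M$-values are \emph{forced} by $T$ and the definitions rather than merely over-approximated, making the simulation exact.

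I expect the main obstacle to be clause (iv) and, relatedly, the interaction of recursive definitions with the heap \emph{domain} after $\alloc$ and $\free$: one must show that the least fixpoint of the re-versioned definition $I[H']$ computed in $M$ coincides with the least fixpoint of $I$'s original definition over the concrete heap whose domain is the current allocated set, and that this coincidence is preserved across each domain-changing step. This needs a careful account of how the logical heap map $H$ encodes its domain and a fixpoint-transfer argument over the relevant complete lattices, and it is where the modeling choices around $\defaultf$, freeing, and the $\RD$ component must be pinned down precisely; it also subsumes the analogous transfer for $\Sp(\beta)$. The remaining pieces — SSA bookkeeping, the $\assume$-blocking argument, and threading $\Old(\cdot)$ references back to the $H_0$-versioned symbols — are routine once (iv) is in place.
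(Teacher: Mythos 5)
Your proposal is correct and takes essentially the same route as the paper: the paper elides the proof, stating only that the key argument is that mutations, allocation, and deallocation are modeled \emph{precisely} by the symbolic heap map $H$ and allocated set $A$ for call-free blocks, and your adequacy/simulation lemma (clauses (i)--(iv), proved by induction over the block) is exactly a formalization of that claim. The difficulties you flag --- transferring least fixpoints to the re-versioned definitions $I[H']$ and tracking the heap domain across allocation and deallocation --- are genuine but are likewise left implicit by the paper, so there is no divergence in approach.
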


We elide the proofs of these theorems as they are fairly straightforward from the definition of the VC transformer and the operational semantics (Appendix~\ref{app:op-sem}). The key argument in the proof of completeness is that we model mutations \emph{precisely} using updates to the heap map $H$. Similarly, allocation and deallocation are also modeled precisely using the symbolic allocated set $A$, which corresponds to the true allocated set as defined by the operational semantics for call-free blocks. 
Though the above theorem does not include function calls, our VC is also precise for basic blocks with function calls, with respect to a \emph{modular operational semantics}, where function calls are assumed to havoc the precondition's heaplet and replace it with one that satisfies the postcondition. 

Our result is interesting in its own right, and showcases the power of our framework. Contemporary works on VC generation for other logics such as Separation Logic are often not complete~\cite{BerdineCalcagnoOHearn2005}.

\subsection{Validating Verification Conditions}
\label{sec:vcreasoningshort}

In previous sections we described how we reduce validity of Hoare triples annotated with quantifier-free FL formulas to the validity of FL formulas. We then reason with the generated VCs in FL by (a) translating them to equi-valid FORD formulas, and (b) using \emph{natural proofs}, an FO-complete automation technique developed in prior work~\cite{qiu13,pek14,loding18} for checking validity of FORD formulas in the $\nplogic$ fragment. Natural proofs is based on systematic quantifier instantiation and SMT solving. The translation from FL to FORD in the first stage follows a construction similar to the one developed in earlier works on FL~\cite{esop2020framelogic,framelogictoplas2023}, but we take some additional care to extend the translation to handle the cloud operator and to ensure that the translations of our VCs fall into $\nplogic$. Intuitively, the translation encodes the semantics of the $\Sp(\cdot)$ and $\Cl{\cdot}$ operators in FORD itself. We describe both stages in detail in Appendix~\ref{sec:vcreasoning}.

\section{Implementation and Evaluation}
\label{sec:eval}

\subsection{Implementation}
We develop the Frame Logic Verifier (\textsc{FLV}) tool based on the mechanism described in Sections~\ref{sec:vcgen} and~\ref{sec:vcreasoningshort}. The tool allows users to write heap-manipulating programs and annotations in quantifier-free Frame Logic, using the support $\Sp(\cdot)$ and cloud $\Cl{\cdot}$ operators. The tool also allows users to write inductive lemmas which are proven automatically using a form of induction using pre-fixpoints~\cite{loding18}.

The FLV tool generates VCs that are translated to FORD. These quantifier-free FORD formulas are then validated using an existing solver for natural proofs~\cite{fossil} which implements the procedure described in Section~\ref{sec:natproofs}, instantiating recursive definitions and reasoning with the resulting quantifier-free formulas using {\sc Z3}~\cite{Z3}.   

The implementation is also optimized in several ways. First, we introduce new recursive definitions and frame rules only at certain key points--- before and after function calls, and at the end of a basic block. Note that the formulation in Section~\ref{sec:vcgen} introduces them across every mutation, allocation, deallocation, and function call (as described in Section~\ref{sec:vcgen}). These optimized frame rules `batch' the individual mutations, catering to the \emph{footprint} of the program between the key points. Second, the tool starts off with a thrifty set of instantiations commonly needed, based on previous work on natural proofs(~\cite{pek14}): recursive definitions and lemmas are instantiated on the relevant footprint from the last key point, and frame rules are instantiated on the current variables and one level of pointer dereferences on them before and after function calls and at the end of the basic block. Finally, we use an SMT solver to simplify the translations to FORD before validating them.

We develop the \SLFL\ verifier (SLFLV) on top of FLV. This tool takes programs annotated with \SLFL\ sepcfications. SLFLV translates these to programs with \textit{FL} annotations (see Figure~\ref{fig:sltofl}) and the resulting program is verified by FLV. 

The translation from \SLFL\ to \textit{FL} incurs some translation bloat in formula size. 
We develop an optimized version of the tool that controls this bloat, especially in nested conjunctions.
First, while translating \SLFL\ formulas to FL formulas bottom-up, we compute, for every \SLFL\ subformula $\alpha$, a simple formula FL formula $\alpha^\dagger$ such that $\Sp(\Pi(\alpha))=\Sp(\alpha^\dagger)$, and whenever we need to use the support of $\Pi(\alpha)$, we use the support of $\alpha^\dagger$ instead. In particular, for $\alpha=\beta \wedge \gamma$, we take
$\alpha^\dagger$ to be simply $\beta^\dagger$ (as the translation independently constrains $\Sp(\beta^\dagger)= \Sp(\gamma^\dagger)$). Second, we syntactically track subformulas whose support is empty, and 
remove constraints that check whether the intersection with these supports is empty. We evaluate the effect of these optimizations in our evaluation.

\subsection{Benchmark Suite}

We evaluate the expressiveness and performance of our contributions on a suite of data structure programs. Each program is annotated separately with \textit{FL} and \SLFL\ specifications. Our suite consists of 29 programs involving operations on data structures such as singly and doubly linked lists, circular lists, binary search trees, red-black trees, treaps, etc. We obtained this suite by distilling a core set of benchmarks from prior work~\cite{pek14}. We considered the various crucial data structure manipulation algorithms for each structure such as insertion, deletion, search, sorting, and traversals. A summary of our benchmark suite can be found in 
Appendix~\ref{app:eval}, along with examples of specifications, lemmas used, and an illustrative example of the VC generation.

\subsubsection{Specifications} We annotate our programs with complete functional specifications.
The contracts specify precisely the data structure returned and its content in terms of the effect the routine has on the input data structures and parameters.
We define data structures and various measures over them using recursive definitions and write specifications using them. To aid in writing post-conditions, we introduce an $\mathit{Old}$ operator. For a recursive function $R$, $\mathit{Old}(R(x))$ is equivalent to assuming $r_x = R(x)$ at the beginning of a program for a fresh variable $r_x$ and using that in place of $\mathit{Old}(R(x))$. In particular, this means that $\Sp(Old(R(x)))$ is empty as it is simply syntactic sugar for a variable. 

\subsubsection{Lemmas}
\label{sec:lemmas-discussion}

Though our procedures are FO-complete, i.e guaranteed to validate VCs when they are valid with respect to fixpoint interpretations of definitons, there is a gap between this semantics and the least fixpoint semantics of recursive definitions~\cite{loding18}. This gap is 
bridged using user given \emph{inductive lemmas}. Inductive lemmas are expected to be proved using a simple induction where the lemma itself acts as the induction hypothesis, and using a first-order proof without least fixpoint interpretations (see~\cite{qiu13,pek14,loding18}). 

Lemmas are typically required to relate different properties of a data structure, and are typically independent of the methods that work on the data structures (e.g., a sorted list is a singly linked list).

Our tools prove these lemmas using induction and natural proofs, and utilize the proven lemmas while proving verification conditions. 
Lemmas for benchmarks are currently written in frame logic. However, we can  facilitate writing lemmas in \SLFL\ as well, where we interpret an entailment $\alpha \models \beta$ as $\Pi(\alpha) \implies \Pi(\beta) \wedge Sp(\Pi(\alpha)) = Sp(\Pi(\beta))$.

Note that there has been a lot of orthogonal work that has addressed the problem of automatically synthesizing these lemmas, both for functional and imperative programs, with considerable success (incomplete, of course). See~\cite{ranjitlemmasynthesis24, fossil, fluid23, reynolds15, yang19, sivaraman22}.

\subsubsection{Definitions that have the same supports}

The user is also allowed to declare that the supports of a pair of recursive definitions is equal. For example, over trees, the support of $\mathit{BST}$, $\mathit{Keys}$, $\mathit{Min}$, $\mathit{Max}$, $\mathit{Height}$ are all the same.
The tool proves this property (treating this as an inductive lemma) and cuts down on the number of recursive definitions of supports, which in turn leads to better performance.

\begin{table}[t]\footnotesize
    \centering
    \begin{tabular}[t]{l l | r r r r r}\hline
Data
&Benchmark & Number of  &Number of  & \textit{FL} Verif. &  \SLFL\  Verif. Time & \SLFL\ Verif. Time \\

Structure
 &&Basic Blocks & Permission Checks  & Time& (unoptimized) & (optimized)   \\ \hline
\multirow{10}{*}{SLL}
& Append        &$2$ & $5$ & $1$s & $1$s & $1$s \\
& Copy All      &$2$ & $6$ & $1$s & $1$s & $1$s\\
& Delete        &$3$ & $9$ & $1$s & $1$s & $1$s\\
& Find          &$3$ & $5$ & $1$s & $1$s & $1$s\\
& Insert Back   &$2$ & $7$ & $1$s & $1$s & $1$s\\
& Insert Front  &$1$ & $3$ & $1$s & $1$s & $1$s\\
& Reverse       &$3$ &  $7$ &$1$s & $1$s & $1$s\\
&Insertion Sort    &$5$  & $14$ &  $3$s & $4$s  & $3$s \\
&Merge Sort        &$10$ & $26$ & $6$s & $7$s & $6$s \\
&Quick Sort        &$9$  & $25$ &  $13$s & $18$s& $15$s \\

\hline

\multirow{4}{*}{DLL}
& Insert Back   &$2$ & $9$ & $1$s & $1$s & $1$s \\
& Insert Front  &$2$ & $8$ & $1$s & $1$s & $1$s \\
& Delete Mid    &$4$ & $11$& $2$s & $3$s & $3$s \\
& Insert Mid    &$4$ & $10$& $2$s & $3$s & $3$s \\

\hline

\multirow{3}{*}{\makecell[l]{Sorted \\ List}}
& Delete     &$3$ &  $9$ &  $1$s & $1$s & $1$s\\
& Find       &$4$ & $6$ &  $1$s  & $1$s & $1$s\\
& Insert     &$3$ & $10$ & $2$s  & $3$s & $3$s\\

\hline

\multirow{3}{*}{\makecell[l]{CL}}
& Delete Front   &$5$ & $12$ & $6$s & $6$s & $6$s \\
& Find           &$6$ & $10$ & $2$s  & $3$s  & $3$s \\
& Insert Front   &$2$ & $6$  & $2$s & $2$s  & $2$s \\

\hline

\multirow{4}{*}{\makecell[l]{BST}}
& Delete      &$8$   & $27$ &  $8$s & $11$s & $10$s\\
& Find        &$4$   & $8$  &  $2$s  & $2$s  & $2$s\\
& Insert      &$4$   & $13$ &  $3$s  & $4$s  & $4$s\\
& Rotate Right &$1$   & $5$  &  $1$s  & $1$s  & $1$s\\

\hline

\multirow{2}{*}{\makecell[l]{Tree \\ Traversals}}
& BST to List        &$3$ & $11$ & $15$s & $22$s & $19$s\\
& In-Order Traversal &$2$ & $8$  & $3$s  & $7$s & $8$s\\

\hline

\multirow{2}{*}{\makecell[l]{Treap}}
& Delete      &$9$ & $32$ &  $18$s & $24$s & $20$s\\
& Find        &$4$ & $8$ &  $3$s   & $3$s & $3$s\\

\hline

\multirow{1}{*}{\makecell[l]{RBT}}
& Insert     &$11$ &$83$ &$42$s &$1$m $6$s &$56$s\\

\hline

\end{tabular}
\caption{Performance of verifications tools (SLL = Singly-Linked List, DLL = Doubly-Linked List, CL = Circular List, BST = Binary Search Tree, RBT = Red-Black Tree).}
\label{tab:eval-performance}
\vspace{-1 cm}
\end{table}

\vspace{-0.2 cm}
\subsection{Evaluation}
We evaluate our tool on our suite of benchmarks. 
The experiments were performed on a MacBook Pro (Mac14,9) equipped with an Apple M2 Pro processor, 10 CPU cores, and 16 GB of RAM.

We present our evaluation in Table~\ref{tab:eval-performance}. 
The table gives the number of basic blocks in each program and the number of permission checks--- checking whether all dereferenced locations are contained in the support of the precondition or are newly allocated, and checking if the permissions granted to functions that are called are subsets of the method's permissions and the tightness of returned permissions (support of the postcondition). The verification times for FL and the times taken by the unoptimized and optimized versions of the tool for SL benchmarks are reported. This time includes (a) translation between logics--- \SLFL\ to FL, FL to FORD--- including optimizations, as well as the verification of (b) all VCs, including permission checks, (c) lemmas, and (d) equality of supports. 

Our tool verifies all benchmarks effectively, in less than a minute for all benchmarks. 
Over all these benchmarks, the instantiation that we identify to begin with already sufficed in proving the programs correct, and no further instantiation was needed. 

We observe that \SLFL\ benchmarks take longer than \textit{FL}, as expected.
We believe to be both due to translation bloat as well as certain superfluous checks that typical formulations in separation logic mandate. However, our optimization reduces the verification time by $4\%$ on average and comes closer to the FL verification time.

\paragraph{When Provers Fail: Bug Finding and Counterexamples} There are two parts to our verification methodology: (1) generating VCs in FL (and ultimately FORD), and (2) using an FO-complete validity technique to discharge the VCs (see Section~\ref{sec:vcreasoningshort}). FL/FORD interpret recursive definitions with the intended least fixpoint semantics, but the VC validity checking techniques we employ do not--- they are only complete with respect to a fixpoint semantics. Note that the set of models where definitions respect fixpoint semantics is larger. 
For example, consider the definition $\lst(x) := (x = \nil) \lor \lst(\nxt(x))$. Under fixpoint semantics, the element $e_1$ in the model $e_1 \overset{\nxt}{\rightarrow} e_2 \overset{\nxt}{\rightarrow} \nil$ would, of course, satisfy the definition of $\lst$, but it could also be interpreted as pointing to a $\lst$ in the model $e_1 \overset{\nxt}{\rightarrow} e_2 \overset{\nxt}{\rightarrow} e_1$ representing a two-element cycle. 
The latter model would not satisfy the definition under least fixpoint semantics.  
Specifically, this means that when our FORD prover fails, it fails with respect to this larger class of models.
In practice, there is also another source of failure from insufficient instantiations. Our validity technique is based on systematic instantiation of definitions (``unrolling'') and other quantified formulas (such as frame reasoning axioms). The instantiation happens in rounds, and FO-completeness holds in the limit: if the VC is valid under fixpoint semantics, the prover will eventually succeed in some round. Each round sends a decidable quantifier-free query to an SMT solver.

Overall, there are three sources of failure for the prover: (a) the VC is FO-valid but the instantiations are insufficient, (b) the VC is valid in FO+\textit{lfp} but not in FO, and (c) the VC is invalid in FO+\textit{lfp} (i.e., program/spec has a bug). The model returned by the SMT solver in a particular round only witnesses the non-provability of the VC with the available set of instantiations and does not disambiguate between the three cases. 

When we developed our benchmark suite, we identified bugs in our specs and implementations by simply isolating the basic blocks that our solver could not prove and tracing proof obligations semi-automatically. For example, we isolated conjuncts in the postcondition that could not be proven, or added extra assertions at certain intermediate points within the basic block. This helped us identify many sources of error, among them insufficient preconditions, bugs in implementation where we missed corner cases, and the need for certain inductive lemmas. Note that inductive lemmas address the second failure mode, as discussed in Section~\ref{sec:lemmas-discussion}. We also found it particularly helpful to weaken our postconditions to Relaxed or Heapless postconditions to triage bugs pertaining to proofs regarding supports and the allocated set of locations. This allowed us, for instance, to strengthen contracts for function calls with properties about supports.

There are many approaches that can help disambiguate between VCs that are truly invalid and those that are simply not FO-valid (needing an inductive lemma). One possibility is to ask an SMT solver for a true counterexample (a model with least fixpoint interpretations, as opposed to a nonstandard model) of bounded size, say 5-6 locations. This is hard in general as one needs to express the unbounded computation involved in determining least fixpoints, but establishing practical bounds on the length of the computation may work as well. However, we believe that the design of our logics makes this approach particularly attractive, and leave it to future work to determine a general technique for synthesizing counterexamples to invalid formulas in FO-complete heap logics.

\section{Related Work}
\label{sec:relwork}

The work on Frame Logic~\cite{framelogictoplas2023,esop2020framelogic} is the most relevant prior work, which we have discussed at length in the paper. There are many other logics~\cite{reynolds02,RegionLogic,kassios06,idf,pek14,bobot12,hobor13} for heaps that make different choices in terms of whether heaplets of formulae are implicit, whether formulas have unique heaps, etc. The most popular among these is Separation logic~\cite{demri15,seplogicprimer,ohearn01,reynolds02} which has implicit but non-unique heaplets for formulas. 
%
%
%
Dynamic Frames~\cite{kassios06,DynamicFrames2011} and related approaches like Region Logic~\cite{RegionLogic1,RegionLogic2,RegionLogic} allow users to explicitly specify heaplets and uses them to reason about the program's modifications.
The work on Implicit Dynamic Frames~\cite{smans12,chalice,idf,parkinson11} combines the ideas of implicit heaplets in separation logic 
and explicitly accessible heaplets in dynamic frames, resulting in implementations such as Chalice~\cite{parkinson11} and Viper~\cite{vipertool,viper-vcgen-technique,vipercav24}. In particular, Viper provides a general framework into which many permission logics can be encoded, including Separation Logic. However, it generates quantified verification conditions to SMT solvers over inherently incomplete logics. There is also work on translating VeriFast~\cite{verifast} predicates into Implicit Dynamic Frames~\cite{jost14}.

There is rich prior work on reasoning with separation logic. A first category of prior works develop fragments with decidable reasoning~\cite{Smallfoot,BerdineCalcagnoOHearn2004,BerdineCalcagnoOHearn2005,CookHaaseChristoph2011,PerezAntonioRybalchenko2011,PerezRybalchenko2013,PiskacWiesZufferey2013,guarded-wand-pagel2020,twbcade13}. There is also work on decidability of heap properties specifiable in EPR (Effectively Propositional Reasoning)~\cite{Itzhaky2014,Itzhaky2013,ItzhakyEPRInvariants}. A second category of works translate separation logic to first-order logic and use reasoning mechanisms for FOL~\cite{ChinDavidNguyen2007,pek14,PiskacWiesZufferey2013,PiskacWiesZufferey2014,PiskacWiesZufferey2014Tool,madhusudan12,qiu13,suter10,loding18}. There are also techniques that reason with separation logic using cyclic proofs~\cite{brotherston11,ta16}.

The work in~\cite{bobot12} develops a logic similar to FL and uses Why3~\cite{filliatre13esop,why3logic} to reason with verification conditions. Recent work~\cite{pldi2024IDS} identifies a 
technique that avoids incompleteness and the use of inductive lemmas, but requires developing a set of maps upfront for a set of properties and have to be redesigned when the properties change.

We use in our work the reasoning technique of natural proofs~\cite{madhusudan12,qiu13,pek14,loding18}. The technique of quantifier instantiation used by natural proofs is similar to many other unfolding-based techniques in the literature~\cite{suter10,nguyen08,kaufmann97,verifast,leino12}.

The work on FL formulates a fragment of separation logic called Precise Separation Logic which has determined, unique heaplets and provides a translation of the fragment to FL. Other prior literature has also studied fragments with determined heaplets~\cite{qiu13,pek14,ohearn04} as it is easier to develop automated reasoning for them. We believe that the techniques developed in this work can open up new pathways to automating separation logics.

\section{Conclusions}
\label{sec:conclusion}

We have presented two logics, a frame logic and a separation logic, which admit FO-complete program verification, where the latter demands that all valid theorems under fixpoint semantics of recursive definitions are eventually proved by a system. Our work navigates various threats to incompleteness, carefully designing logics with particular semantics as well as reasoning procedures for them that 
are FO-complete, and are also effective in practice.

The most promising future direction that our work paves is to build new program logics and verification paradigms that are held to the theoretical standard of being FO-complete. Our evaluation suggests that by having FO-complete procedures that only make repeated calls to SMT solvers on decidable logics, 
efficient and FO-complete procedures are possible. Realizing such an FO-complete reasoning for a rich programming language through a mature VC generation pipeline. such as the Viper/Boogie pipeline, is hence a pressing direction to pursue.

In order to reason with first-order logics with recursive definitions, recent work~\cite{fossil,ranjitlemmasynthesis24} has proposed inductive lemma synthesis algorithms. 
Incorporation of lemma synthesis into our verification framework would alleviate the current burden on the programmer to write these inductive lemmas (described for our benchmarks in~Section~\ref{sec:eval}). 

We believe that our alternate semantics for separation logic \SLFL\  is worthy of further study. In particular, since we have shown that its reasoning can be automated using FO reasoning, a track in an SL competition like {\sc SLComp}~\cite{slcomp} would be interesting to encourage competitive tool development.

\newpage

\section*{Data-Availability Statement}
Reviewers may find the \SLFL\ and \FL\ benchmarks evaluated on in the following anonymous repository: \url{https://github.com/verifproj/complete-logics-bench}. The VC-generation tool will be released for artifact evaluation.

\bibliographystyle{ACM-Reference-Format}
\bibliography{refs}

@InProceedings{viper-vcgen-technique,
author="Heule, Stefan
and Kassios, Ioannis T.
and M{\"u}ller, Peter
and Summers, Alexander J.",
editor="Castagna, Giuseppe",
title="Verification Condition Generation for Permission Logics with Abstract Predicates and Abstraction Functions",
booktitle="ECOOP 2013 -- Object-Oriented Programming",
year="2013",
publisher="Springer Berlin Heidelberg",
address="Berlin, Heidelberg",
pages="451--476",
abstract="Abstract predicates are the primary abstraction mechanism for program logics based on access permissions, such as separation logic and implicit dynamic frames. In addition to abstract predicates, it is useful to also support classical abstraction functions, for instance, to encode side-effect-free methods of the program and use them in specifications. However, combining abstract predicates and abstraction functions in a verification condition generator leads to subtle interactions, which complicate reasoning about heap modifications. Such complications may compromise soundness or cause divergence of the prover in the context of automated verification. In this paper, we present an encoding of abstract predicates and abstraction functions in the verification condition generator Boogie. Our encoding is sound and handles recursion in a way that is suitable for automatic verification using SMT solvers. It is implemented in the automatic verifier Chalice.",
isbn="978-3-642-39038-8"
}

@inproceedings{Z3,
  author    = {De Moura, Leonardo and Bj{\o}rner, Nikolaj},
  title     = {Z3: An Efficient SMT Solver},
  booktitle = {Proceedings of the 14th International Conference on Tools and Algorithms for the Construction and Analysis of Systems},
  series    = {TACAS'08},
  year      = {2008},
  isbn      = {3-540-78799-2, 978-3-540-78799-0},
  location  = {Budapest, Hungary},
  pages     = {337--340},
  numpages  = {4},
  publisher = {Springer-Verlag},
  address   = {Berlin, Heidelberg}
}

@inproceedings{nguyen08,
author = {Nguyen, Huu Hai and Chin, Wei-Ngan},
title = {Enhancing Program Verification with Lemmas},
year = {2008},
isbn = {9783540705437},
publisher = {Springer-Verlag},
address = {Berlin, Heidelberg},
url = {https://doi.org/10.1007/978-3-540-70545-1_34},
doi = {10.1007/978-3-540-70545-1_34},
booktitle = {Proceedings of the 20th International Conference on Computer Aided Verification},
pages = {355–369},
numpages = {15},
keywords = {Program Verification, Lemma Application, Lemma Proving, Separation Logic, Entailment},
location = {Princeton, NJ, USA},
series = {CAV ’08}
}

@article{kaufmann97,
author = {Kaufmann, Matt and Moore, J. S.},
title = {An Industrial Strength Theorem Prover for a Logic Based on Common Lisp},
year = {1997},
issue_date = {April 1997},
publisher = {IEEE Press},
volume = {23},
number = {4},
issn = {0098-5589},
url = {https://doi.org/10.1109/32.588534},
doi = {10.1109/32.588534},
journal = {IEEE Trans. Softw. Eng.},
month = apr,
pages = {203–213},
numpages = {11},
keywords = {partial functions, automatic theorem proving, Formal verification, microcode verification, floating point division, digital signal processing., type checking, computational logic, total functions}
}

@inproceedings{leino12,
author = {Leino, K. Rustan M.},
title = {Automating Induction with an SMT Solver},
year = {2012},
isbn = {9783642279393},
publisher = {Springer-Verlag},
address = {Berlin, Heidelberg},
url = {https://doi.org/10.1007/978-3-642-27940-9_21},
doi = {10.1007/978-3-642-27940-9_21},
booktitle = {Proceedings of the 13th International Conference on Verification, Model Checking, and Abstract Interpretation},
pages = {315–331},
numpages = {17},
location = {Philadelphia, PA},
series = {VMCAI’12}
}

@article{framelogictoplas2023,
author = {Murali, Adithya and Pe\~{n}a, Lucas and L\"{o}ding, Christof and Madhusudan, P.},
title = {A First-Order Logic with Frames},
year = {2023},
issue_date = {June 2023},
publisher = {Association for Computing Machinery},
address = {New York, NY, USA},
volume = {45},
number = {2},
issn = {0164-0925},
url = {https://doi.org/10.1145/3583057},
doi = {10.1145/3583057},
abstract = {We propose a novel logic, Frame Logic (FL), that extends first-order logic and recursive definitions with a construct Sp(·) that captures the implicit supports of formulas—the precise subset of the universe upon which their meaning depends. Using such supports, we formulate proof rules that facilitate frame reasoning elegantly when the underlying model undergoes change. We show that the logic is expressive by capturing several data-structures and also exhibit a translation from a precise fragment of separation logic to frame logic. Finally, we design a program logic based on frame logic for reasoning with programs that dynamically update heaps that facilitates local specifications and frame reasoning. This program logic consists of both localized proof rules as well as rules that derive the weakest tightest preconditions in frame logic.},
journal = {ACM Trans. Program. Lang. Syst.},
month = {may},
articleno = {7},
numpages = {44},
keywords = {Frame reasoning, program logics, first-order logic with recursive definitions, program verification, first-order logic, heap verification}
}

@InProceedings{esop2020framelogic,
author="Murali, Adithya
and Pe{\~{n}}a, Lucas
and L{\"o}ding, Christof
and Madhusudan, P.",
editor="M{\"u}ller, Peter",
title="A First-Order Logic with Frames",
booktitle="Programming Languages and Systems",
year="2020",
publisher="Springer International Publishing",
address="Cham",
pages="515--543",
isbn="978-3-030-44914-8"
}

@article{nelson-oppen1979,
author = {Nelson, Greg and Oppen, Derek C.},
title = {Simplification by Cooperating Decision Procedures},
year = {1979},
issue_date = {Oct. 1979},
publisher = {Association for Computing Machinery},
address = {New York, NY, USA},
volume = {1},
number = {2},
issn = {0164-0925},
url = {https://doi.org/10.1145/357073.357079},
doi = {10.1145/357073.357079},
abstract = {A method for combining decision procedures for several theories into a single decision procedure for their combination is described, and a simplifier based on this method is discussed. The simplifier finds a normal form for any expression formed from individual variables, the usual Boolean connectives, the equality predicate =, the conditional function if-then-else, the integers, the arithmetic functions and predicates +, -, and ≤, the Lisp functions and predicates car, cdr, cons, and atom, the functions store and select for storing into and selecting from arrays, and uninterpreted function symbols. If the expression is a theorem it is simplified to the constant true, so the simplifier can be used as a decision procedure for the quantifier-free theory containing these functions and predicates. The simplifier is currently used in the Stanford Pascal Verifier.},
journal = {ACM Trans. Program. Lang. Syst.},
month = {Oct},
pages = {245–257},
numpages = {13}
}

@InProceedings{vipercav24,
author="Eilers, Marco
and Schwerhoff, Malte
and M{\"u}ller, Peter",
editor="Gurfinkel, Arie
and Ganesh, Vijay",
title="Verification Algorithms for Automated Separation Logic Verifiers",
booktitle="Computer Aided Verification",
year="2024",
publisher="Springer Nature Switzerland",
address="Cham",
pages="362--386",
abstract="Most automated program verifiers for separation logic use either symbolic execution or verification condition generation to extract proof obligations, which are then handed over to an SMT solver. Existing verification algorithms are designed to be sound, but differ in performance and completeness. These characteristics may also depend on the programs and properties to be verified. Consequently, developers and users of program verifiers have to select a verification algorithm carefully for their application domain. Taking an informed decision requires a systematic comparison of the performance and completeness characteristics of the verification algorithms used by modern separation logic verifiers, but such a comparison does not exist.",
isbn="978-3-031-65627-9"
}

@inproceedings{ohearn04,
 author = {O'Hearn, Peter W. and Yang, Hongseok and Reynolds, John C.},
 title = {Separation and Information Hiding},
 booktitle = {Proceedings of the 31st ACM SIGPLAN-SIGACT Symposium on Principles of Programming Languages},
 series = {POPL '04},
 year = {2004},
 isbn = {1-58113-729-X},
 location = {Venice, Italy},
 pages = {268--280},
 numpages = {13},
 doi = {10.1145/964001.964024},
 acmid = {964024},
 publisher = {ACM},
 address = {New York, NY, USA},
 keywords = {modularity, resource protection, separation logic},
}

@article{demri15,
  title={Separation logics and modalities: a survey},
  author={St{\'e}phane Demri and Morgan Deters},
  journal={Journal of Applied Non-Classical Logics},
  year={2015},
  volume={25},
  pages={50-99}
}

@incollection{seplogicprimer,
  author    = {Peter W. O'Hearn},
  editor    = {Tobias Nipkow and
               Orna Grumberg and
               Benedikt Hauptmann},
  title     = {A Primer on Separation Logic (and Automatic Program Verification and
               Analysis)},
  booktitle = {Software Safety and Security - Tools for Analysis and Verification},
  series    = {{NATO} Science for Peace and Security Series - {D:} Information and
               Communication Security},
  volume    = {33},
  pages     = {286--318},
  publisher = {{IOS} Press},
  year      = {2012},
  url       = {https://doi.org/10.3233/978-1-61499-028-4-286},
  doi       = {10.3233/978-1-61499-028-4-286},
  timestamp = {Wed, 04 Mar 2020 14:10:38 +0100},
  biburl    = {https://dblp.org/rec/series/natosec/OHearn12.bib},
  bibsource = {dblp computer science bibliography, https://dblp.org}
}

@inproceedings{ohearn01,
 author = {O'Hearn, Peter W. and Reynolds, John C. and Yang, Hongseok},
 title = {Local Reasoning About Programs That Alter Data Structures},
 booktitle = {Proceedings of the 15th International Workshop on Computer Science Logic},
 series = {CSL '01},
 year = {2001},
 isbn = {3-540-42554-3},
 pages = {1--19},
 numpages = {19},
 url = {http://dl.acm.org/citation.cfm?id=647851.737404},
 acmid = {737404},
 publisher = {Springer-Verlag},
 address = {London, UK, UK},
}

@inproceedings{reynolds02,
author = {Reynolds, John C.},
title = {Separation Logic: A Logic for Shared Mutable Data Structures},
year = {2002},
isbn = {0769514839},
publisher = {IEEE Computer Society},
address = {USA},
abstract = {In joint work with Peter O'Hearn and others, based on early ideas of Burstall, we have developed an extension of Hoare logic that permits reasoning about low-level imperativeprograms that use shared mutable data structure.The simple imperative programming language is extended with commands (not expressions) for accessing and modifying shared structures, and for explicit allocation and deallocation of storage. Assertions are extended by introducing a "separating conjunction" that asserts that its sub-formulas hold for disjoint parts of the heap, and a closely related "separating implication". Coupled with the inductive definition of predicates on abstract data structures, this extension permits the concise and flexible description of structures with controlled sharing.In this paper, we will survey the current development of this program logic, including extensions that permit unrestricted address arithmetic, dynamically allocated arrays, and recursive procedures. We will also discuss promising future directions.},
booktitle = {Proceedings of the 17th Annual IEEE Symposium on Logic in Computer Science},
pages = {55–74},
numpages = {20},
series = {LICS '02}
}

@article{smans12,
 author = {Smans, Jan and Jacobs, Bart and Piessens, Frank},
 title = {Implicit Dynamic Frames},
 journal = {ACM Trans. Program. Lang. Syst.},
 issue_date = {April 2012},
 volume = {34},
 number = {1},
 month = may,
 year = {2012},
 issn = {0164-0925},
 pages = {2:1--2:58},
 articleno = {2},
 numpages = {58},
 doi = {10.1145/2160910.2160911},
 acmid = {2160911},
 publisher = {ACM},
 address = {New York, NY, USA},
 keywords = {Program verification, frame problem, separation logic},
}

@article{loding18,
  author    = {Christof L{\"{o}}ding and
               P. Madhusudan and
               Lucas Pe{\~{n}}a},
  title     = {Foundations for natural proofs and quantifier instantiation},
  journal   = {{PACMPL}},
  volume    = {2},
  number    = {{POPL}},
  pages     = {10:1--10:30},
  year      = {2018},
  doi       = {10.1145/3158098},
  timestamp = {Tue, 06 Nov 2018 12:51:05 +0100},
  biburl    = {https://dblp.org/rec/bib/journals/pacmpl/LodingMP18},
  bibsource = {dblp computer science bibliography, https://dblp.org}
}

@InProceedings{kassios06,
author="Kassios, Ioannis T.",
editor="Misra, Jayadev
and Nipkow, Tobias
and Sekerinski, Emil",
title="Dynamic Frames: Support for Framing, Dependencies and Sharing Without Restrictions",
booktitle="FM 2006: Formal Methods",
year="2006",
publisher="Springer-Verlag",
address="Berlin, Heidelberg",
pages="268--283",
abstract="This paper addresses the frame problem for programming theories that support both sharing and encapsulation through specification variables. The concept of dynamic frames is introduced. It is shown how a programming theory with dynamic frames supports both features, without the use of alias control or any other kind of restriction. In contrast, other approaches introduce a number of restrictions to the programs to ensure soundness.",
isbn="978-3-540-37216-5"
}

@inproceedings{madhusudan12,
 author = {P. Madhusudan and Qiu, Xiaokang and \c{S}tef\u{a}nescu, Andrei},
 title = {Recursive Proofs for Inductive Tree Data-structures},
 booktitle = {Proceedings of the 39th Annual ACM SIGPLAN-SIGACT Symposium on Principles of Programming Languages},
 series = {POPL '12},
 year = {2012},
 isbn = {978-1-4503-1083-3},
 location = {Philadelphia, PA, USA},
 pages = {123--136},
 numpages = {14},
 doi = {10.1145/2103656.2103673},
 acmid = {2103673},
 publisher = {ACM},
 address = {New York, NY, USA},
 keywords = {heap analysis, recursive program, smt solvers, tree},
}

@inproceedings{pek14,
 author = {Pek, Edgar and Qiu, Xiaokang and Madhusudan, P.},
 title = {Natural Proofs for Data Structure Manipulation in {C} Using Separation Logic},
 booktitle = {Proceedings of the 35th ACM SIGPLAN Conference on Programming Language Design and Implementation},
 series = {PLDI '14},
 year = {2014},
 isbn = {978-1-4503-2784-8},
 location = {Edinburgh, United Kingdom},
 pages = {440--451},
 numpages = {12},
 doi = {10.1145/2594291.2594325},
 acmid = {2594325},
 publisher = {ACM},
 address = {New York, NY, USA},
 keywords = {data structures, natural proofs, program verifiers, separation logic},
}

@inproceedings{qiu13,
 author = {Qiu, Xiaokang and Garg, Pranav and \c{S}tef\u{a}nescu, Andrei and P. Madhusudan},
 title = {Natural Proofs for Structure, Data, and Separation},
 booktitle = {Proceedings of the 34th ACM SIGPLAN Conference on Programming Language Design and Implementation},
 series = {PLDI '13},
 year = {2013},
 isbn = {978-1-4503-2014-6},
 location = {Seattle, Washington, USA},
 pages = {231--242},
 numpages = {12},
 doi = {10.1145/2491956.2462169},
 acmid = {2462169},
 publisher = {ACM},
 address = {New York, NY, USA},
 keywords = {data structures, heap analysis, natural proofs, separation logic, smt solvers},
}

@inproceedings{vipertool,
author = {M\"{u}ller, Peter and Schwerhoff, Malte and Summers, Alexander J.},
title = {Viper: A Verification Infrastructure for Permission-Based Reasoning},
year = {2016},
isbn = {9783662491218},
publisher = {Springer-Verlag},
address = {Berlin, Heidelberg},
url = {https://doi.org/10.1007/978-3-662-49122-5_2},
doi = {10.1007/978-3-662-49122-5_2},
abstract = {The automation of verification techniques based on first-order logic specifications has benefitted greatly from verification infrastructures such as Boogie and Why. These offer an intermediate language that can express diverse language features and verification techniques, as well as back-end tools: in particular, verification condition generators.However, these infrastructures are not well suited to verification techniques based on separation logic and other permission logics, because they do not provide direct support for permissions and because existing tools for these logics often favour symbolic execution over verification condition generation. Consequently, tool support for these logics where available is typically developed independently for each technique, dramatically increasing the burden of developing automatic tools for permission-based verification.In this paper, we present a verification infrastructure whose intermediate language supports an expressive permission model natively. We provide tool support including two back-end verifiers: one based on symbolic execution, and one on verification condition generation; an inference tool based on abstract interpretion is currently under development. A wide range of existing verification techniques can be implemented via this infrastructure, alleviating much of the burden of building permission-based verifiers, and allowing the developers of higher-level reasoning techniques to focus their efforts at an appropriate level of abstraction.},
booktitle = {Proceedings of the 17th International Conference on Verification, Model Checking, and Abstract Interpretation - Volume 9583},
pages = {41–62},
numpages = {22},
location = {St. Petersburg, FL, USA},
series = {VMCAI 2016}
}

@inproceedings{suter10,
 author = {Suter, Philippe and Dotta, Mirco and Kun{\'c}ak, Viktor},
 title = {Decision Procedures for Algebraic Data Types with Abstractions},
 booktitle = {Proceedings of the 37th Annual ACM SIGPLAN-SIGACT Symposium on Principles of Programming Languages},
 series = {POPL '10},
 year = {2010},
 isbn = {978-1-60558-479-9},
 location = {Madrid, Spain},
 pages = {199--210},
 numpages = {12},
 doi = {10.1145/1706299.1706325},
 acmid = {1706325},
 publisher = {ACM},
 address = {New York, NY, USA},
 keywords = {algebraic data type, catamorphism, data structure, decision procedure},
}

@inproceedings{brotherston11,
 author = {Brotherston, James and Distefano, Dino and Petersen, Rasmus Lerchedahl},
 title = {Automated Cyclic Entailment Proofs in Separation Logic},
 booktitle = {Proceedings of the 23rd International Conference on Automated Deduction},
 series = {CADE'11},
 year = {2011},
 isbn = {978-3-642-22437-9},
 pages = {131--146},
 numpages = {16},
 url = {http://dl.acm.org/citation.cfm?id=2032266.2032278},
 acmid = {2032278},
 publisher = {Springer-Verlag},
 address = {Berlin, Heidelberg},
}

@InProceedings{ta16,
author="Ta, Quang-Trung
and Le, Ton Chanh
and Khoo, Siau-Cheng
and Chin, Wei-Ngan",
editor="Fitzgerald, John
and Heitmeyer, Constance
and Gnesi, Stefania
and Philippou, Anna",
title="Automated Mutual Explicit Induction Proof in Separation Logic",
booktitle="FM 2016: Formal Methods",
year="2016",
publisher="Springer International Publishing",
address="Cham",
pages="659--676",
doi = "10.1007/978-3-319-48989-6_40"
}

@article{DynamicFrames2011,
 author = {Kassios, I. T.},
 title = {The Dynamic Frames Theory},
 journal = {Form. Asp. Comput.},
 issue_date = {May 2011},
 volume = {23},
 number = {3},
 month = may,
 year = {2011},
 issn = {0934-5043},
 pages = {267--288},
 numpages = {22},
 doi = {10.1007/s00165-010-0152-5},
 acmid = {1971144},
 publisher = {Springer-Verlag},
 address = {London, UK, UK},
 keywords = {Specification languages},
}

@article{RegionLogic1,
 author = {Banerjee, Anindya and Naumann, David A. and Rosenberg, Stan},
 title = {Local Reasoning for Global Invariants, {P}art {I}: Region Logic},
 journal = {J. ACM},
 issue_date = {June 2013},
 volume = {60},
 number = {3},
 month = jun,
 year = {2013},
 issn = {0004-5411},
 pages = {18:1--18:56},
 articleno = {18},
 numpages = {56},
 url = {http://doi.acm.org/10.1145/2485982},
 acmid = {2485982},
 publisher = {ACM},
 address = {New York, NY, USA},
 keywords = {Modularity, data abstraction, data invariants, heap separation, information hiding, resource protection},
}

@article{RegionLogic2,
author = {Banerjee, Anindya and Naumann, David A.},
title = {Local Reasoning for Global Invariants, Part II: Dynamic Boundaries},
year = {2013},
issue_date = {June 2013},
publisher = {Association for Computing Machinery},
address = {New York, NY, USA},
volume = {60},
number = {3},
issn = {0004-5411},
url = {https://doi.org/10.1145/2485981},
doi = {10.1145/2485981},
abstract = {Dedicated to the memory of John C. Reynolds (1935--2013).The hiding of internal invariants creates a mismatch between procedure specifications in an interface and proof obligations on the implementations of those procedures. The mismatch is sound if the invariants depend only on encapsulated state, but encapsulation is problematic in contemporary software due to the many uses of shared mutable objects. The mismatch is formalized here in a proof rule that achieves flexibility via explicit restrictions on client effects, expressed using ghost state and ordinary first order assertions. The restrictions amount to a stateful frame condition that must be satisfied by any client; this dynamic encapsulation boundary complements conventional scope-based encapsulation. The technical development is based on a companion article, Part I, that presents Region Logic---a programming logic with stateful frame conditions for commands.},
journal = {J. ACM},
month = {jun},
articleno = {19},
numpages = {73},
keywords = {heap separation, information hiding, resource protection, Modularity, data abstraction, data invariants}
}

@InProceedings{RegionLogic,
author="Banerjee, Anindya
and Naumann, David A.
and Rosenberg, Stan",
editor="Vitek, Jan",
title="Regional Logic for Local Reasoning about Global Invariants",
booktitle="ECOOP 2008 -- Object-Oriented Programming",
year="2008",
publisher="Springer Berlin Heidelberg",
address="Berlin, Heidelberg",
pages="387--411",
abstract="Shared mutable objects pose grave challenges in reasoning, especially for data abstraction and modularity. This paper presents a novel logic for error-avoiding partial correctness of programs featuring shared mutable objects. Using a first order assertion language, the logic provides heap-local reasoning about mutation and separation, via ghost fields and variables of type `region' (finite sets of object references). A new form of modifies clause specifies write, read, and allocation effects using region expressions; this supports effect masking and a frame rule that allows a command to read state on which the framed predicate depends. Soundness is proved using a standard program semantics. The logic facilitates heap-local reasoning about object invariants: disciplines such as ownership are expressible but not hard-wired in the logic.",
isbn="978-3-540-70592-5"
}

@InProceedings{chalice,
author="Leino, K. Rustan M.
and M{\"u}ller, Peter",
editor="Castagna, Giuseppe",
title="A Basis for Verifying Multi-threaded Programs",
booktitle="Programming Languages and Systems",
year="2009",
publisher="Springer Berlin Heidelberg",
address="Berlin, Heidelberg",
pages="378--393",
isbn="978-3-642-00590-9",
doi={10.1007/978-3-642-00590-9_27}
}

@InProceedings{idf,
author="Smans, Jan
and Jacobs, Bart
and Piessens, Frank",
editor="Drossopoulou, Sophia",
title="Implicit Dynamic Frames: Combining Dynamic Frames and Separation Logic",
booktitle="ECOOP 2009 -- Object-Oriented Programming",
year="2009",
publisher="Springer Berlin Heidelberg",
address="Berlin, Heidelberg",
pages="148--172",
doi={10.1007/978-3-642-03013-0_8}
}

@INPROCEEDINGS{ChinDavidNguyen2007, 
author = {Chin, Wei-Ngan and David, Cristina and Nguyen, Huu Hai and Qin, Shengchao},
title = {Automated Verification of Shape, Size and Bag Properties},
year = {2007},
isbn = {0769528953},
publisher = {IEEE Computer Society},
address = {USA},
url = {https://doi.org/10.1109/ICECCS.2007.17},
doi = {10.1109/ICECCS.2007.17},
abstract = {In recent years, separation logic has emerged as a contender for formal reasoning of heap-manipulating imperative programs. Recent works have focused on specialised provers that are mostly based on fixed sets of predicates. To improve expressivity, we have proposed a prover that can automatically handle user-defined predicates. These shape predicates allow programmers to describe a wide range of data structures with their associated size properties. In the current work, we shall enhance this prover by providing support for a new type of constraints, namely bag (multiset) constraints. With this extension, we can capture the reachable nodes (or values) inside a heap predicate as a bag constraint. Consequently, we are able to prove properties about the actual values stored inside a data structure.},
booktitle = {Proceedings of the 12th IEEE International Conference on Engineering Complex Computer Systems},
pages = {307–320},
numpages = {14},
series = {ICECCS '07}
}

@inproceedings{PiskacWiesZufferey2014,
 author = {Piskac, Ruzica and Wies, Thomas and Zufferey, Damien},
 title = {Automating Separation Logic with Trees and Data},
 booktitle = {Proceedings of the 16th International Conference on Computer Aided Verification},
 series = {CAV'14},
 year = {2014},
 pages = {711--728},
  publisher = {Springer-Verlag},
 address = {Berlin, Heidelberg},
}

@inproceedings{PiskacWiesZufferey2013,
 author = {Piskac, Ruzica and Wies, Thomas and Zufferey, Damien},
 title = {Automating Separation Logic Using {SMT}},
 booktitle = {Proceedings of the 25th International Conference on Computer Aided Verification},
 series = {CAV'13},
 year = {2013},
 isbn = {978-3-642-39798-1},
 location = {Saint Petersburg, Russia},
 pages = {773--789},
 numpages = {17},
 doi = {10.1007/978-3-642-39799-8_54},
 acmid = {2526927},
 publisher = {Springer-Verlag},
 address = {Berlin, Heidelberg},
}

@InProceedings{PiskacWiesZufferey2014Tool,
author="Piskac, Ruzica
and Wies, Thomas
and Zufferey, Damien",
editor="{\'A}brah{\'a}m, Erika
and Havelund, Klaus",
title="GRASShopper",
booktitle="Tools and Algorithms for the Construction and Analysis of Systems",
year="2014",
publisher="Springer Berlin Heidelberg",
address="Berlin, Heidelberg",
pages="124--139",
abstract="We present GRASShopper, a tool for compositional verification of heap-manipulating programs against user-provided specifications. What makes our tool unique is its decidable specification language, which supports mixing of assertions expressed in separation logic and first-order logic. The user of the tool can thus take advantage of the succinctness of separation logic specifications and the discipline of local reasoning. Yet, at the same time, she can revert to classical logic in the cases where decidable separation logic fragments are less suited, such as reasoning about constraints on data and heap structures with complex sharing. We achieve this combination of specification languages through a translation to programs whose specifications are expressed in a decidable fragment of first-order logic called GRASS. This logic is well-suited for automation using satisfiability modulo theory solvers. Unlike other tools that provide similar features, our decidability guarantees enable GRASShopper to produce detailed counterexamples for incorrect or underspecified programs.We have found this feature to be invaluable when debugging specifications. We present the underlying philosophy of the tool, describe the major technical challenges, and discuss implementation details. We conclude with an evaluation that considers challenging benchmarks such as sorting algorithms and a union/find data structure.",
isbn="978-3-642-54862-8"
}

@inproceedings{BerdineCalcagnoOHearn2005,
author = {Berdine, Josh and Calcagno, Cristiano and O’Hearn, Peter W.},
title = {Symbolic Execution with Separation Logic},
year = {2005},
isbn = {3540297359},
publisher = {Springer-Verlag},
address = {Berlin, Heidelberg},
url = {https://doi.org/10.1007/11575467_5},
doi = {10.1007/11575467_5},
abstract = {We describe a sound method for automatically proving Hoare triples for loop-free code in Separation Logic, for certain preconditions and postconditions (symbolic heaps). The method uses a form of symbolic execution, a decidable proof theory for symbolic heaps, and extraction of frame axioms from incomplete proofs. This is a precursor to the use of the logic in automatic specification checking, program analysis, and model checking.},
booktitle = {Proceedings of the Third Asian Conference on Programming Languages and Systems},
pages = {52–68},
numpages = {17},
location = {Tsukuba, Japan},
series = {APLAS'05}
}

@inproceedings{Smallfoot,
 author = {Berdine, Josh and Calcagno, Cristiano and O'Hearn, Peter W.},
 title = {Smallfoot: Modular Automatic Assertion Checking with Separation Logic},
 booktitle = {Proceedings of the 4th International Conference on Formal Methods for Components and Objects},
 series = {FMCO'05},
 year = {2006},
 isbn = {3-540-36749-7, 978-3-540-36749-9},
 location = {Amsterdam, The Netherlands},
 pages = {115--137},
 numpages = {23},
 doi = {10.1007/11804192_6},
 acmid = {2090467},
 publisher = {Springer-Verlag},
 address = {Berlin, Heidelberg},
}

@inproceedings{BerdineCalcagnoOHearn2004,
author = {Berdine, Josh and Calcagno, Cristiano and O’Hearn, Peter W.},
title = {A Decidable Fragment of Separation Logic},
year = {2004},
isbn = {3540240586},
publisher = {Springer-Verlag},
address = {Berlin, Heidelberg},
url = {https://doi.org/10.1007/978-3-540-30538-5_9},
doi = {10.1007/978-3-540-30538-5_9},
abstract = {We present a fragment of separation logic oriented to linked lists, and study decision procedures for validity of entailments. The restrictions in the fragment are motivated by the stylized form of reasoning done in example program proofs. The fragment includes a predicate for describing linked list segments (a kind of reachability or transitive closure). Decidability is first proved by semantic means: by showing a small model property that bounds the size of potential countermodels that must be checked. We then provide a complete proof system for the fragment, the termination of which furnishes a second decision procedure.},
booktitle = {Proceedings of the 24th International Conference on Foundations of Software Technology and Theoretical Computer Science},
pages = {97–109},
numpages = {13},
location = {Chennai, India},
series = {FSTTCS'04}
}

@inproceedings{CookHaaseChristoph2011,
author = {Cook, Byron and Haase, Christoph and Ouaknine, Jo\"{e}l and Parkinson, Matthew and Worrell, James},
title = {Tractable Reasoning in a Fragment of Separation Logic},
year = {2011},
isbn = {9783642232169},
publisher = {Springer-Verlag},
address = {Berlin, Heidelberg},
abstract = {In 2004, Berdine, Calcagno and O'Hearn introduced a fragment of separation logic that allows for reasoning about programs with pointers and linked lists. They showed that entailment in this fragment is in coNP, but the precise complexity of this problem has been open since. In this paper, we show that the problem can actually be solved in polynomial time. To this end, we represent separation logic formulae as graphs and show that every satisfiable formula is equivalent to one whose graph is in a particular normal form. Entailment between two such formulae then reduces to a graph homomorphism problem. We also discuss natural syntactic extensions that render entailment intractable.},
booktitle = {Proceedings of the 22nd International Conference on Concurrency Theory},
pages = {235–249},
numpages = {15},
location = {Aachen, Germany},
series = {CONCUR'11}
}

@inproceedings{PerezAntonioRybalchenko2011,
 author = {Navarro P{\'e}rez, Juan Antonio and Rybalchenko, Andrey},
 title = {Separation Logic + Superposition Calculus = Heap Theorem Prover},
 booktitle = {Proceedings of the 32nd ACM SIGPLAN Conference on Programming Language Design and Implementation},
 series = {PLDI '11},
 publisher = {ACM},
 address = {New York, NY, USA},
 year = {2011},
 pages = {556--566}
}

@InProceedings{PerezRybalchenko2013,
author="P{\'e}rez, Juan Antonio Navarro
and Rybalchenko, Andrey",
title="Separation Logic Modulo Theories",
booktitle="Programming Languages and Systems (APLAS)",
year="2013",
publisher="Springer International Publishing",
address="Cham",
pages="90--106"
}

@inproceedings{ItzhakyEPRInvariants,
 author = {Itzhaky, Shachar and Bj{{\o}}rner, Nikolaj and Reps, Thomas and Sagiv, Mooly and Thakur, Aditya},
 title = {Property-Directed Shape Analysis},
 booktitle = {Proceedings of the 16th International Conference on Computer Aided Verification},
 series = {CAV'14},
 year = {2014},
 isbn = {978-3-319-08866-2},
 pages = {35--51},
 numpages = {17},
 doi = {10.1007/978-3-319-08867-9_3},
 acmid = {2735054},
 publisher = {Springer-Verlag},
 address = {Berlin, Heidelberg},
}

@inproceedings{Itzhaky2013,
 author = {Itzhaky, Shachar and Banerjee, Anindya and Immerman, Neil and Nanevski, Aleksandar and Sagiv, Mooly},
 title = {Effectively-Propositional Reasoning About Reachability in Linked Data Structures},
 booktitle = {Proceedings of the 25th International Conference on Computer Aided Verification},
 series = {CAV'13},
 year = {2013},
 isbn = {978-3-642-39798-1},
 location = {Saint Petersburg, Russia},
 pages = {756--772},
 numpages = {17},
 doi = {10.1007/978-3-642-39799-8_53},
 acmid = {2526926},
 publisher = {Springer-Verlag},
 address = {Berlin, Heidelberg},
}

@inproceedings{Itzhaky2014,
 author = {Itzhaky, Shachar and Banerjee, Anindya and Immerman, Neil and Lahav, Ori and Nanevski, Aleksandar and Sagiv, Mooly},
 title = {Modular Reasoning About Heap Paths via Effectively Propositional Formulas},
 booktitle = {Proceedings of the 41st ACM SIGPLAN-SIGACT Symposium on Principles of Programming Languages},
 series = {POPL '14},
 year = {2014},
 isbn = {978-1-4503-2544-8},
 location = {San Diego, California, USA},
 pages = {385--396},
 numpages = {12},
 doi = {10.1145/2535838.2535854},
 acmid = {2535854},
 publisher = {ACM},
 address = {New York, NY, USA},
 keywords = {linked list, smt, verification},
}

@InProceedings{bobot12,
author="Bobot, Fran{\c{c}}ois
and Filli{\^a}tre, Jean-Christophe",
editor="Aoki, Toshiaki
and Taguchi, Kenji",
title="Separation Predicates: A Taste of Separation Logic in First-Order Logic",
booktitle="Formal Methods and Software Engineering",
year="2012",
publisher="Springer Berlin Heidelberg",
address="Berlin, Heidelberg",
pages="167--181",
abstract="This paper introduces separation predicates, a technique to reuse some ideas from separation logic in the framework of program verification using a traditional first-order logic. The purpose is to benefit from existing specification languages, verification condition generators, and automated theorem provers. Separation predicates are automatically derived from user-defined inductive predicates. We illustrate this idea on a non-trivial case study, namely the composite pattern, which is specified in C/ACSL and verified in a fully automatic way using SMT solvers Alt-Ergo, CVC3, and Z3.",
isbn="978-3-642-34281-3"
}

@InProceedings{parkinson11,
author="Parkinson, Matthew J.
and Summers, Alexander J.",
editor="Barthe, Gilles",
title="The Relationship between Separation Logic and Implicit Dynamic Frames",
booktitle="Programming Languages and Systems",
year="2011",
publisher="Springer Berlin Heidelberg",
address="Berlin, Heidelberg",
pages="439--458",
abstract="Separation logic is a concise method for specifying programs that manipulate dynamically allocated storage. Partially inspired by separation logic, Implicit Dynamic Frames has recently been proposed, aiming at first-order tool support. In this paper, we provide a total heap semantics for a standard separation logic, and prove it equivalent to the standard model. With small adaptations, we then show how to give a direct semantics to implicit dynamic frames and show this semantics correctly captures the existing definitions. This precisely connects the two logics. As a consequence of this connection, we show that a fragment of separation logic can be faithfully encoded in a first-order automatic verification tool (Chalice).",
isbn="978-3-642-19718-5"
}

@inproceedings{verifast,
author = {Jacobs, Bart and Smans, Jan and Philippaerts, Pieter and Vogels, Fr\'{e}d\'{e}ric and Penninckx, Willem and Piessens, Frank},
title = {VeriFast: A Powerful, Sound, Predictable, Fast Verifier for C and Java},
year = {2011},
isbn = {9783642203978},
publisher = {Springer-Verlag},
address = {Berlin, Heidelberg},
abstract = {VeriFast is a prototype verification tool for single-threaded and multithreaded C
and Java programs. In this paper, we first describe the basic symbolic execution approach
in some formal detail. Then we zoom in on two technical aspects: the approach to permission
accounting, including fractional permissions, precise predicates, and counting permissions;
and the approach to lemma function termination in the presence of dynamically-bound
lemma function calls. Finally, we describe three ongoing efforts: application to JavaCard
programs, integration of shape analysis, and application to Linux device drivers.},
booktitle = {Proceedings of the Third International Conference on NASA Formal Methods},
pages = {41–55},
numpages = {15},
location = {Pasadena, CA},
series = {NFM'11}
}

@InProceedings{jost14,
author="Jost, Daniel
and Summers, Alexander J.",
editor="Cohen, Ernie
and Rybalchenko, Andrey",
title="An Automatic Encoding from VeriFast Predicates into Implicit Dynamic Frames",
booktitle="Verified Software: Theories, Tools, Experiments",
year="2014",
publisher="Springer Berlin Heidelberg",
address="Berlin, Heidelberg",
pages="202--221",
abstract="VeriFast is a symbolic-execution-based verifier, based on separation logic specifications. Chalice is a verifier based on verification condition generation, which employs specifications in implicit dynamic frames. Recently, theoretical work has shown how the cores of these two verification logics can be formally related. However, the mechanisms for abstraction in the two tools are not obviously comparable; VeriFast employs parameterised recursive predicates in specifications, while Chalice employs recursive predicates without parameters, along with heap-dependent abstraction functions.",
isbn="978-3-642-54108-7"
}

@inproceedings{hobor13,
author = {Hobor, Aquinas and Villard, Jules},
title = {The Ramifications of Sharing in Data Structures},
year = {2013},
isbn = {9781450318327},
publisher = {Association for Computing Machinery},
address = {New York, NY, USA},
url = {https://doi.org/10.1145/2429069.2429131},
doi = {10.1145/2429069.2429131},
booktitle = {Proceedings of the 40th Annual ACM SIGPLAN-SIGACT Symposium on Principles of Programming Languages},
pages = {523–536},
numpages = {14},
keywords = {heap/shape, separation logic, modularity, aliasing},
location = {Rome, Italy},
series = {POPL '13}
}

@book{ebbinghausflumthomas-logicbook,
  title={Mathematical Logic},
  author={Ebbinghaus, H.D. and Flum, J. and Thomas, W.},
  isbn={9780387942582},
  lccn={97108485},
  series={Undergraduate Texts in Mathematics},
  year={1996},
  publisher={Springer New York}
}

@book{libkin04,
  author    = {Leonid Libkin},
  title     = {Elements of Finite Model Theory},
  series    = {Texts in Theoretical Computer Science. An {EATCS} Series},
  publisher = {Springer},
  year      = {2004},
  url       = {http://www.cs.toronto.edu/\%7Elibkin/fmt},
  doi       = {10.1007/978-3-662-07003-1},
  isbn      = {3-540-21202-7},
  timestamp = {Wed, 10 Jul 2019 10:47:06 +0200},
  biburl    = {https://dblp.org/rec/books/sp/Libkin04.bib},
  bibsource = {dblp computer science bibliography, https://dblp.org}
}

@inproceedings{filliatre13esop,
author = {Filli\^{a}tre, Jean-Christophe and Paskevich, Andrei},
title = {Why3: Where Programs Meet Provers},
year = {2013},
isbn = {9783642370359},
publisher = {Springer-Verlag},
address = {Berlin, Heidelberg},
url = {https://doi.org/10.1007/978-3-642-37036-6_8},
doi = {10.1007/978-3-642-37036-6_8},
abstract = {We present Why3, a tool for deductive program verification, and WhyML, its programming and specification language. WhyML is a first-order language with polymorphic types, pattern matching, and inductive predicates. Programs can make use of record types with mutable fields, type invariants, and ghost code. Verification conditions are discharged by Why3 with the help of various existing automated and interactive theorem provers. To keep verification conditions tractable and comprehensible, WhyML imposes a static control of aliases that obviates the use of a memory model. A user can write WhyML programs directly and get correct-by-construction OCaml programs via an automated extraction mechanism. WhyML is also used as an intermediate language for the verification of C, Java, or Ada programs. We demonstrate the benefits of Why3 and WhyML on non-trivial examples of program verification.},
booktitle = {Proceedings of the 22nd European Conference on Programming Languages and Systems},
pages = {125–128},
numpages = {4},
location = {Rome, Italy},
series = {ESOP'13}
}

@inproceedings{why3logic,
author = {Filli\^{a}tre, Jean-Christophe and Paskevich, Andrei},
title = {Why3: Where Programs Meet Provers},
year = {2013},
isbn = {9783642370359},
publisher = {Springer-Verlag},
address = {Berlin, Heidelberg},
url = {https://doi.org/10.1007/978-3-642-37036-6_8},
doi = {10.1007/978-3-642-37036-6_8},
abstract = {We present Why3, a tool for deductive program verification, and WhyML, its programming and specification language. WhyML is a first-order language with polymorphic types, pattern matching, and inductive predicates. Programs can make use of record types with mutable fields, type invariants, and ghost code. Verification conditions are discharged by Why3 with the help of various existing automated and interactive theorem provers. To keep verification conditions tractable and comprehensible, WhyML imposes a static control of aliases that obviates the use of a memory model. A user can write WhyML programs directly and get correct-by-construction OCaml programs via an automated extraction mechanism. WhyML is also used as an intermediate language for the verification of C, Java, or Ada programs. We demonstrate the benefits of Why3 and WhyML on non-trivial examples of program verification.},
booktitle = {Proceedings of the 22nd European Conference on Programming Languages and Systems},
pages = {125–128},
numpages = {4},
location = {Rome, Italy},
series = {ESOP'13}
}

@article{tarski-knaster,
author = {Alfred Tarski},
title = {{A lattice-theoretical fixpoint theorem and its applications}},
volume = {5},
journal = {Pacific Journal of Mathematics},
number = {2},
publisher = {Pacific Journal of Mathematics, A Non-profit Corporation},
pages = {285 -- 309},
year = {1955},
}

@book{enderton,
  title={A Mathematical Introduction to Logic},
  author={Enderton, H.B.},
  isbn={9780080496467},
  lccn={00110670},
  url={https://books.google.com/books?id=dVncCl\_EtUkC},
  year={2001},
  publisher={Elsevier Science}
}

@inproceedings{vardi1982,
author = {Vardi, Moshe Y.},
title = {The Complexity of Relational Query Languages (Extended Abstract)},
year = {1982},
isbn = {0897910702},
publisher = {Association for Computing Machinery},
address = {New York, NY, USA},
url = {https://doi.org/10.1145/800070.802186},
doi = {10.1145/800070.802186},
abstract = {Two complexity measures for query languages are proposed. Data complexity is the complexity of evaluating a query in the language as a function of the size of the database, and expression complexity is the complexity of evaluating a query in the language as a function of the size of the expression defining the query. We study the data and expression complexity of logical languages - relational calculus and its extensions by transitive closure, fixpoint and second order existential quantification - and algebraic languages - relational algebra and its extensions by bounded and unbounded looping. The pattern which will be shown is that the expression complexity of the investigated languages is one exponential higher then their data complexity, and for both types of complexity we show completeness in some complexity class.},
booktitle = {Proceedings of the Fourteenth Annual ACM Symposium on Theory of Computing},
pages = {137–146},
numpages = {10},
location = {San Francisco, California, USA},
series = {STOC '82}
}

@inproceedings{immerman1982,
author = {Immerman, Neil},
title = {Relational Queries Computable in Polynomial Time (Extended Abstract)},
year = {1982},
isbn = {0897910702},
publisher = {Association for Computing Machinery},
address = {New York, NY, USA},
url = {https://doi.org/10.1145/800070.802187},
doi = {10.1145/800070.802187},
abstract = {Query languages for relational databases have received considerable attention. In 1972 Codd [Cod72] showed that two natural mathematical languages for queries—one algebraic and the other a version of first order predicate calculus—had identical powers of expressibility. Query languages which are as expressive as Codd's Relational Calculus are sometimes called complete. This term is misleading, however, because many interesting queries are not expressible in “complete” languages.In this paper we show:Theorem 2: The Fixpoint Hierarchy collapses at the first fixpoint level.That is, any query expressible with several applications of least fixpoint can already be expressed with one. We also show:Theorem 1: Let L be a query language consisting of relational calculus plus the least fixpoint operator. Suppose that L contains a relation symbol for a total ordering relation on the domain (e.g. lexicographic ordering). Then the queries expressible in L are exactly the queries computable in polynomial time.Theorem 1 was discovered independantly by M. Vardi [Var82]. It gives a simple syntactic categorization of those queries which can be answered in polynomial time. Of course queries requiring polynomial time in the size of the database are usually prohibitatively expensive. We also consider weaker languages for expressing less complex queries.},
booktitle = {Proceedings of the Fourteenth Annual ACM Symposium on Theory of Computing},
pages = {147–152},
numpages = {6},
location = {San Francisco, California, USA},
series = {STOC '82}
}

@inproceedings{relational-logic+lfp,
author = {Aho, Alfred V. and Ullman, Jeffrey D.},
title = {Universality of Data Retrieval Languages},
year = {1979},
isbn = {9781450373579},
publisher = {Association for Computing Machinery},
address = {New York, NY, USA},
url = {https://doi.org/10.1145/567752.567763},
doi = {10.1145/567752.567763},
abstract = {We consider the question of how powerful a relational query language should be and state two principles that we feel any query language should satisfy. We show that although relational algebra and relational calculus satisfy these principles, there are certain queries involving least fixed points that cannot be expressed by these languages, yet that also satisfy the principles. We then consider various extensions of relational algebra to enable it to answer such queries. Finally, we discuss our extensions to relational algebra in terms of a new programming language oriented model for queries.},
booktitle = {Proceedings of the 6th ACM SIGACT-SIGPLAN Symposium on Principles of Programming Languages},
pages = {110–119},
numpages = {10},
location = {San Antonio, Texas},
series = {POPL '79}
}

@inproceedings{chandra-harel,
author = {Chandra, Ashok K. and Harel, David},
title = {Structure and Complexity of Relational Queries},
year = {1980},
publisher = {IEEE Computer Society},
address = {USA},
url = {https://doi.org/10.1109/SFCS.1980.41},
doi = {10.1109/SFCS.1980.41},
abstract = {This paper is an attempt at laying the foundations for the classification of queries on relational data bases according to their structure and their computational complexity. Using the operations of composition and fixpoints, a Σ-Π hierarchy of height, ω2, called the fixpoint query hierarchy, is defined, and its properties investigated. The hierarchy includes most of the queries considered in the literature including those of Codd and Aho and Ullman. The hierarchy to level ω characterizes the first-order queries, and the levels up to ω are shown to be strict. Sets of queries larger than the fixpoint query hierarchy are obtained by considering the queries computable in polynomial time, queries computable in polynomial space, etc. It is shown that classes of queries defined from such complexity classes behave (with respect to containment) in a manner very similar to the corresponding complexity classes. Also, the set of second-order queries turns out to be the same as the set of queries defined from the polynomialtime hierarchy. Finally, these classes of queries are used to characterize a set of queries defined from language considerations: those expressible in a programming language with only typed (or ranked) relation variables. At the end of the paper is a list of symbols used therein.},
booktitle = {Proceedings of the 21st Annual Symposium on Foundations of Computer Science},
pages = {333–347},
numpages = {15},
series = {SFCS '80}
}

@InProceedings{magic-wand-complexity,
author="Brochenin, R{\'e}mi
and Demri, St{\'e}phane
and Lozes, Etienne",
editor="Kaminski, Michael
and Martini, Simone",
title="On the Almighty Wand",
booktitle="Computer Science Logic",
year="2008",
publisher="Springer Berlin Heidelberg",
address="Berlin, Heidelberg",
pages="323--338",
abstract="We investigate decidability, complexity and expressive power issues for (first-order) separation logic with one record field (herein called SL) and its fragments. SL can specify properties about the memory heap of programs with singly-linked lists. Separation logic with two record fields is known to be undecidable by reduction of finite satisfiability for classical predicate logic with one binary relation. Surprisingly, we show that second-order logic is as expressive as SL and as a by-product we get undecidability of SL. This is refined by showing that SL without the separating conjunction is as expressive as SL, whence undecidable too. As a consequence of this deep result, in SL the magic wand can simulate the separating conjunction. By contrast, we establish that SL without the magic wand is decidable with non-elementary complexity by reduction from satisfiability for the first-order theory over finite words. Equivalence between second-order logic and separation logic extends to the case with more than one selector.",
isbn="978-3-540-87531-4"
}

@phdthesis{guarded-wand-pagel2020,
  title={Decision procedures for separation logic: beyond symbolic heaps},
  author={Pagel, Jens},
  year={2020},
  school={Wien}
}

@article{fossil,
author = {Murali, Adithya and Pe\~{n}a, Lucas and Blanchard, Eion and L\"{o}ding, Christof and Madhusudan, P.},
title = {Model-Guided Synthesis of Inductive Lemmas for FOL with Least Fixpoints},
year = {2022},
issue_date = {October 2022},
publisher = {Association for Computing Machinery},
address = {New York, NY, USA},
volume = {6},
number = {OOPSLA2},
url = {https://doi.org/10.1145/3563354},
doi = {10.1145/3563354},
abstract = {Recursively defined linked data structures embedded in a pointer-based heap and their properties are naturally expressed in pure first-order logic with least fixpoint definitions (FO+lfp) with background theories. Such logics, unlike pure first-order logic, do not admit even complete procedures. In this paper, we undertake a novel approach for synthesizing inductive hypotheses to prove validity in this logic. The idea is to utilize several kinds of finite first-order models as counterexamples that capture the non-provability and invalidity of formulas to guide the search for inductive hypotheses. We implement our procedures and evaluate them extensively over theorems involving heap data structures that require inductive proofs and demonstrate the effectiveness of our methodology.},
journal = {Proc. ACM Program. Lang.},
month = {oct},
articleno = {191},
numpages = {30},
keywords = {Verifying Linked Data Structures, Learning Logics, First Order Logic with Least Fixpoints, Inductive Hypothesis Synthesis, Counterexample-Guided Inductive Synthesis}
}

@article{slcomp,
author = {Sighireanu, Mihaela},
title = {SL-COMP: Competition of Solvers for Separation Logic: Report on the Third Edition},
year = {2021},
issue_date = {Dec 2021},
publisher = {Springer-Verlag},
address = {Berlin, Heidelberg},
volume = {23},
number = {6},
issn = {1433-2779},
url = {https://doi.org/10.1007/s10009-021-00628-w},
doi = {10.1007/s10009-021-00628-w},
abstract = {SL-COMP is a competition bringing together researchers and users interested in automated reasoning methods for separation logic (SL). The competition provides a snapshot of the state of the art in the area through a set of problems that put forward the strengths and challenges of the existing solvers and a comparative and replicable evaluation of participating solvers. The third edition of SL-COMP took place in April 2019, as part of the TOOLympics event at TACAS 2019. It collected more than 1K satisfiability and entailment problems, had seen the adoption of the new input format based on SMT-LIB and had doubled the number of participant solvers compared with the first edition in 2014. This report relates the history and the context of SL-COMP competition and accounts of its third edition. It also discusses the issues related with its organization and the challenges for the next editions.},
journal = {Int. J. Softw. Tools Technol. Transf.},
month = {dec},
pages = {895–903},
numpages = {9},
keywords = {SAT modulo theory, Separation logic, SMT-LIB}
}

@InProceedings{twbcade13,
author="Iosif, Radu
and Rogalewicz, Adam
and Simacek, Jiri",
editor="Bonacina, Maria Paola",
title="The Tree Width of Separation Logic with Recursive Definitions",
booktitle="Automated Deduction -- CADE-24",
year="2013",
publisher="Springer Berlin Heidelberg",
address="Berlin, Heidelberg",
pages="21--38",
abstract="Separation Logic is a widely used formalism for describing dynamically allocated linked data structures, such as lists, trees, etc. The decidability status of various fragments of the logic constitutes a long standing open problem. Current results report on techniques to decide satisfiability and validity of entailments for Separation Logic(s) over lists (possibly with data). In this paper we establish a more general decidability result. We prove that any Separation Logic formula using rather general recursively defined predicates is decidable for satisfiability, and moreover, entailments between such formulae are decidable for validity. These predicates are general enough to define (doubly-) linked lists, trees, and structures more general than trees, such as trees whose leaves are chained in a list. The decidability proofs are by reduction to decidability of Monadic Second Order Logic on graphs with bounded tree width.",
isbn="978-3-642-38574-2"
}

@article{fluid23,
author = {Murali, Adithya and Pe\~{n}a, Lucas and Jhala, Ranjit and Madhusudan, P.},
title = {Complete First-Order Reasoning for Properties of Functional Programs},
year = {2023},
issue_date = {October 2023},
publisher = {Association for Computing Machinery},
address = {New York, NY, USA},
volume = {7},
number = {OOPSLA2},
url = {https://doi.org/10.1145/3622835},
doi = {10.1145/3622835},
abstract = {Several practical tools for automatically verifying functional programs (e.g., Liquid Haskell and Leon for Scala programs) rely on a heuristic based on unrolling recursive function definitions followed by quantifier-free reasoning using SMT solvers. We uncover foundational theoretical properties of this heuristic, revealing that it can be generalized and formalized as a technique that is in fact complete for reasoning with combined First-Order theories of algebraic datatypes and background theories, where background theories support decidable quantifier-free reasoning. The theory developed in this paper explains the efficacy of these heuristics when they succeed, explain why they fail when they fail, and the precise role that user help plays in making proofs succeed.},
journal = {Proc. ACM Program. Lang.},
month = {oct},
articleno = {259},
numpages = {30},
keywords = {First-Order Logic, Algebraic Datatypes (ADTs), Refinement Types, Liquid Haskell, Thrifty Instantiation, Natural Proofs, Completeness}
}

@article{Vazou18,
  author    = {Niki Vazou and
               Anish Tondwalkar and
               Vikraman Choudhury and
               Ryan G. Scott and
               Ryan R. Newton and
               Philip Wadler and
               Ranjit Jhala},
  title     = {Refinement reflection: complete verification with {SMT}},
  journal   = {Proc. {ACM} Program. Lang.},
  volume    = {2},
  number    = {{POPL}},
  pages     = {53:1--53:31},
  year      = {2018},
  url       = {https://doi.org/10.1145/3158141},
  doi       = {10.1145/3158141},
  timestamp = {Wed, 17 Feb 2021 08:54:04 +0100},
  biburl    = {https://dblp.org/rec/journals/pacmpl/VazouTCSNWJ18.bib},
  bibsource = {dblp computer science bibliography, https://dblp.org}
}

@inproceedings{BlancKuncakKneussSuter,
author = {Blanc, R\'{e}gis and Kuncak, Viktor and Kneuss, Etienne and Suter, Philippe},
title = {An Overview of the Leon Verification System: Verification by Translation to Recursive Functions},
year = {2013},
isbn = {9781450320641},
publisher = {Association for Computing Machinery},
address = {New York, NY, USA},
url = {https://doi.org/10.1145/2489837.2489838},
doi = {10.1145/2489837.2489838},
abstract = {We present the Leon verification system for a subset of the Scala programming language. Along with several functional features of Scala, Leon supports imperative constructs such as mutations and loops, using a translation into recursive functional form. Both properties and programs in Leon are expressed in terms of user-defined functions. We discuss several techniques that led to an efficient semi-decision procedure for first-order constraints with recursive functions, which is the core solving engine of Leon. We describe a generational unrolling strategy for recursive templates that yields smaller satisfiable formulas and ensures completeness for counterexamples. We illustrate the current capabilities of Leon on a set of examples, such as data structure implementations; we show that Leon successfully finds bugs or proves completeness of pattern matching as well as validity of function postconditions.},
booktitle = {Proceedings of the 4th Workshop on Scala},
articleno = {1},
numpages = {10},
keywords = {satisfiability, verification},
location = {Montpellier, France},
series = {SCALA '13}
}

@inproceedings{SuterKoksalKuncak,
author="Suter, Philippe
and K{\"o}ksal, Ali Sinan
and Kuncak, Viktor",
editor="Yahav, Eran",
title="Satisfiability Modulo Recursive Programs",
booktitle="Static Analysis",
year="2011",
publisher="Springer Berlin Heidelberg",
address="Berlin, Heidelberg",
pages="298--315",
abstract="We present a semi-decision procedure for checking satisfiability of expressive correctness properties of recursive first-order functional programs. In our approach, both properties and programs are expressed in the same language, a subset of Scala. We implemented our procedure and integrated it with the Z3 SMT solver and the Scala compiler. Our procedure is sound for counterexamples and for proofs of terminating functions. It is terminating and thus complete for many important classes of specifications, including all satisfiable formulas and all formulas where recursive functions satisfy certain syntactic restrictions. Using our system, Leon, we verified detailed correctness properties for functional data structure implementations, as well as syntax tree manipulations. We have found our system to be fast for both finding counterexamples and finding correctness proofs, and to scale to larger programs than alternative techniques.",
isbn="978-3-642-23702-7"
}

@article{ranjitlemmasynthesis24,
author = {Kurashige, Cole and Ji, Ruyi and Giridharan, Aditya and Barbone, Mark and Noor, Daniel and Itzhaky, Shachar and Jhala, Ranjit and Polikarpova, Nadia},
title = {CCLemma: E-Graph Guided Lemma Discovery for Inductive Equational Proofs},
year = {2024},
issue_date = {August 2024},
publisher = {Association for Computing Machinery},
address = {New York, NY, USA},
volume = {8},
number = {ICFP},
url = {https://doi.org/10.1145/3674653},
doi = {10.1145/3674653},
abstract = {The problem of automatically proving the equality
of terms over recursive functions and
inductive data types is challenging,
as such proofs often require auxiliary lemmas
which must themselves be proven.
Previous attempts at lemma discovery compromise
on either efficiency or efficacy.
Goal-directed approaches
are fast but limited in expressiveness,
as they can only discover auxiliary lemmas which
entail their goals.
Theory exploration approaches
are expressive but inefficient,
as they exhaustively enumerate candidate lemmas.We introduce e-graph guided lemma discovery,
a new approach to finding equational proofs
that makes theory exploration goal-directed.
We accomplish this by using e-graphs and equality saturation to
efficiently construct and compactly represent
the space of all goal-oriented proofs.
This allows us to explore only those auxiliary
lemmas guaranteed to help make progress
on some of these proofs.
We implemented our method in a new prover
called CCLemma and compared it with three state-of-the-art provers
across a variety of benchmarks.
CCLemma performs consistently well on two standard benchmarks
and additionally solves 50\% more problems than the next best tool 
on a new challenging set.},
journal = {Proc. ACM Program. Lang.},
month = aug,
articleno = {264},
numpages = {27},
keywords = {Automated Theorem Proving, Equational Reasoning, Lemma Synthesis, Synthesis, Verification}
}

@article{turing1936,
	author = {Alan Turing},
	doi = {10.2307/2268810},
	journal = {Proceedings of the London Mathematical Society},
	number = {1},
	pages = {230--265},
	publisher = {Association for Symbolic Logic},
	title = {On Computable Numbers, with an Application to the Entscheidungsproblem},
	volume = {42},
	year = {1936}
}

@article{pldi2024IDS,
author = {Murali, Adithya and Rivera, Cody and Madhusudan, P.},
title = {Predictable Verification using Intrinsic Definitions},
year = {2024},
issue_date = {June 2024},
publisher = {Association for Computing Machinery},
address = {New York, NY, USA},
volume = {8},
number = {PLDI},
url = {https://doi.org/10.1145/3656450},
doi = {10.1145/3656450},
abstract = {We propose a novel mechanism of defining data structures using intrinsic definitions that avoids recursion and instead utilizes monadic maps satisfying local conditions. We show that intrinsic definitions are a powerful mechanism that can capture a variety of data structures naturally. We show that they also enable a predictable verification methodology that allows engineers to write ghost code to update monadic maps and perform verification using reduction to decidable logics. We evaluate our methodology using Boogie and prove a suite of data structure manipulating programs correct.},
journal = {Proc. ACM Program. Lang.},
month = {jun},
articleno = {220},
numpages = {26},
keywords = {Decidability, Ghost-Code Annotations, Intrinsic Definitions, Predictable Verification, Verification of Linked Data Structures}
}

@InProceedings{reynolds15,
author="Reynolds, Andrew
and Kuncak, Viktor",
editor="D'Souza, Deepak
and Lal, Akash
and Larsen, Kim Guldstrand",
title="Induction for SMT Solvers",
booktitle="Verification, Model Checking, and Abstract Interpretation",
year="2015",
publisher="Springer Berlin Heidelberg",
address="Berlin, Heidelberg",
pages="80--98",
abstract="Satisfiability modulo theory solvers are increasingly being used to solve quantified formulas over structures such as integers and term algebras. Quantifier instantiation combined with ground decision procedure alone is insufficient to prove many formulas of interest in such cases. We present a set of techniques that introduce inductive reasoning into SMT solving algorithms that is sound with respect to the interpretation of structures in SMT-LIB standard. The techniques include inductive strengthening of conjecture to be proven, as well as facility to automatically discover subgoals during an inductive proof, where subgoals themselves can be proven using induction. The techniques have been implemented in CVC4. Our experiments show that the developed techniques have good performance and coverage of a range of inductive reasoning problems. Our experiments also show the impact of different representations of natural numbers and quantifier instantiation techniques on the performance of inductive reasoning. Our solution is freely available in the CVC4 development repository. In addition its overall effectiveness, it has an advantage of accepting SMT-LIB input and being integrated with other SMT solving techniques of CVC4.",
isbn="978-3-662-46081-8"
}

@article{sivaraman22,
author = {Sivaraman, Aishwarya and Sanchez-Stern, Alex and Chen, Bretton and Lerner, Sorin and Millstein, Todd},
title = {Data-driven lemma synthesis for interactive proofs},
year = {2022},
issue_date = {October 2022},
publisher = {Association for Computing Machinery},
address = {New York, NY, USA},
volume = {6},
number = {OOPSLA2},
url = {https://doi.org/10.1145/3563306},
doi = {10.1145/3563306},
abstract = {Interactive proofs of theorems often require auxiliary helper lemmas to prove the desired theorem. Existing approaches for automatically synthesizing helper lemmas fall into two broad categories. Some approaches are goal-directed, producing lemmas specifically to help a user make progress from a given proof state, but they have limited expressiveness in terms of the lemmas that can be produced. Other approaches are highly expressive, able to generate arbitrary lemmas from a given grammar, but they are completely undirected and hence not amenable to interactive usage.  

In this paper, we develop an approach to lemma synthesis that is both goal-directed and expressive.  
The key novelty is a technique for reducing lemma synthesis to a data-driven program synthesis problem, whereby examples for synthesis are generated from the current proof state. We also describe a technique to systematically introduce new variables for lemma synthesis, as well as techniques for filtering and ranking candidate lemmas for presentation to the user. We implement these ideas in a tool called lfind, which can be run as a Coq tactic. In an evaluation on four benchmark suites, lfind produces useful lemmas in 68\% of the cases where a human prover used a lemma to make progress. In these cases lfind synthesizes a lemma that either enables a fully automated proof of the original goal or that matches the human-provided lemma.},
journal = {Proc. ACM Program. Lang.},
month = oct,
articleno = {143},
numpages = {27},
keywords = {Data-Driven Synthesis, Interactive Theorem Proving, Lemma Synthesis, Type-Directed Synthesis}
}

@InProceedings{yang19,
author="Yang, Weikun
and Fedyukovich, Grigory
and Gupta, Aarti",
editor="Schiex, Thomas
and de Givry, Simon",
title="Lemma Synthesis for Automating Induction over Algebraic Data Types",
booktitle="Principles and Practice of Constraint Programming",
year="2019",
publisher="Springer International Publishing",
address="Cham",
pages="600--617",
abstract="In this paper we introduce a new approach for proving quantified theorems over inductively defined data-types. We present an automated prover that searches for a sequence of simplifications and transformations to prove the validity of a given theorem, and in the absence of required lemmas, attempts to synthesize supporting lemmas based on terms and expressions witnessed during the search for a proof. The search for lemma candidates is guided by a user-specified template, along with many automated filtering mechanisms. Validity of generated lemmas is checked recursively by our prover, supported by an off-the-shelf SMT solver. We have implemented our prover called AdtInd and show that it is able to solve many problems on which a state-of-the-art prover fails.",
isbn="978-3-030-30048-7"
}

\newpage
\appendix

\section{Additional Details from Section~\ref{sec:sourcesofinc}}
\label{app:sourcesofinc}

\begin{proof}[Proof of Theorem~\ref{thm:so-high-undec}]
We show a stronger result, namely that the logic is \emph{highly undecidable}. To achieve this, we reduce validity in MSO to the non-halting problem for Turing Machines whose definition of halting is that the final state must be hit infinitely often. Highly undecidable problems are not recursively enumerable, and their complements are not recursively enumerable either, i.e., they are R.E Hard $\cap$ co-R.E Hard. We describe the key elements of the proof below.

First, we define natural numbers with comparison, using symbols $\mathit{zero},\mathit{succ}$, and $<$. We constrain these symbols using quantified FO formulae in the usual way, e.g., requiring that $<$ forms a total order. Note, however, that a model may have elements beyond the $\omega$-prefix that correspond to the ``standard'' natural numbers. To cull out the $\omega$-prefix, we write the following formula quantified over a set variable $S$:

$\mathit{TrueNat} \equiv \forall S.\, \big((\mathit{zero} \in S \land (\forall x.\, x \in S \Rightarrow \mathit{succ}(x) \in S)) \Rightarrow (\forall y.\, y \in S)\big)$

\noindent
$\mathit{TrueNat}$ says that every set that contains zero and is successor-closed is universal. This can only hold on a model if every element in the model corresponds to a standard natural number.

\smallskip
The remainder of the proof is standard. We encode the Turing Machine using functions $\mathit{state}(i)$ to denote the state of the TM at time step $i$, $\mathit{head}(i)$ to denote the position of the head at time step $i$, and $\mathit{cell}(i,j)$ to denote the contents of the $j$-th cell at time step $i$. We express FO-quantified constraints to ensure that the functions are defined correctly. Let us denote these constraints by $\mathit{TMEncoding}$.

We are now ready to construct the formula to express non-halting. Recall that our definition of non-halting is that the TM does not hit the halting state infinitely often, i.e., there is a point in time after which the TM does not hit the halting state. Let the halting state be encoded $\mathit{zero}$, without loss of generality. The formula that expresses non-halting is the following one:
\begin{center}
$\mathit{TrueNat} \Rightarrow \big(\mathit{TMEncoding} \Rightarrow (\exists z\forall t.\, z < t \Rightarrow \mathit{state}(t) \neq \mathit{zero})\big)$    
\end{center}

This completes the proof. Note that the formula does not contain any alternating quantification over sets.
\end{proof}

\begin{proof}[Proof of Theorem~\ref{thm:sl-fp-incomplete}]
We first define the variant of Separation Logic that we study in this result. The syntax of the logic is as follows:
\begin{align*}
\varphi \coloneq\,& \mathsf{emp} \,|\, \mathit{true} \,|\, \mathit{StackFormula} \,|\, x \carrow{f} y \,|\, \varphi \land \varphi \,|\, \varphi * \varphi\\
&\,|\, \ite(\mathit{StackFormula}, \varphi, \varphi) \,|\, \mathit{RecPred}(\overline{t}) \,|\, \exists y.\, \varphi  \\
\mathit{StackFormula} \coloneq\,& t = t \,|\, t \neq t
\end{align*}
\noindent
where $\mathit{RecPred}$ denotes recursively defined predicates, and $t$ denotes a term. The grammar for terms is the usual one.

A recursive predicate $R(\overline{x})$ is defined using a formula $\rho(\overline{x})$ from the above logic, where $\rho$ can mention $R$ (only positively). We describe the semantics below. We evaluate formulas over a store $s$ (map from variables to values) and a heap $h$ (map from pointer symbols to functions) as Separation Logic does, but we now allow for $h$ to be infinite. In addition, we also require the model to interpret recursively defined predicate symbols, which we denote by $\Ii$. 

\begin{align*}
s,h,\Ii &\models \mathsf{emp} &&\; \textrm{if } h \textrm{ is empty}\\
s,h,\Ii &\models \mathit{true} &&\; \textrm{always}\\
s,h,\Ii &\models \mathit{StackFormula} &&\; \textrm{if } \mathit{StackFormula} \textrm{ evaluates to true according to } s\\
s,h,\Ii &\models x \carrow{f} y &&\; \textrm{if } \mathit{dom}(h) = \{x\} \textrm{ and } h(f)(s(x)) = y\\
s,h,\Ii &\models \alpha \land \beta &&\; \textrm{if } s,h,\Ii \models \alpha \textrm{ and } s,h,\Ii \models \beta\\
s,h,\Ii &\models \alpha * \beta &&\; \textrm{if there exist disjoint heaplets } h_1,h_2 \textrm{ such that }\\
&&&\;\hspace{1em} h_1 \uplus h_2 = h\textrm{ and } s,h_1,\Ii \models \alpha \textrm{ and } s,h_2,\Ii \models \beta\\
s,h,\Ii &\models \ite(\mathit{cond}, \alpha, \beta) &&\; \textrm{if } s \models \mathit{cond} \textrm{ and } s,h,\Ii \models \alpha \textrm{ or } s \not\models \mathit{cond} \textrm{ and } s,h,\Ii \models \beta\\
s,h,\Ii &\models \mathit{RecPred}(\overline{t}) &&\; \textrm{if } \Ii(\mathit{RecPred})(s(\overline{t})) = \mathit{true} \textrm{ and } s,h,\Ii \models \rho(\overline{t}) \textrm{ where } \rho \textrm{ is the body of } \mathit{RecPred}\\
s,h,\Ii &\models \exists y. \varphi &&\; \textrm{if there is some element } a \textrm{ of the appropriate sort such that }\\
&&&\;\hspace{1em} s[y \mapsto a],h,\Ii \models \varphi
\end{align*}

\smallskip
Further, we treat the above recursive semantics for $\mathit{RecPred}$ as a fixpoint rather than least fixpoint. To illustrate the consequence of using fixpoints, consider the following predicate $\lst(x)$ denoting (now fixpoint) linked lists:
\begin{center}
$\lst(x) \equiv \ite(x = \nil, \mathsf{emp}, \exists y.\, x \carrow{\nxt} y * \lst(y))$
\end{center}

Let $s$ be the store that interprets $x$ to $0$, and let $h$ be the infinite heaplet whose domain is the natural numbers, where the $\nxt$ pointer is interpreted by $h(\nxt)(i) = i+1$. This models an `infinite list'. Let $\Ii$ be the interpretation that marks every node in this infinite list as satisfying the predicate $\lst$. Clearly, $s,h,\Ii \models \lst(x)$ would not hold if $\Ii$ were the least fixpoint interpretation (no node would satisfy $\lst$). However, it can hold in the logic defined above as we admit any interpretation which is a valid fixpoint for the given recursive definition.

\medskip
We now present our construction for modeling non-halting in the above logic. We model the run of a two-counter machine using a linked list with a $\nxt$ pointer. We use integer-valued pointers $\mathit{state},c_1$, and $c_2$ to denote the state and the values of the two counters corresponding to the configuration encoded by a node in the linked list, and define a recursive predicate $\mathit{tcm\_run}(x)$ to check whether the linked list encodes a valid run of the machine. We assume without loss of generality that the initial state is $0$ and the halting state is $1$. The predicate $\mathit{tcm\_run}$ is then defined as follows:
\begin{align*}
\mathit{tcm\_run}(x) \coloneq\,& \ite\Big(x = \nil, \mathsf{emp},\\
&\hspace{1.5em}\exists y,q.\big( (x \carrow{\nxt} y
    \land x \carrow{\mathit{state}} q
    \land x \carrow{c_1} i
    \land x \carrow{c_2} j)\\
&\hspace{9em}*\\
&\hspace{4.3em}\ite(y = \nil, \mathsf{emp},\exists q'. ((y \carrow{\mathit{state}} q' \land y \carrow{c_1} i' \land y \carrow{c_2} j')*\mathit{true})\\
&\hspace{4.3em}\hspace{9em}\land \mathit{transition}(q,i,j,q',i',j')\\
&\hspace{4.3em}\hspace{9em}\land \mathit{tcm\_run}(y)\\
&\hspace{4.3em}\hspace{2em})\\
&\hspace{1.5em}\hspace{3em}\big)\\
&\hspace{1.2em}\Big)
\end{align*}

\noindent
where $\mathit{transition}$ is a macro expressing the correctness of the state transition for the two-counter machine. We define $\mathit{tcm}(x)$ to be the formula $\mathit{tcm\_run}(x) \land (x \carrow{state} 0 * \mathit{true})$, which denotes that $x$ points to a list with a valid run of the machine starting from the initial state. Similarly, we define $\mathit{halt}(x)$ to be the formula  $x \carrow{state} 1$.

\smallskip
Finally, we express the following entailment:

\begin{center}
$x \neq \nil \land (\mathit{tcm}(x) * \mathit{true}) \land (\mathit{halt}(y) * \mathit{true}) \models \mathit{tcm}(x) * \mathit{halt}(y) * \mathit{true}$
\end{center}

\medskip
We now claim that the above entailment is valid iff the two-counter machine does not halt. 

\paragraph{Machine Doesn't Halt $\Rightarrow$ Entailment Valid} Observe that in any $s,h,\Ii$ such that $s,h,\Ii \models \mathit{tcm}(x)$ holds, the $\omega$-chain of linked list elements starting at $x$ and closed under the $\nxt$ pointer must in fact encode a run from the initial configuration for the machine. Furthermore, $h$ must contain all the nodes in this $\omega$-chain. However, since we consider fixpoint definitions, it is possible that $h$ may contain more elements.

Let us assume that the machine does not halt. Now consider $s,h,\Ii$ such that the left-hand side of the entailment holds. In particular, we have that $s,h,\Ii \models \mathit{tcm}(x)$. Since the machine does not halt, the $\omega$-chain cannot contain an node with the halting state. In particular, we have that $s,h,\Ii \models \mathit{halt}(y)$, which means that $y$ cannot be part of the $\omega$-chain. Therefore, we can split $h$ into three disjoint parts: (a) $h_1$, containing the $\omega$-chain where $\mathit{tcm}(x)$ holds, (b) $h_2 = \{y\}$, where $\mathit{halt}(y)$ holds, and (c) any other elements. This shows that the right-hand side of the entailment holds, and we are done.

\paragraph{Machine Halts $\Rightarrow$ Entailment Invalid} If the machine does not halt, in any $s,h,\Ii$ such that $s,h,\Ii \models \mathit{tcm}(x)$, there must exist a node $e$ in the $\omega$-chain starting at $x$ that corresponds to the halting state. Let us now consider the entailment with $y$ being interpreted as $e$. Clearly, the left-hand side of the entailment is true. However, the right-hand side cannot be true since any $h$ such that $s,h,\Ii \models \mathit{tcm}(x)$ must necessarily contain the $\omega$-chain, and in particular must contain $e$. Therefore, the heaplet of $\mathit{halt}(y)$ cannot be separated from the heaplet of $\mathit{tcm}(x)$.

This completes the proof.
\end{proof}

Note that in the above proof we crucially use the fact that given an $s,h,\Ii \models \mathit{tcm}(x)$, if $h$ contains more elements than the $\omega$-chain, then we also have $s,h',\Ii \models \mathit{tcm}(x)$ where $h'$ is any fixpoint-closed subset of $h$ that contains the $\omega$-chain, and in particular can be the $\omega$-chain itself. The formula can be satisfied over multiple heaplets, and we use this in the first direction of the proof.

\section{Operational Semantics}
\label{app:op-sem}

We consider \emph{configurations} of the form $(S, H, A)$ where $S$ is a store, $H$ is a heap, and $A$ denotes the set of allocated locations. Formally, $S$ is a partial map that interprets constants, variables, and non-mutable functions. $H$ is a tuple of maps--- one for each mutable function $f \in \Ff_m$--- whose domain is the universe of locations and range is the universe of the appropriate sort (locations for pointers and various background sorts for data fields). We model $\nil$ as a distinguished location, ensuring in our semantics that valid programs do not dereference $\nil$. Note that $S$ and $H$ together define a model where one can interpret FL formulas. We denote the satisfaction of an SL formula $\varphi$ on the model corresponding to a given $S,H$ by $S,H \models \varphi$. $A$ is a subset of the universe of locations that does not contain $\nil$. We also introduce an \emph{error} configuration $\bot$. 

We describe the operational semantics of various commands manipulating such configurations below in Figure~\ref{fig:opsem}. 
The semantics is fairly standard, but we elucidate some key aspects here. First, the operational semantics checks that dereferences are memory safe, i.e., that the lookup or mutation of $x.f$ only occurs when $x$ belongs to the allocated set $A$ 
(in particular, $x$ is not $\nil$).
If this does not hold, then the error state $\bot$ is reached. We define $\bot$ to be a sink state, and any command on $\bot$ transitions to $\bot$ itself. Second, allocation adds a \emph{fresh} location (a new element distinct from all the locations in the current universe) to the universe of locations and to the allocated set $A$. The heap $H$ is also extended, mapping the new location to fixed default values $\mathit{default}_{\!f}$ for each $f \in \Ff_m$. Correspondingly, deallocation removes the deallocated location from $A$. We do not allow double deallocation, and doing so reaches the error state $\bot$. Finally, functions have call-by-value semantics.

\bigskip

\begin{figure}

\begin{mathpar}
\inferrule*[right = error]{~}{\opsem{s}{\bot} \redto \bot }

\inferrule*[right = assn]{\opsem{y}{\sigma} \evalto v}{\opsem{x \coloneq y}{\sigma} \redto \sigma(S)[x\gets v]}

\inferrule*[right = deref]{S(y) \in A \\ \opsem{f(y)}{\sigma} \evalto v}{\opsem{x \coloneq y.f}{\sigma} \redto \sigma(S)[x\gets v]}

\inferrule*[right = badderef]{S(y) \notin A}{\opsem{x \coloneq y.f}{\sigma} \redto \bot}

\inferrule*[right = mut]{ \opsem{y}{\sigma} \evalto v \\ \opsem{x}{\sigma} \evalto w}{\opsem{y.f \coloneq x}{\sigma} \redto \sigma(H)[f \gets f[v \gets w]]]}

\inferrule*[right = badmut]{ \{S(y), S(x)\} \not\subseteq A}{\opsem{y.f \coloneq x}{\sigma} \redto \bot}

\inferrule*[right = alloc]{a \notin A \\ H^{\prime} = \lambda f, x. \ite(x = a, \defaultf, H(f)(x) )}{\opsem{\alloc(x)}{\sigma} \redto (S[x \gets a], H^{\prime}, A \cup \{a\})}

\inferrule*[right = free]{a \in A \\ \opsem{x}{\sigma}\evalto a }{\opsem{\free(x)}{\sigma} \redto \sigma[A \gets A\backslash\{a\}]}

\inferrule*[right = assume]{S \models \eta }{\opsem{\assume(\eta)}{\sigma} \redto \sigma}

\inferrule*[right = if-true]{S \models \eta \\ \opsem{s_1}{\sigma} \redto \sigma_1}{\opsem{\iteprog{\eta}{s_1}{s_2}}{\sigma} \redto \sigma_1}

\inferrule*[right = if-false]{S \not\models \eta \\ \opsem{s_2}{\sigma} \redto \sigma_2}{\opsem{\iteprog{\eta}{s_1}{s_2}}{\sigma} \redto \sigma_2}

\inferrule*[right = seq]{\opsem{s_1}{\sigma} \redto \sigma_1 \\ \opsem{s_2}{\sigma_1} \redto \sigma_2}{\opsem{s_1;s_2}{\sigma} \redto \sigma_2}

\inferrule*[right = call-by-value$_{n, m}$]{\opsem{g}{\sigma_g} \redto \sigma^{\prime}_g \\ 
\sigma_g \preccurlyeq \sigma
\\
\{\sigma^{\prime}(S)(y_1), \dots, \sigma^{\prime}(S)(y_m)\} \subseteq \sigma^{\prime}_g(A)
\\
(\sigma(A)\backslash \sigma_g(A)) \cap \sigma^{\prime}_g(A) = \phi
\\
A^{\prime} = (\sigma(A)\backslash \sigma_g(A)) \cup \sigma^{\prime}_g(A)
\\
S^{\prime} = \lambda x.\ite(S(x) \in \sigma(A)\backslash \sigma_g(A), S(x), \sigma^{\prime}_g(S)(x)) \text{ with domain $\{ x \mid S(x) \in A^{\prime}\}$}
\\
H^{\prime} = \lambda f. ite( f \notin \domain(\sigma_g(H)), \sigma(H)(f), 
\lambda x. \ite(x \notin \sigma^{\prime}_g(A), \sigma(H)(f)(x), \sigma^{\prime}_g(H)(f)(x))
)
}
{\opsem{y_1, \dots, y_m \coloneq g(x_1, \dots, x_n)}{\sigma} \redto (S^{\prime}, H^{\prime}, A^{\prime})}

\end{mathpar}

    \caption{Operational semantics of the programming language. Here, $\sigma = (S,H,A)$ denotes a non-fail state of the program. The error state is denoted by $\bot$. For each sort $\tau$, a 'default' value is fixed. $\defaultf$ denotes the default value of the sort of the codomain of the function. By $\sigma(S)[x \gets v]$, we mean $(S[x \gets v], H, A)$. By $\domain(X)$, we mean the domain of $X$. By $(S_1,H_1,A_1) \preccurlyeq (S_2, H_2, A_2)$, we mean $A_1 \subseteq A_2$ and if $S_1(x) \in A_1$ then $S_2(x) = S_1(x)$ and $H_1(f)(x) = H_2(f)(x)$ for each $f \in \domain(H_1)$. In other words, $(S_1,H_1,A_1)$ is a restriction $(S_2, H_2, A_2)$. Here, assignment of a constant, data field dereference, etc. have not been written as separate rules, as those are similar to those above.}
    \label{fig:opsem}
\end{figure}

\section{Translation of FL to FORD}
\label{app:fl-translation}

Translation of FL to FORD is given in Figure~\ref{fig:fl-translation}. As FL is just FORD extended with the support and cloud operators, the translation becomes straightforward once these have been handled. We denote the translation of $\varphi$ into FORD as $\nabla(\varphi)$.

Recall that$ M \models \Cl{\varphi} \mbox{ iff } M \models \varphi$. The translation to FORD mimics this: $\nabla(\Cl{\varphi}) = \varphi$.

Intuitively, $\nabla(\Sp(\varphi))$ will be the set containing the locations in the support of $\varphi$. For formulas not involving recursive functions, this is straightforward. For example, $\Sp(\key(x) = \key(y))$ would be the locations $\{x, y\}$. The interaction with cloud is as expected: $\Sp(\Cl{\varphi}) = \phi$.

The support of a recursive function will itself be a recursive definition, defined as the lfp of the support of the function.

For example, let us look at the following list definition:
\[
\lst(x) := \ite(x=\nil, \top, (\nxt(x) = \nxt(x)) \wedge \lst(\Cl{\nxt(x)}) \wedge x \notin \Sp(\lst(\Cl{\nxt(x)})))
\]
Notice that the definition refers to the 'support of a list'. We define $\nabla(\Sp(\lst(x)))$ as $\splst(x)$, where 
\begin{gather*}
\splst(x) :=_{\lfp} \nabla(\Sp(\ite(x=\nil, \top, (\nxt(x) = \nxt(x)) \wedge \lst(\Cl{\nxt(x)}) 
\\
\wedge x \notin \Sp(\lst(\Cl{\nxt(x)})))
))
\end{gather*}

Simplifying,

\[
\splst(x) :=_{\lfp} \ite(x=\nil, \phi, \{x\} \cup \splst(\nxt(x)) )
)
\]

This is what we expect intuitively. If $x = \nil$, then the support of the list is empty. Otherwise, it is the head of the list union the support of the rest of the list.

\begin{figure}[h!]
    \makebox[-100pt][r]{$\nabla(\top)$}  $=$  \makebox[0pt][l]{$\top$}
    \\
    \makebox[-100pt][r]{$\nabla(\bot)$} $=$ \makebox[0pt][l]{$\bot$}
    \\
    \makebox[-100pt][r]{$\nabla(c)$} $=$ \makebox[0pt][l]{$c$}
    \\
    \makebox[-100pt][r]{$\nabla(x)$} $=$ \makebox[0pt][l]{$x$}
    \\
    \makebox[-100pt][r]{$\nabla(f(t_{1},\dots,t_{n}))$} $=$ \makebox[0pt][l]{$f(\nabla(t_{1}),\dots,\nabla(t_{n}))$ for any function $f$}
    \\
    \makebox[-100pt][r]{$\nabla(t_{1} = t_{2})$} $=$ \makebox[0pt][l]{$(\nabla(t_{1}) = \nabla(t_{2}))$}
    \\
    \makebox[-100pt][r]{$\nabla(R(t_{1},\dots,t_{n}))$} $=$ \makebox[0pt][l]{$R(\nabla(t_{1}),\dots,\nabla(t_{n}))$ for any relation $R$}
    \\
    \makebox[-100pt][r]{$\nabla(\alpha \wedge \beta)$} $=$ \makebox[0pt][l]{$\nabla(\alpha) \wedge \nabla(\beta)$}
    \\
    \makebox[-100pt][r]{$\nabla(\alpha \vee \beta)$} $=$ \makebox[0pt][l]{$\nabla(\alpha) \vee \nabla(\beta)$}
    \\
    \makebox[-100pt][r]{$\nabla(\neg \varphi)$} $=$ \makebox[0pt][l]{$\neg \nabla(\varphi)$}
    \\
    \makebox[-100pt][r]{$\nabla(ite(\gamma:\alpha, \beta))$} $=$ \makebox[0pt][l]{$ite(\nabla(\gamma):\nabla(\alpha), \nabla(\beta)) $}
    \\
    \makebox[-100pt][r]{$\nabla(ite(\gamma:t_{1}, t_{2})])$} $=$ \makebox[0pt][l]{$ite(\nabla(\gamma):\nabla(t_{1}), \nabla(t_{2})) $}
    \\
    \makebox[-100pt][r]{$\nabla(Sp(\top)) = \nabla(Sp(\bot))$} $=$ \makebox[0pt][l]{$\phi$}
    \\
    \makebox[-100pt][r]{$\nabla(Sp(c))$} $=$ \makebox[0pt][l]{$\nabla(Sp(x)) = \phi$}
    \\
    \makebox[-100pt][r]{$\nabla(Sp(f(t_{1},\dots,t_{n})))$} $=$ \makebox[0pt][l]{$\bigcup_{i =1}^{n} \{t_{i}\} \cup \bigcup_{i = 1}^{n} \nabla(Sp(t_{i}))$ if $f \in \Ff_m$}
    \\
    \makebox[-100pt][r]{$\nabla(Sp(f(t_{1},\dots,t_{n})))$} $=$  \makebox[0pt][l]{$\bigcup_{i = 1}^{n} \nabla(Sp(t_{i}))$ if $f \not\in \Ff_m$}
    \\
    \makebox[-100pt][r]{$\nabla(Sp(Sp(\varphi)))$} $=$ \makebox[0pt][l]{$\nabla(Sp(\varphi))$}
    \\
    \makebox[-100pt][r]{$\nabla(Sp(Sp(t)))$} $=$ \makebox[0pt][l]{$\nabla(Sp(t))$}
    \\
    \makebox[-100pt][r]{$\nabla(Sp(t_{1} = t_{2}))$} $=$ \makebox[0pt][l]{$\nabla(Sp(t_{1})) \cup \nabla(Sp(t_{2}))$}
    \\
    \makebox[-100pt][r]{$\nabla(Sp(R(t_{1},\dots,t_{n})))$} $=$ \makebox[0pt][l]{$\bigcup_{i=1}^{n} \nabla(Sp(t_{i}))$ for $R \in \Rr$}
    \\
    \makebox[-100pt][r]{$\nabla(Sp(\alpha \wedge \beta))$} $=$ \makebox[0pt][l]{$\nabla(Sp(\alpha)) \cup \nabla(Sp(\beta))$}
    \\
    \makebox[-100pt][r]{$\nabla(Sp(\alpha \vee \beta))$} $=$ \makebox[0pt][l]{$\nabla(Sp(\alpha)) \cup \nabla(Sp(\beta))$}
    \\
    \makebox[-100pt][r]{$\nabla(Sp(\neg \varphi))$} $=$ \makebox[0pt][l]{$\nabla(Sp(\varphi))$}
    \\
    \makebox[-100pt][r]{$\nabla(Sp(ite(\gamma:\alpha, \beta)))$} $=$ \makebox[0pt][l]{$\textit{ if } \gamma \textit{ then }  \nabla(Sp(\gamma))\cup \nabla(Sp(\alpha)) \textit{ else } \nabla(Sp(\gamma))\cup \nabla(Sp(\beta))$}
    \\
    \makebox[-100pt][r]{$\nabla(Sp(ite(\gamma:t_{1}, t_{2})))$} $=$ \makebox[0pt][l]{$\textit{ if } \gamma \textit{ then }  \nabla(Sp(\gamma))\cup \nabla(Sp(t_{1})) \textit{ else } \nabla(Sp(\gamma))\cup \nabla(Sp(t_{2}))$}
    \\
    \makebox[-100pt][r]{$\nabla(Sp([\alpha]))$} $=$ \makebox[0pt][l]{$\phi$}
    \\
    \makebox[-100pt][r]{$\nabla(Sp(I(t_{1},\dots,t_{n})))$} $=$ \makebox[0pt][l]{$\bigcup_{i=1}^{n} \nabla(Sp(t_{i})) \cup SpI(t_{1},\dots,t_{n})$
    for $I \in \Ii$,}
    \\
    \makebox[-100pt][r]{}  \makebox[0pt][l]{where $SpI(\bar{x}) \coloneq_{\mathit{lfp}} \nabla(Sp(\rho_{I}(\bar{x})))$ and $I(\bar{x}) \coloneq_{\mathit{lfp}} \rho_{I}(\bar{x})$} 
    \\   
    \makebox[-100pt][r]{$\nabla([\alpha])$} $=$ \makebox[0pt][l]{$\nabla(\alpha)$}
    \caption{Translation from FL to FORD. For an FL formula $\alpha$, the translated FORD formula is $\nabla(\alpha)$, recursively defined above.} 
    \label{fig:fl-translation}
\end{figure}

\section{Details for Section~\ref{sec:slfl}: Base Logic}
\label{app:baseseplogic}
\subsection{Proof of Lemma~\ref{lemma51}}

\begin{proof}
\begin{enumerate}
\item By structural induction on $\alpha$. Most rules combine supports of smaller formulae, and hence are subsets of $dom(h)$ (including that for disjunction). The rule for $x \carrow{f} y$, 
when the support is defined, $s(x) \in dom(h)$.

 \item 
By structural induction on $\alpha$. 
If $\alpha = \delta$, a heap-independent formula, then $\Supp(\alpha, s, h)=\emptyset = \Supp(\alpha, s, h')$.
If $\alpha = x \carrow{f} y$, note that 
if support of $\alpha$ wrt $h$ is not $\bot$, it must be $S=\{s(x)\}$. Then since $h$ and $h'$ agree on $s(x)$, $s(x)$ is in $dom(h')$ as well, and $\Supp(\alpha,s,h')=\{s(x)\}=S$.\\

If $\alpha = \exists y. (x \carrow{f} y: \alpha')$, notice that $S = \{s(x)\} \cup \Supp(\alpha', s', h)$, where $s'=s[y \mapsto h(f)(s(x))]$.
But since $s(x)$ is in $S$, $dom(h')$ contains it and agrees with pointers on $s(x)$, and hence $h'(f)(s(x))=h(f)(s(x))$. Hence $\Supp(\exists y. (x \carrow{f} y: \alpha,s,h') = \{s(x)\} \cup \Supp(\alpha', s', h') = 
\{s(x)\} \cup \Supp(\alpha', s', h)$ (by induction hypothesis) $= \Supp(\alpha, s, h)$.\\

Let's now consider $\alpha = \textit{ite}(\gamma, \alpha', \beta')$, and assume its support in $h$, $S$, is not $\bot$. Then $\Supp(\gamma, s, h)$ cannot be $\bot$, and since it is a subset of $S$, by the induction hypothesis, $\Supp(\gamma, s, h')=\Supp(\gamma, s, h)$. Note also that
$h \proj \Supp(\gamma, s, h) = h' \proj \Supp(\gamma, s, h')$, and hence $(s, h \proj \Supp(\gamma, s, h)) \models \gamma$ iff $(s, h' \proj \Supp(\gamma, s, h')) \models \gamma $. 
Consider the first case when $(s, h, \Supp(\gamma, s, h)) \models \gamma$ and $\Supp(\alpha, s, h)$ is not $\bot$. 
Then the latter is a subset of $S$, and by induction hypothesis, $\Supp(\alpha, s, h')=\Supp(\alpha, s, h)$, which proves our claim. 
Similarly, in the second case, when $(s, h \proj \Supp(\gamma, s, h)) \not \models \gamma$, $\Supp(\beta, s, h)$ is not $\bot$,  it is a subset of $S$, and by induction hypothesis, $\Supp(\beta, s, h_1)=\Supp(\beta, s, h)$, which also proves our claim.\\

The cases for $\alpha = \alpha' \oplus \beta'$ are trivial, and follow directly from the inductive hypothesis.\\

\item Follows from (2) since $h \proj S$ agrees with $h$ on $S$.

\item Heaplet $h$ agrees with $h_1$ on $dom(h_1)=\Supp(\alpha, s, h_1)$, and
so $\Supp(\alpha,s,h)=dom(h_1)$ (by (2) above). Similarly, heaplet $h$ agrees with $h_2$ on $dom(h_2)=\Supp(\alpha, s, h_2)$, and so $\Supp(\alpha,s,h)=dom(h_2)$.
Hence $dom(h_1)=dom(h_2)$.

\end{enumerate}
\end{proof}

\subsection{Proof of Lemma~\ref{lemma52}}
\begin{enumerate}
 \item  Proved by structural induction on $\alpha$. Base cases are easy. \\

 Consider $\alpha = \exists y. x \carrow{f} y: \alpha'.$ and $(s,h)\models \alpha$. Then
 $s(x) \in dom(h)$, and hence $\Supp(\alpha, s, h) = \{s(x)\} \cup \Supp(\alpha', s', h)$,
 where $s'=s[y \rightarrow h(f)(s(x))]$
 and $(s',h) \models \alpha'$.
 By induction hypothesis $\Supp(\alpha', s', h)=dom(h)$. Hence 
 $\Supp(\alpha, s, h) = dom(h)$.\\

 Consider $\alpha = \alpha_1 \star \alpha_2$,
 and $(s,h) \models \alpha$. 
 Then there are two heaplets $h_1, h_2$ of $h$, 
 the unions of these domains being the domain of $h$, and $(s,h_i) \models \alpha_i$, $i=1,2$.
 It follows by induction hypothesis that
  $\Supp(\alpha_i, s, h_i) = dom(h_i)$.
 By Lemma 4.1, (4) and (3), it follows that
  $\Supp(\alpha_i, s, h) = dom(h_i)$.
 Hence $\Supp(\alpha_1 \star \alpha_2, s, h)$ is
 $\Supp(\alpha_1, s, h) \cup \Supp(\alpha_2, s, h)$ (by definition), which is equal to 
 $dom(h_1) \cup dom(h_2) = dom(h)$.
 The cases for $\alpha_1 \oplus \alpha_2$, where $\oplus \in \{\wedge, \weakconj, \vee\}$ as well as that for $\ite$ expressions are similar.\\
  
 \item Assume there are two subheaplets that satisfy $\alpha$. By (1), the supports of $\alpha$ in each of them would be their own domains. But by Lemma~\ref{lemma51} (4), the heaplets of these domains must be the same. Hence the heaplets are identical. \qed

\end{enumerate}

\section{Details for Section~\ref{sec:slfl}: Extended logic \textit{SL-FL}}
\label{app:seplogic}
The semantics of this logic extends the semantics on the base logic in the following ways. First, we assume that each background sort is a complete lattice (for flat sorts like arithmetic, we can introduce a bottom element and a top element to obtain a complete lattice). Next, we treat the equations defining $\Supp$ in Figure~\ref{fig:slflbsemantics} as recursive definitions with least fixpoint semantics, with
$\subseteq$ over sets as ordering. 

We add support definitions for recursively defined predicates and functions:
$$\Supp(R(\overline{x}), s, h) = \Supp(\rho_R(\overline{x}), s, h)$$
$$\Supp(F(\overline{x}), s, h) = \Supp(\mu_F(\overline{x}), s, h)$$
where 
$R(\overline{x}) = \rho_R(\overline{x})$ 
and
$F(\overline{x}) = \mu_F(\overline{x})$ 
are the definitions of $R$ and $F$, respectively. 
We can show that the least fixpoint of the above equations defining the map $\Supp$ satisfies the properties described in Lemma~\ref{lemma51} for the extended logic as well. 

The semantics of \SLFL\ is now easy to define. 
We modify the semantics given in Figure~\ref{fig:slflbsemantics} as follows, 
defining the semantics for recursively defined predicates:
\[
\begin{array}{rcl}

(s,h) \models R(\overline{x}) & \textit{~iff~} & 
(s,h) \models \rho(\overline{x}), ~\textit{~where~} R\textit{'s~definition~is~} R(\overline{x})=\rho(\overline{x})
\end{array}
\]

We can prove lemmas analogous to Lemma~\ref{lemma52} for the extended logic.

Finally, we can translate the extended logic to frame logic as well, modifying the translation given in Figure~\ref{fig:sltofl} for $\textit{ite}$
formulas and translating definitions:

\[
\begin{array}{rcll}
\Pi( \textit{ite}(\alpha', \alpha, \beta) ) & = &  \textit{ite}(\Pi(\alpha'), \Pi(\alpha), \Pi(\beta)) \\

\Pi(p(\overline{t})) & = & p(\overline{t}) \\

\Pi(R(\vec{x})) & = & R(\vec{x}) \\

\Pi(x.f)) & = & f(x) \\

\Pi(f(\overline{t})) & = & f(\Pi(\overline{t})) \\

\Pi(\textit{ite}(\alpha, t, t')) &=& \ite(\alpha, \Pi(t), \Pi(t')\\

\Pi(F(\overline{t})) & =& F(\Pi(\overline{t}))\\
\Pi(~R(\vec{x}) :=_\textit{lfp} \rho(\vec{x})~) & = & 
 R(\vec{x}) :=_\textit{lfp} \Pi(\rho(\vec{x}))\\

\Pi(~F(\vec{x}) :=_\textit{lfp} \mu(\vec{x})~) & = & 
 R(\vec{x}) :=_\textit{lfp} \Pi(\mu(\vec{x}))

\end{array}
\]

We can now prove the lemma analogous to Lemma~\ref{lemma53}:

\begin{lemma}\label{lemmaE3}
    Let $g$ be a global heap (a heaplet with domain $Loc$) and $s$ be a store. For frame logic formulas, we interpret the store as an FO interpretation of variables.
    Let $\alpha$ be an \SLFL\ formula.
    Then 
    \begin{itemize}
        \item $g \models_s \Pi(\alpha)$ iff there exists a heaplet $h$ of $g$ such that
        $(s,h) \models \alpha$.
        \item If $g \models_s \Pi(\alpha)$, then
         $\Supp(\alpha, s, g)$ is equal to the value of $\Sp(\Pi(\alpha))$ in $g$. \qed
    \end{itemize}
\end{lemma}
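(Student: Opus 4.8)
The plan is to prove Lemma~\ref{lemmaE3} by structural induction on the \SLFL\ formula $\alpha$, mirroring the proof of Lemma~\ref{lemma53} but extending it to handle the new syntactic constructs: background-sort predicates $p(\overline{t})$, recursively defined predicates $R(\overline{x})$, and the terms $t.f$, $g(\overline{t})$, $\textit{ite}(\alpha,t,t')$, $F(\overline{t})$ that may occur inside them. Since both the \SLFL\ semantics (for $\models$ and $\Supp$) and the resulting \FL\ semantics (for $\Sp$ and satisfaction) are defined as least fixpoints of mutually recursive equations, the cleanest route is a \emph{simultaneous fixpoint induction}: I would first establish both claims for all non-recursive constructs by straightforward structural induction (these cases are essentially the same as in Lemma~\ref{lemma53}, using the translation rules in Figure~\ref{fig:sltofl} together with the new rules for $\textit{ite}$ and terms given in Appendix~\ref{app:seplogic}), and then argue that the fixpoint defining the \SLFL\ semantics of the recursive symbols and the fixpoint defining the \FL/FORD semantics of their translations are in lockstep at every ordinal stage.

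First I would prove an auxiliary term-level statement: for every \SLFL\ term $t$, and every global heap $g$ and store $s$, the value $\sem{t}_{(s,g)}$ equals $\sem{\Pi(t)}_g$, and $\Supp(t,s,g)$ equals the value of $\Sp(\Pi(t))$ in $g$. This is needed because predicates $p(\overline{t})$ and the $\textit{ite}$ guard-free term constructs reference terms. The term cases are by structural induction: $x.f$ translates to $f(x)$ whose \FL\ support is $\{x\}$, matching $\Supp$ of a dereference; $g(\overline{t})$ and $F(\overline{t})$ translate homomorphically with support the union of the argument supports, matching the \SLFL\ rules; $\textit{ite}(\alpha,t,t')$ uses the already-proven formula case for the guard $\alpha$ (which may not mention recursive relations by the side condition in Figure~\ref{fig:slfl}, so no circularity). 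With the term lemma in hand, the formula cases $p(\overline{t})$ and $x \carrow{f} y$ are immediate, and the Boolean/spatial connectives $\star,\wedge,\weakconj,\vee,\textit{ite}$ go through exactly as in the proof of Lemma~\ref{lemma53}, using Lemma~\ref{lemma51} to know that the support of each translated conjunct, restricted appropriately, is the unique heaplet on which the subformula holds, and using item~(1) of Lemma~\ref{lemma52} (extended to \SLFL, as stated in Appendix~\ref{app:seplogic}) that $(s,h)\models\alpha$ forces $\dom(h)=\Supp(\alpha,s,h)$.

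The main obstacle is the recursive-definition case $R(\overline{x})$ and its interaction with recursively defined functions $F$. Here I would set up parallel approximant sequences: on the \SLFL\ side, let $(\models^{(\kappa)}, \Supp^{(\kappa)})$ be the $\kappa$-th iterate of the monotone operator on $(\models,\Supp)$ whose least fixpoint is the \SLFL\ semantics (using $\subseteq$ on the power set of locations for supports and the assumed complete-lattice structure on background sorts for the $F$'s); on the \FL\ side, let the interpretations of $R$, $F$, and their support functions $\Sp(R(\cdot))$ be built up by the corresponding FORD least-fixpoint iterates obtained from the translated definitions $\Pi(R(\vec{x}):=_{\textit{lfp}}\rho)$. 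I would then prove by transfinite induction on $\kappa$ that for all $\overline{x}$, $(s,g)\models^{(\kappa)} R(\overline{x})$ iff $g\models_s \Pi(R(\overline{x}))^{(\kappa)}$ and $\Supp^{(\kappa)}(R(\overline{x}),s,g)$ equals the $\kappa$-th approximant of $\Sp(\Pi(R(\overline{x})))$ in $g$; the successor step unfolds one layer of the definition and applies the structural induction hypothesis to the body $\Pi(\rho)$, and the limit step takes unions. Taking $\kappa$ to the closure ordinal yields the claim for recursive predicates, and hence the lemma; the subtlety to be careful about is that the guards of $\textit{ite}$ expressions occurring inside bodies do not mention recursive relations, so the body's evaluation at stage $\kappa+1$ genuinely depends only on stage $\kappa$ data, keeping the two approximation ladders synchronized. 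Finally, item~(1) of the lemma gives item~(2) for free in the "if $g\models_s\Pi(\alpha)$" direction, exactly as in Lemma~\ref{lemma53}.
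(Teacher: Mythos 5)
The paper never actually writes out a proof of Lemma~\ref{lemmaE3}: both it and Lemma~\ref{lemma53} are asserted with a closing box, and Appendix~\ref{app:seplogic} only records the additional semantic and translation clauses for the extended logic. So there is no official argument to measure your plan against; judged on its own terms, your plan is the right one. The decomposition into (i) a term-level correspondence ($\sem{t}_{(s,g)} = \sem{\Pi(t)}_g$ and $\Supp(t,s,g) = \sem{\Sp(\Pi(t))}_g$), (ii) structural induction over the connectives, using Lemmas~\ref{lemma51} and~\ref{lemma52} to identify $g \proj \Supp(\alpha,s,g)$ as the unique candidate heaplet, and (iii) a fixpoint correspondence for the recursively defined symbols, is exactly what the extension from \SLFLbd\ to \SLFL\ requires, and your observation that $\ite$ guards may not mention recursive relations is indeed the fact that keeps the mutual recursion well-founded.

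Two caveats, neither fatal. First, the claim that the two approximation ladders coincide at every ordinal stage $\kappa$ is stronger than needed and slightly delicate, because the two monotone operators act on different lattices: one on pairs $(\models,\Supp)$ indexed by arbitrary store/heaplet pairs, the other on interpretations of $R$, $F$, and their support predicates over the single global heap $g$. It is cleaner to show that the translation carries pre-fixpoints to pre-fixpoints in both directions and conclude that the least fixpoints correspond, which avoids arguing that closure ordinals and intermediate stages line up. Second, in the direction from $g \models_s \Pi(\alpha)$ to the existence of a satisfying heaplet, the $\vee$ case of the semantics in Figure~\ref{fig:slflbsemantics} requires a sub-heaplet $h_2$ with $\dom(h_2)=\Supp(\beta,s,h_2)$ even for the disjunct that is not witnessed, and more generally $\Supp(\cdot,s,g)$ can evaluate to $\bot$ on a global heap (e.g., when a dereferenced variable is interpreted as $\nil$), whereas $\Sp(\Pi(\cdot))$ never does. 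A written-out proof should either verify that these $\bot$ cases cannot arise under the lemma's hypotheses or state the implicit non-$\bot$ assumption; your sketch does not address this.
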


\section{Validating Verification Conditions}
\label{sec:vcreasoning}

In this section, we describe our validity procedure for checking the quantifier-free frame logic verification conditions generated by the mechanism described in Section~\ref{sec:vcgen}. Our technique has two stages. In the first stage, we translate the quantifier-free FL formulas to quantifier-free FORD formulas in a way that preserves validity. Recall that quantifier-free FL is simply an extension of quantifier-free FORD with the support $\Sp(\cdot)$ and cloud $\Cl{\cdot}$ operators. Intuitively, our translation encodes the semantics of these operators in FORD itself. In the second stage, we reason with the quantifier-free FORD formulas using natural proofs, a mechanism developed in prior work~\cite{qiu13,pek14,loding18}.

\subsection{Stage 1: Translating Quantifier-Free FL to Quantifier-Free FORD}
\label{sec:fl-translation}

The first stage of our reasoning mechanism translates FL formulas to FORD formulas. The key idea is to encode the semantics of the $\Sp$ and $\Cl{\cdot}$ operators within FORD itself. This process introduces new recursive definitions corresponding to the support of existing recursive definitions.

Let us denote the translation by $\nabla$. This function takes as input a quantifier-free FL formula (or term) containing the $\Sp$ and $\Cl{\cdot}$ operators and outputs an FORD formula (resp. term). We describe the formal translation in Appendix~\ref{app:fl-translation} and provide intuition here. First, we obviously have that for any FL formula $\alpha$ that does not mention the support or cloud operators, $\nabla(\alpha) = \alpha$.

\smallskip
Next, to encode the $\Sp$ operator, it turns out that we can essentially mimic the equations describing the semantics of $\Sp$ (Figure~\ref{fig:sp-semantics})! This is easy to see for formulas that do not contain recursively defined symbols. For example, $\nabla(\Sp(f(x)=y)) = \{x\}$ for a mutable function $f \in \Ff_m$, since $\nabla(f(x) = y) = \nabla(f(x)) \cup \nabla(y) = \{x\} \cup \emptyset$. Similarly, $\nabla(\Sp(\alpha \land \beta)) = \nabla(\Sp(\alpha)) \cup \nabla(\Sp(\beta))$. In this way, the $\Sp$ operator can be `compiled away' by simply following the equations in Figure~\ref{fig:sp-semantics}. 

However, this does not work for recursively defined symbols. For example, recall the definition of $\lst(x)$ written using the cloud operator:
\begin{center}
$\lst(x) :=_{\mathit{lfp}} \ite(x=\nil,\, \top,\, \nxt(x) = \nxt(x) \land \lst(\Cl{\nxt(x)}) \land x \notin \Sp(\lst(\Cl{\nxt(x)})))$
\end{center}

Using the recipe described above, the support of $\lst(x)$ yields the following equation:
\begin{align*}
\Sp(\lst(x)) &= \Sp(\ite(x=\nil,\, \top,\, \nxt(x) = \nxt(x) \land \lst(\Cl{\nxt(x)}) \land x \notin \Sp(\lst(\Cl{\nxt(x)}))))\\
&= \ite(x = \nil, \emptyset, \{x\} \cup \Sp(\lst(\Cl{\nxt(x)}))
\end{align*}

\noindent
where we have expanded $\Sp$ on a cloud expression using the equation $\Sp(\Cl{\alpha}) = \emptyset$.

We cannot compile away the $\Sp$ operator in this case. However, recall that the $\Sp$ operator is the least interpretation satisfying the equations in Figure~\ref{fig:sp-semantics}. 
We therefore introduce a new recursively defined symbol $\Sp_\lst$, defined by
\begin{center}
$\Sp_\lst(x) :=_{\mathit{lfp}} \ite(x = \nil, \emptyset, \{x\} \cup \Sp_\lst(\Cl{\nxt(x)}))$
\end{center}

\noindent
and we then translate $\Sp(\lst(x))$ to $\Sp_\lst(x)$. The latter is now an expression in FORD mentioning the newly created recursive definition. In general, our translation creates a new recursively defined symbol $\Sp_I$ for every $I \in \Ii$ corresponding to its support.

\smallskip
Finally, the encoding of the $\Cl{\cdot}$ operator is trivial. Since support expressions were eliminated in the previous step, $\Cl{\alpha}$ is identical to $\alpha$. Therefore, we simply ignore it in the translation. In particular, the recursive sub-expression $\Sp_\lst(\Cl{\nxt(x)})$ in the above definition becomes $\Sp_\lst(\nxt(x))$ in the final translation to FORD.

\subsection{Stage 2: Reasoning with FORD Formulas using Natural Proofs and SMT Solvers}
\label{sec:natproofs}

The above translation of frame logic formulas results in FORD formulas that are quantifier-free (i.e., all free variables are implicitly universally quantified). Furthermore, they have the special form that functions from the location sort map to either the location sort or a background sort, but no function maps background sorts to the location sort. Hence they are part of a special fragment with ``one-way'' definitions, $\mathcal{L}_\textit{oneway}$ fragment.

Natural proofs~\cite{qiu13,pek14,loding18} are a sound but incomplete technique for proving validity of FORD formulae. It works by treating recursive definitions to have fixpoint semantics (rather than least fixpoint) and hence obtaining a first-order formula. It then instantiates the recursive definitions (similar to instantiating the quantifier for these definitions) on terms of depth $d$ (for larger and larger $d$), and obtaining quantifier-free formulas. Validating these quantifier-free formulas is performed using an  SMT solver. The foundations of this technique~\cite{loding18}
show that the technique is in fact a complete technique for dealing with first-order formulae (where recursive definitions have fixpoint semantics). Furthermore, in practice, natural proofs have been shown useful for heap verification of other logics.
We can hence validate verification conditions using this automated technique.

\section{Details of Evaluation}
\label{app:eval}

\subsection{Benchmark Suite}
Table~\ref{tab:eval-benchmarks} summarizes our benchmark suite. We report in addition to lines of code and annotations,  (specification, contracts, loop invariants, etc.) and the number of lemmas needed for each data structure class.

\begin{table}[ht]\scriptsize
    \centering
    \begin{tabular}{cc|ccccc}\hline
    
        Data Structure &Operations & Lines of &Lines of&Rec. Defs. & No. of unique&Lemmas\\
        & &Code+Annot. (\textit{FL}) &Code +Annot.(\SLFL\ ) && rec def heaplets& \\ \hline

        \multirow{10}{*}{Singly Linked List}
            &Append &30 &29 &2 &1 & 0 \\
            &Copy All &32 &33 &2 &1 & 0 \\
            &Delete      &35 &35 &2&1& 0 \\
            &Find        &34 &35 &2&1& 0 \\
            &Insert Back &32 &32 &2&1& 0 \\
            &Insert Front &21 &21 &2&1& 0 \\
            &Reverse     &33 &33 &2&1& 0 \\

            &Insertion Sort &60 &62 &4 &1 & 1 \\
            &Merge Sort &107 &110 &4 &1 & 1 \\
            &Quicksort  &145 &152 &5 &1 & 1 \\ \hline
            
        \multirow{4}{*}{Doubly Linked List}
            &Insert Back &38 &44 &2 &1 &\multirow{4}{*}{0} \\
            &Insert Front   &36 &41 &2 &1 &  \\
            &Delete Mid &50 &65 &4
            &2 & \\
            &Insert Mid &56 &77 &4 &2 &  \\ \hline
            
        \multirow{3}{*}{Sorted List}
            &Delete &40 &42 &4&1 &\multirow{3}{*}{1} \\
            &Find   &44 &47 &4 &1 & \\
            &Insert &44 &46 &3 &1 & \\
            \hline
            
        \multirow{3}{*}{Circular List}
            &Delete Front &61 &61 &4 &2 &\multirow{3}{*}{1} \\
            &Find &61 &62 &4 &2 & \\
            &Insert Front &42 &42 &4 &2 &  \\ \hline
            
        \multirow{4}{*}{BST}
            &Delete &113 &120 &4 &1 &\multirow{5}{*}{2} \\
            &Find      &76 &76 &4 &1 & \\
            &Insert    &86 &88 &4 &1 & \\
            &Rotate Right &58  &59 &4 &1 & \\ 
            &BST to List &94 &90 &7 &2 & \\ \hline

        \multirow{1}{*}{Tree}
            &In-order Traversal &65 &65 &3 &1 &0\\ \hline 
        \multirow{2}{*}{Treap}
            &Delete &154 &158 &\multirow{2}{*}{6} &\multirow{2}{*}{1} &4 \\
            &Find   &95 &97 & & &2 \\ \hline
            
        \multirow{1}{*}{RBT}
            &Insert &193 &218 &5 &1 &2 \\\hline
    \end{tabular}
    \caption{List of benchmarks, the lines of code, the number of recursive definitions and the number of these with different heaps, and the number of lemmas needed to prove the benchmark.}
    \label{tab:eval-benchmarks}
\end{table}

Table~\ref{tab:eval-contracts} contains examples of pre- and post-conditions for sample routines. \textit{FL} specifications use the $\Sp$ operator to express disjointness of data structures. We also note that specifications may need to be strengthened for verification. For example, BST Remove Root requires the following constraint in the post-condition  $ (\mathit{Old}(\mathit{Min}(x)) \leq \mathit{Min}(ret))\wedge (\mathit{Max}(ret) \leq \mathit{Old}(\mathit{Max}(x)))$ to verify. 

\begin{table}[ht]\footnotesize
    \centering
    \begin{tabular}{l l l} \hline
        Benchmark &Pre Condition & Post Condition \\ \hline

        Sorted Insert 
        &$\mathit{Sorted}(x)$ 
        &\makecell[l]{$\mathit{Sorted}(\mathit{ret}) \land \mathit{Keys}(\mathit{ret}) = \mathit{Old}(\mathit{Keys}(x)) \cup \{ k \}$ 
        \\ $\land$ $\ite(\mathit{Old}(\mathit{Min}(x)) < k,$ $\mathit{Min}(\mathit{ret}) = \mathit{Old}(\mathit{Min}(x)),$ 
        $\mathit{Min}(\mathit{ret}) = k)$} \\\hline
        
        \makecell[l]{Sorted Concat \\ (part of Quicksort)} &\makecell[l]{$\mathit{Sorted}(x) \land$ $\mathit{Sorted}(y)$ 
        \\$\land$ $\mathit{Max}(x) \leq \mathit{Min}(y)$ 
        \\$\land$ $(Sp(\mathit{Sorted}(x))$ 
        \\ \hspace{0.2cm} $\cap$ $Sp(\mathit{Sorted}(y))) = \emptyset$} 
        &\makecell[l]{ $\mathit{Sorted}(\mathit{ret})$ $\land$ $\mathit{Keys}(\mathit{ret}) = \mathit{Old}(\mathit{Keys}(x)) \cup \mathit{Old}(\mathit{Keys}(y))$ 
        \\ $\land$ $\ite(x = \nil, \mathit{Min}(\mathit{ret}) = \mathit{Old}(\mathit{Min}(y)),$ $\mathit{Min}(\mathit{ret}) = \mathit{Old}(\mathit{Min}(x)))$ 
        \\$\land$ $\ite(y = \nil, \mathit{Max}(\mathit{ret}) = \mathit{Old}(\mathit{Max}(x)),$ 
        $\mathit{Max}(\mathit{ret}) = \mathit{Old}(\mathit{Max}(y)))$} \\\hline

    \end{tabular}
    (a) Example \textit{FL} Pre and Postconditions 
    \vspace{0.1cm}

    \begin{tabular}{l l l} \hline
        Benchmark &Pre Condition & Post Condition \\ \hline

        BST Rotate Right
        &\makecell[l]{$(x \not= \nil)$\\ 
        $\star (\mathit{BST}(x) \weakconj (\mathit{left}(x) \not= \nil ))$} 
        
        &\makecell[l]{$\mathit{BST}(ret) \wedge 
          (\mathit{Keys}(ret) = \mathit{Old}(\mathit{Keys}(x)))$
          } 
          \\\hline

        \makecell[l]{BST Remove Root\\ (part of BST Delete)}
        &\makecell[l]{$(x \not= \nil) \star (\exists \mathit{lft}. (x \carrow{\mathit{left}} \mathit{lft}:$
        \\
         $( \exists \mathit{rht}. (x \carrow{\mathit{right}} \mathit{rht}:
            (
              (k = \mathit{key}(x))$\\ 
              $\star 
              (\mathit{BST}(\mathit{lft}) 
              \wedge (\mathit{Max}(\mathit{lft}) < k))$\\
              $\star
              (\mathit{BST}(rht) 
              \wedge (k < \mathit{Min}(\mathit{rht})))
              )
            )
         )
        )$} 
        
        &\makecell[l]{$\mathit{BST}(ret) \wedge 
          (\mathit{Keys}(ret) = (\mathit{Old}(\mathit{Keys}(x)))\setminus \{k\})$\\
          $\wedge 
          (\mathit{Old}(\mathit{Min}(x)) \leq \mathit{Min}(ret))
          \wedge 
          (\mathit{Max}(ret) \leq \mathit{Old}(\mathit{Max}(x)))
          $} 
          \\\hline

    \end{tabular}
    (b) Example \SLFL\ Pre and Post conditions 
    \vspace{0.05cm}
    
    \caption{Example Pre and Post Conditions. $x$ and $y$ are of type $\Loc$, and $k$ is of type $\mathit{Int}$. $\mathit{ret}$ is a pointer to the data structure returned by the routines. 
    }
    \label{tab:eval-contracts}
\end{table}

Table~\ref{tab:eval-lemmas} contains some examples of lemmas used in our benchmarks.

\begin{table}[ht]\footnotesize
    \centering
    \begin{tabular}{cc} \hline
        Benchmark &Lemmas \\\hline

        Insertion Sort &$\mathit{Sorted}(x) \implies \mathit{List}(x)$ 
        
        \\\hline
        
        BST Find &$\mathit{BST}(x) \implies \mathit{Max}(x) < k \implies k \not\in \mathit{Keys}(x)$ \\\hline
        
        Tree to List &\makecell{$\mathit{BST}(x) \implies n \neq \mathit{nil} \implies$ \\ $\ite(\mathit{left}(x) = \mathit{nil}, \mathit{Min}(x) = key(x), \mathit{Min}(x) = \mathit{Min}(\mathit{left}(x)))$} 
        
        \\\hline

        Quicksort &$x \neq \mathit{nil} \implies \mathit{Max}(x) \leq \mathit{Min}(y) \implies \mathit{Min}(x) \leq \mathit{Min}(y)$ \\\hline

    \end{tabular}
    \caption{Example Lemmas}
    \label{tab:eval-lemmas}
\end{table}

\subsection{Definitions of Datastructures}
The definitions of the FL data structures used in the benchmarks 
use explicit support and cloud operators.
The definition for singly-linked lists is as given below:
\begin{equation*}
    \mathit{List(x)} := \ite(x = \nil, \top, \mathit{List}(\nxt(x)) \land x \not\in \Sp(\mathit{List}([\nxt(x)]))
\end{equation*}
and the definition for doubly-linked lists is similar, with the addition of a constraint that if $\nxt(x)$ is not nil then its previous pointer should point back to $x$,
\begin{align*}
    \mathit{Dll}(x) := \ite(&x = \nil, \top, \ite(\nxt(x) = \nil, \top, \\ &x = \mathit{prev}(\nxt(x)) \land \mathit{Dll}(\nxt(x)) \land x \not\in \Sp(\mathit{Dll}([\nxt(x)]))))
\end{align*}
Next, to define sorted lists, we rely on an auxiliary definition of the minimum element in a list and then a (non-decreasing) sorted list is simply a list where the front key is no greater than the minimum value of its tail, which is itself also a sorted list:
\begin{align*}
    \mathit{Min}(x) := \ite(&x = \nil, +\infty, \ite(\key(x) < \mathit{Min}(\nxt(x)), \key(x), \mathit{Min}(\nxt(x)))) \\
    \mathit{Sorted}(x) := \ite(&x = \nil, \top, \\ &\mathit{Sorted}(\nxt(x)) \land x \not\in \Sp(\mathit{Sorted}([\nxt(x)])) \land \key(x) \leq \mathit{Min}(\nxt(x)))
\end{align*}
In our implementation of this definition, $+\infty$ is simply an additional variable that is treated as having an infinitely large value.

For circular lists, 
we define list segments using the $\mathit{Lseg}(\cdot)$ definition and we then define a circular list as empty or reaching itself through its next pointer,
\begin{align*}
    \mathit{Lseg}(x, y) := \ite(&x = y, \top, \ite(x = \nil, \bot, \\ &\mathit{Lseg}(\nxt(x), y) \land x \not\in \Sp(\mathit{Lseg}([\nxt(x)], y)))) \\
    \mathit{Circ}(x) := \ite(&x = \nil, \top, \mathit{Lseg}(\nxt(x), x) \land x \not\in \Sp(\mathit{Lseg}([\nxt(x)], x)))
\end{align*}
Our definition for binary search trees, like for sorted lists, relies on a definition of the minimum and maximum values in a tree and the binary search property is specified by requiring the maximum element in the left subtree to be less than the key of the root and the minimum element in the right subtree to be greater than the key of the root.
The definitions of $\mathit{Min}$ and $\mathit{Max}$ are similar to that for sorted lists and so are omitted; we note that the default value for $\mathit{Max}$ if $x$ is $\nil$ is $-\infty$.
Then, the definition of a binary search tree is 
\begin{align*}
    \mathit{BST}(x) := \ite(&x = \nil, \top, \\ &\mathit{BST}(\mathit{left}(x)) \land \mathit{Max}(\mathit{left}(x)) < \mathit{key}(x) \land \mathit{BST}(\mathit{right}(x)) \\ &\land \key(x) < \mathit{Min}(\mathit{right}(x)) \land x \not\in \Sp(\mathit{BST}([\mathit{left}(x)])) \\ &\land x \not\in \Sp(\mathit{BST}([\mathit{right}(x)])) \land \Sp(\mathit{BST}([\mathit{left}(x)])) \cap \Sp(\mathit{BST}([\mathit{right}(x)])) = \emptyset)
\end{align*}
In addition to our normal requirement that $x$ not be in the support of either of the subtrees we also specify here that the supports of the subtrees must be disjoint; together these properties express that $x$ is indeed a tree.

The definitions for Treaps and Red-Black Trees are similar to this definition for binary search trees, augmented with their appropriate additional constraints, the max heap property on the priorities in the Treap and the red-black property for Red-Black trees.

The implementation of the max heap property for the Treap is in the same fashion as the binary search property for binary-search trees, except that we just require that the priority of the root be greater than the maximum priority of either subtree.

For red-black trees we add a recursive definition which determines the (maximum) black-height of a tree and then a tree is a red-black tree if both subtrees have the same black height and the root is black or both children are black.

\subsection{An Illustrative Example}

In this section, we step through the VC generation of a program to illustrate the transformation from an \SLFL\ annotated program to FL VCs.

Given below is a program RevPrepend(x, y), which takes two linked lists and prepends the reverse of the first onto the second. The program has two recursive functions, $\lst$ and $\Keys$, written in \SLFL. Recall from Section~\ref{sec:eval} that the support of $\Old(.)$ is always empty.

\begin{lstlisting}
List(x) := ite( x = nil, True, Exists y:  next(x). (next(x) = next(x)) * List(y))
Keys(x) := ite( x = nil, EmptySetInt, SetAdd(Keys(next(x)), x))

RevPrepend(x, y) returns(ret)
@pre: List(x) * List(y)
@post: List(ret) and (Keys(ret) = SetUnion( Old(Keys(x)), Old(Keys(y))))
(if (x = nil)
 then
   ret := y;
 else
   tmp := x.next;
   x.next := y;
   ret := RevPrepend(tmp, x);
)
return;
\end{lstlisting}

First, we convert the recursive definitions and the program annotations into FL (see Fig \ref{fig:sltofl} for the translation). For instance, $\lst$ becomes
\begin{gather*}
\lst(x) := \ite(x=\nil, \top, \exists y : y = \nxt(x).\, (\nxt(x) = \nxt(x)) \land \lst(y)
\\
\land (\Sp(\nxt(x) = \nxt(x)) \cap \Sp(\lst(y))) = \phi
)
\end{gather*}
Using Lemma~\ref{lem:cloud-lemma}, we eliminate the existential quantifier,
\begin{gather*}
\lst(x) := \ite(x=\nil, \top, (\nxt(x) = \nxt(x)) \land (\nxt(x) = \nxt(x)) \land \lst(\Cl{\nxt(x)})
\\
\land (\Sp(\nxt(x) = \nxt(x)) \cap \Sp(\lst(\Cl{\nxt(x)}))) = \phi
)
\end{gather*}

Observe that this mechanical translation introduces a blow-up in the size of the formula. For example, it introduces a check if the supports of $\nxt(x) = \nxt(x)$
and $\Sp(\lst(\Cl{\nxt(x)}))$ are disjoint. This can be simplified to just checking if $x$ is in the support of the latter. A more drastic example of the translation introducing superfluous checks can be seen in the postcondition, which gets translated to 
\begin{gather*}
    \lst(ret) \wedge (\Keys(\ret) = \Old(\Keys(x)) \cup \Old(\Keys(y))) \wedge \\
    \Sp(\lst(ret)) = \Sp(\Keys(\ret) = \Old(\Keys(x)) \cup \Old(\Keys(y)))
\end{gather*}

One can check that the last conjunct always holds (the support of $\lst$ and that of $\Keys$ is always the same), and that the support of the entire expression is equal to the support of $\lst(ret) \wedge (\Keys(\ret) = \Old(\Keys(x)) \cup \Old(\Keys(y)))$. Thus, the post condition is equivalent to $\lst(ret) \wedge (\Keys(\ret) = \Old(\Keys(x)) \cup \Old(\Keys(y)))$. Our optimizations involve performing such simplifications syntactically, reducing the burden on the SMT solver. The following is an FL-annotated version of RevPrepend.

\begin{lstlisting}
List(x) := ite( x = nil, True, (next(x) = next(x)) and List([next(x)]) and Not(IsMember(x, Sp(List([next(x)])))))
Keys(x) := ite( x = nil, EmptySetInt, SetAdd(Keys(next(x)), x))

RevPrepend(x, y) returns(ret)
@pre: List(x) and List(y) and IsEmpty( SetIntersect( Sp(List(x)), Sp(List(y))))
@post: List(ret) and (Keys(ret) = SetUnion( Old(Keys(x)), Old(Keys(y)))) 

(if (x = nil)
 then
   ret := y;
 else
   tmp := x.next;
   x.next := y;
   ret := RevPrepend(tmp, x);
)
return;
\end{lstlisting}

We shall not go over the entire VC generation in detail; we will just look at the interesting cases involving the function call. The other BBs are straightforward. For example, the 'if' case simply sets $T$ to be the conjunct of the precondition, $x = \nil$, and $\ret= y$. Finally, it checks if the post-condition holds. Since $\ret = y$, this follows directly from the pre-condition. In the BBs checking dereference safety, say, of $tmp := x.\nxt$, $T$ will be the precondition and $x \not = \nil$, and the VC will check if $x \in A$ (where $A$ is just the support of the precondition).

When calling RevPrepend(tmp, x), we need to ensure memory safety - that is, the locations initially accessed by the function call is contained \emph{within} the allocated set, motivating the need for relaxed post-conditions. This is given by the following BB.

\begin{lstlisting}
{List(x) and List(y) and IsEmpty( SetIntersect( Sp(List(x)), Sp(List(y))))} 
# T =  List(x) and List(y) and IsEmpty( SetIntersect( Sp(List(x)), Sp(List(y)))), 
# A = Sp(T), Fr = {}, RD = {List, Keys}, H = \lambda f. f where f \in {next, key}
assume( x != nil);
# T <-  T and x != nil
tmp := x.next;
# T <- T and tmp = next(x)
x.next := y;
# H1 = H[next <- lambda arg. ite(arg = x, y, next(arg))]
# List1 = List[H1], Keys1 = Keys[H1], RD1 = {List, Keys, List1, Keys1}
# Fr1 = {fr({x}, H, H1)}
{RP: List1(tmp) and List1(x) and IsEmpty( SetIntersect( Sp(List1(tmp)), Sp(List1(x))))}
\end{lstlisting}
The VC generated for this BB is $(\fr({x}, H, H1) \wedge T) \implies ( \varphi \wedge \Sp(\varphi) \subseteq A )$, where $\varphi = (\lst1(tmp) \wedge \lst1(x) \wedge 
(\Sp(\lst1(\mathit{tmp})) \cap \Sp(\lst1(\mathit{x})) = \phi))$.

To ensure tightness of the heap, we must characterize the heap after the function call. Inductively, we know the 
portion of the heap modified by the function call is characterized by the postcondition of RevPrepend(tmp, x). This motivates the function call rule. 

\begin{lstlisting}
{List(x) and List(y) and IsEmpty( SetIntersect( Sp(List(x)), Sp(List(y))))} 
assume( x != nil)
tmp := x.next
x.next := y
ret := RevPrepend(tmp, x)
# Let PreCall = (List1(tmp) and List1(x) and IsEmpty( SetIntersect( Sp(List1(tmp)), Sp(List1(x)))))
# H2[next] = lambda arg. ite(Not(IsMember(arg, Sp(PreCall))), H1(next)(arg), next2(arg))
# H2[key] = lambda arg. ite(Not(IsMember(arg, Sp(PreCall))), H1(key)(arg), key2(arg))
# List2 = List1[H2], Keys2 = Keys1[H2]
# Let PostCall = (List2(ret) and (Keys2(ret) = SetUnion( Old(Keys1(tmp)), Old(Keys1(x)))) )
# T2 = T and PostCall
# A2 = (A\Sp(PreCall)) U Sp(PostCall)
# Fr2 = Fr1 U {fr(Sp(PreCall), H1, H2)}
# RD2 = RD1 U {List2, Keys2}
{List2(ret) and (Keys2(ret) = SetUnion( Old(Keys(x)), Old(Keys(y))))}
\end{lstlisting}
The VC generated for this BB is $(\fr({x}, H, H1) \wedge \fr(\Sp(\text{PreCall}), H1, H2) \wedge T2) \implies \varphi \wedge \Sp(\varphi) = A2$, where $\varphi = \lst2(\ret) \wedge (\Keys2(\ret) = (\Old(\Keys(x)) \cup \Old(\Keys(y))))$.

\medskip

We now describe a high-level proof of the validity of the VC. Recall that the contract of RevPrepend tells us if we start with two disjoint lists and the program is executed - (1) the return is a list whose keys are the union of the keys of the original two lists, and (2) the heap support of the postcondition gives the 'modified' heap of the program.

From the validity of the previous BB, the precondition of $RevPrepend(\tmp, x)$ holds for the function call. Thus, we can conclude that after the function call - (1) PostCall = $\lst2(\ret) \wedge (\Keys2(\ret) = \Keys1(x) \cup \Keys1(\tmp))$ holds, and (2) $\Sp(\text{PostCall})$ is the modified heap of the call.

We now prove $\varphi$. It suffices to show $\Keys1(x) \cup \Keys1(\tmp) = \Keys(x) \cup \Keys(y)$. To do so, observe that only $x$ is modified before the function call. Thus, by the frame rule Fr1, $\Keys1(y) = \Keys(y)$ and $\Keys(\tmp) = \Keys1(\nxt(x))$. The result follows. 

The proof of $\Sp(\varphi) = A2$ is similar. By Fr1, it follows have $\Sp(\lst1(y)) = \Sp(\lst(y))$ and $\Sp(\lst1(\tmp)) = \Sp(\lst(\tmp))$. It follows that $\Sp(\text{PreCall}) = A$. Thus, $A2 = \Sp(\text{PostCall}) = \Sp(\varphi)$.

\end{document}